\newcommand{\bern}{\mathsf{Bernoulli}}
\newcommand{\ex}[2]{{\ifx&#1& \mathbb{E} \else
\underset{#1}{\mathbb{E}} \fi \left[#2\right]}}
\newcommand{\pr}[2]{{\ifx&#1& \mathbb{P} \else
\underset{#1}{\mathbb{P}} \fi \left[#2\right]}}
\newcommand{\var}[2]{{\ifx&#1& \mathsf{Var} \else
\underset{#1}{\mathsf{Var}} \fi \left[#2\right]}}
\newcommand{\dr}[3]{\mathrm{D}_{#1}\left(#2\middle\|#3\right)}
\newcommand{\nope}[1]{}
\newcommand{\dx}[1][x]{\mathrm{d}#1}
\newcommand{\R}{\mathbb{R}}
\newcommand{\Z}{\mathbb{Z}}
\newcommand{\eqdef}{\coloneqq}
\newcommand{\eps}{\varepsilon}
\newtheorem{lem}{Lemma}
\newtheorem{defn}[lem]{Definition}
\newtheorem{cor}[lem]{Corollary}
\newtheorem{prop}[lem]{Proposition}
\newtheorem{thm}[lem]{Theorem}
\newtheorem{fact}[lem]{Fact}
\newtheorem{clm}[lem]{Claim}
\newcommand{\discL}{\operatorname{Lap}_{\mathbb{Z}}} %
\newcommand{\discN}{\mathcal{N}_{\mathbb{Z}}}
\newcommand{\roundN}{\mathcal{N}_{\text{round}}}
\newcommand{\discparamN}[1]{\mathcal{N}_{#1\mathbb{Z}}}
\newcommand{\dgausss}[2]{{\discN\left(#1,#2\right)}}
\newcommand{\dgauss}[1]{\dgausss{0}{#1}}
\newcommand{\round}[1]{\ensuremath{\lfloor#1\rceil}}
\newcommand*{\citet}[1]{\AtNextCite{\AtEachCitekey{\defcounter{maxnames}{2}}}
\textcite{#1}}
\newcommand*{\citetall}[1]{\AtNextCite{\AtEachCitekey{\defcounter{maxnames}{999}}}
\textcite{#1}}
\newcommand*{\citep}[1]{\cite{#1}}
\title{The Discrete Gaussian for Differential Privacy}
\author{\href{http://www.cs.columbia.edu/~ccanonne/}{Cl\'ement L. Canonne}\thanks{University of Sydney. This work was completed while a Goldstine Postdoctoral Fellow at IBM Research -- Almaden.~\dotfill~\texttt{ccanonne@cs.columbia.edu}} \and \href{http://www.gautamkamath.com/}{Gautam Kamath}\thanks{University of Waterloo. Supported by a University of Waterloo startup grant.~\dotfill~\texttt{g@csail.mit.edu}} \and \href{http://www.thomas-steinke.net/}{Thomas Steinke}\thanks{Work done while at IBM Research -- Almaden. Now at Google.~\dotfill~\texttt{dgauss@thomas-steinke.net}}}
\date{\vspace{-35pt}}
\begin{document}
\maketitle
\footnotetext{Alphabetical authorship. Published at \href{https://papers.nips.cc/paper/2020/hash/b53b3a3d6ab90ce0268229151c9bde11-Abstract.html}{NeurIPS 2020}. Preprint: \url{https://arxiv.org/abs/2004.00010}}

\begin{abstract}
A key tool for building differentially private systems is adding Gaussian noise to the output of a function evaluated on a sensitive dataset. Unfortunately, using a continuous distribution presents several practical challenges.
First and foremost, finite computers cannot exactly represent samples from continuous distributions, and previous work has demonstrated that seemingly innocuous numerical errors can entirely destroy privacy. Moreover, when the underlying data is itself discrete (e.g., population counts), adding continuous noise makes the result less interpretable.

With these shortcomings in mind, we introduce and analyze the discrete Gaussian in the context of differential privacy. Specifically, we theoretically and experimentally show that adding discrete Gaussian noise provides essentially the same privacy and accuracy guarantees as the addition of continuous Gaussian noise. We also present an simple and efficient algorithm for exact sampling from this distribution. This demonstrates its applicability for privately answering counting queries, or more generally, low-sensitivity integer-valued queries.
\end{abstract}

\begin{small}
\tableofcontents
\end{small}

\section{Introduction}
Differential Privacy \cite{DworkMNS06} provides a rigorous standard for ensuring that the output of an algorithm does not leak the private details of individuals contained in its input. A standard technique for ensuring differential privacy is to evaluate a function on the input and then add a small amount of random noise to the result before releasing it. Specifically, it is common to add noise drawn from a Laplace or Gaussian distribution, which is scaled according to the \emph{sensitivity} of the function -- i.e., how much one person's data can change the function value. These are two of the most fundamental algorithms in differential privacy, which are used as subroutines in almost all differentially private systems. For example, differentially private algorithms for convex empirical risk minimization and deep learning are based on adding noise to gradients \cite{BassilyST14,AbadiCGMMTZ16}.%

However, the Laplace and Gaussian distributions are both continuous over the real numbers. As such, it is not possible to even represent a sample from them on a finite computer, much less produce such a sample. %
One might suppose that such issues are purely of theoretical interest, and that they can be resolved in practice by simply using standard floating-point arithmetic and representations. Unfortunately, this is not the case: Mironov~\cite{Mironov12} demonstrated that the na\"ive use of finite-precision approximations can result in catastrophic failures of privacy. In particular, by examining the low-order bits of the noisy output, the noiseless value can often be determined. Mironov demonstrated that this information allows the entire input dataset can be rapidly reconstructed, while only a negligible privacy loss is recorded by the system. 
Despite this demonstration, the flawed methods continue to appear in open source implementations of differentially private mechanisms. %
This demonstrates a real need for us to provide safe and practical solutions to enable the deployment of differentially private systems in real-world privacy-critical settings. In this work, we carefully consider how to securely implement these basic differentially private methods on finite computers that cannot faithfully represent real numbers. %

One solution to this problem involves instead sampling from a \emph{discrete} distribution that can be sampled on a finite computer. For many natural queries, the output of the function to be computed is naturally discrete -- e.g., counting how many records in a dataset satisfy some predicate -- and hence there is no loss in accuracy when adding discrete noise to it. Otherwise, the function value must be rounded before adding noise.

The discrete Laplace distribution (a.k.a.~two-sided geometric distribution) \cite{GhoshRS12,BalcerV17} is the natural discrete analogue of the continuous Laplace distribution. That is, instead of a probability density of $\frac{\varepsilon}{2} \cdot e^{-\varepsilon|x|}$ at $x \in \mathbb{R}$ we have a probability mass of $\frac{e^\varepsilon-1}{e^\varepsilon+1} \cdot e^{-\varepsilon|x|}$ at $x \in \mathbb{Z}$. Akin to its continuous counterpart, the discrete Laplace distribution provides pure $(\varepsilon,0)$-differential privacy when added to a sensitivity-1 value and has many other desirable properties. Notably, the discrete Laplace distribution is used in the \texttt{TopDown} algorithm being developed to protect the data collected in the 2020 US Census \cite{KuoCKHM18,Abowd18,CensusPrototype}.

The (continuous) Gaussian distribution has many advantages over the (continuous) Laplace distribution (and also some disadvantages), making it better suited for many applications. For example, the Gaussian distribution has lighter tails than the Laplace distribution. %
In settings with a high degree of composition -- i.e., answering many queries with independent noise, rather than a single query -- the scale (e.g., variance) of Gaussian noise is also lower than the scale of Laplace noise required for a comparable privacy guarantee. The privacy analysis under composition of Gaussian noise addition is typically simpler and sharper; in particular, these privacy guarantees can be cleanly expressed in terms of \emph{concentrated differential privacy} (CDP)~\cite{DworkR16,BunS16} and related variants of differential privacy \cite{Mironov17,BunDRS18,DongRS19}. (See Section \ref{sec:dlap} for further discussion.)

Thus, it is natural to wonder whether a discretization of the Gaussian distribution retains the privacy and utility properties of the continuous Gaussian distribution, as is the case for the Laplace distribution. 
In this paper, we show that this is indeed the case. %

\begin{defn}[Discrete Gaussian]
Let $\mu,\sigma\in\mathbb{R}$ with $\sigma > 0$. The discrete Gaussian distribution with location $\mu$ and scale $\sigma$ is denoted $\dgausss{\mu}{\sigma^2}$. It is a probability distribution supported on the integers and defined by 
\begin{equation}
\forall x \in \mathbb{Z}, ~~~~~ \pr{X \gets \dgausss{\mu}{\sigma^2}}{X=x}=\frac{e^{-(x-\mu)^2/2\sigma^2}}{\sum_{y \in \mathbb{Z}} e^{-(y-\mu)^2/2\sigma^2}}.
\end{equation}
\end{defn}
Note that we exclusively consider $\mu \in \mathbb{Z}$; in this case, the distribution is symmetric and centered at $\mu$. This is the natural discrete analogue of the continuous Gaussian (which has density $\frac{1}{\sqrt{2\pi\sigma^2}} \cdot e^{-(x-\mu)^2/2\sigma^2}$ at $x \in \mathbb{R}$), and it arises in lattice-based cryptography (in a multivariate form, which is believed to be hard to sample from)~\cite[etc.]{GentryPV08,Regev09,Peikert10,Stephens-Davidowitz17}.

\subsection{Results}

Our investigations focus on three aspects of the discrete Gaussian: \emph{privacy}, \emph{utility}, and \emph{sampling}. In summary, we demonstrate that the discrete Gaussian provides the same level of privacy and utility as the continuous Gaussian. We also show that it can be efficiently sampled on a finite computer, thus addressing the shortcomings of continuous distributions discussed earlier. Along the way, we both prove and empirically demonstrate a number of additional properties of the discrete Gaussian of interest and useful to those deploying it for privacy purposes (and even otherwise). We proceed to elaborate on our contributions and findings.

\paragraph{Privacy.} The discrete Gaussian enjoys privacy guarantees which are almost identical to those of the continuous Gaussian. More precisely, in Theorem~\ref{thm:priv}, we show that adding noise drawn from $\dgauss{1/\varepsilon^2}$ to an integer-valued sensitivity-1 query (e.g., a counting query) provides $\frac12 \varepsilon^2$-concentrated differential privacy. This is the same guarantee attained by adding a draw from $\mathcal{N}(0,1/\varepsilon^2)$. Furthermore, in Theorem~\ref{thm:approx-dp}, we provide tight bounds on the discrete Gaussian's approximate differential privacy guarantees. For large scales $\sigma$, the discrete and continuous Gaussian have virtually the same privacy guarantee, although for smaller $\sigma$, the effects of discretization result in one or the other having marginally stronger privacy (depending on the parameters). %
Our results on privacy are presented in Section~\ref{sec:privacy}.

\paragraph{Utility.} The discrete Gaussian attains the same or slightly better accuracy as the analogous continuous Gaussian. Specifically, Corollary~\ref{cor:bounds} shows that the variance of $\dgauss{\sigma^2}$ is at most $\sigma^2$, and that it also satisfies sub-Gaussian tail bounds comparable to $\mathcal{N}(0,\sigma^2)$. We show numerically that the discrete Gaussian is better than rounding the continuous Gaussian to an integral value. Our results on utility are provided in Section~\ref{sec:utility}.

\paragraph{Sampling.} We can practically sample a discrete Gaussian on a finite computer. We present a simple and efficient exact sampling procedure that only requires access to uniformly random bits and does not involve any real-arithmetic operations or non-trivial function evaluations (Algorithm~\ref{fig:dgauss2}). As there are previous methods (see, e.g., Karney's algorithm~\cite{Karney16}, which was an inspiration for our work, and the more recent work of Du, Fan, and Wei~\cite{DuFW20}), we do not consider this to be one of our primary contributions. Nonetheless, we include these results as we consider our methods to be simpler, and in order to make our paper self-contained; we also provide open source code implementing our algorithm \cite{DGaussGithub}. Our results on how to sample are provided in Section~\ref{sec:sampling}.

We provide a thorough comparison between the discrete Gaussian and the discrete Laplace distribution in Section~\ref{sec:dlap}. This includes statements of the privacy and utility guarantees for the discrete Laplace, and discussing its performance under composition in depth. 

On a technical note, while the takeaway of many of our conclusions is that the discrete and continuous Gaussian are qualitatively similar, we comment that such statements are non-trivial to prove, in particular relying upon methods such as the Poisson summation formula and Fourier analysis. For instance, even basic statements on the stability property of Gaussians under linear combinations do not hold for the discrete counterpart, with their approximate versions being highly involved to establish (see, e.g.,~\cite{AggarwalR16}).

\subsection{Related Work}
As originally observed and demonstrated by Mironov~\cite{Mironov12}, na\"ive implementations of the Laplace mechanism with floating-point arithmetic blatantly fail to ensure differential privacy, or any form of privacy at all. As a remedy, Mironov introduced the snapping mechanism, which serves as a safe replacement for the Laplace mechanism in the floating-point setting. The snapping mechanism performs rounding and truncation on top of the floating-point arithmetic. However, properly implementing and analyzing the snapping mechanism can be involved \cite{Covington19}, due to the idiosyncrasies of floating-point arithmetic. Furthermore, the snapping mechanism requires a compromise on privacy and accuracy, relative to what is theoretically achievable. Our methods avoid floating-point arithmetic entirely and do not compromise the privacy or accuracy guarantees.

\citetall{GazeauMP16} gave an alternate and more general analysis of Mironov's approach of rounding the output of an inexact sampling procedure.

\citetall{GhoshRS12} proposed and analyzed a discrete version of the Laplace mechanism, which is also private on finite computers. However, this has heavier tails than the Gaussian and requires the addition of more noise (i.e., higher variance) than the Gaussian in settings with a high degree of composition (i.e., many queries). We provide a detailed comparison in Section \ref{sec:dlap}. The US Census Bureau is planning to to protect the data collected in the 2020 US Census using discrete Laplace noise \cite{KuoCKHM18,Abowd18}. However, their prototype implementation \cite{CensusPrototype} does not use an exact sampling procedure.

Perhaps the closest distribution to the discrete Gaussian that has been considered for differential privacy is the Binomial distribution.\citetall{DworkKMMN06} gave a differential privacy analysis of Binomial noise addition, which was improved by\citetall{AgarwalSYKM18}.\footnote{\citetall{DinurN03} also analyzed the privacy properties of Binomial noise addition, but this predates the definition of differential privacy.} The advantage of the Binomial is that it is amenable to distributed generation -- i.e., a sum of Binomials with the same bias parameter is also Binomial. The disadvantage of Binomial noise addition, however, is that its privacy analysis is quite involved. One inherent reason for this is that the analysis must compare the Binomial to a shifted Binomial, and these distributions have different supports. If the observed output $y$ is in the support of $M(x)$ but not of $M(x')$ (i.e., $\pr{}{M(x')=y}=0$), then the privacy loss is infinite; this failure probability must be accounted for by the $\delta$ parameter of approximate $(\varepsilon,\delta)$-differential privacy. In other words, the Binomial mechanism is inherently an \emph{approximate} differential privacy one (in comparison to the stronger concentrated differential privacy attainable by the discrete Gaussian). For large values of $n$, $\mathsf{Binomial}(n,p)$ provides guarantees comparable to $\mathcal{N}(0,np(1-p))$ or $\dgauss{np(1-p)}$. This matches the intuition, since $\mathsf{Binomial}(n,p)$ converges to a Gaussian as $n \to \infty$ by the central limit theorem. %

A concurrent and independent work \cite{GoogleDGauss} analyzed what is, effectively, a truncated version of the discrete Gaussian. That work provides an almost identical sampling procedure, but a very different privacy analysis. In particular, it shows that the truncated discrete Gaussian is close to a Binomial distribution, which is, in turn, close to a rounded Gaussian. And the privacy analysis is based on this closeness. Our analysis is more direct.

Going beyond noise addition, it has been shown that private histograms~\cite{BalcerV17} and selection (i.e., the exponential mechanism)~\cite{Ilvento19} can be implemented on finite computers. (Both of these results are for pure $(\varepsilon,0)$-differential privacy.)
We remark that our noise addition methods can also form the basis of an implementation of these methods. For example, instead of the exponential mechanism, we can implement the ``Report Noisy Max'' algorithm \cite{DworkR14}, which uses Laplace or Exponential \cite{McKennaS20} or Gumbel \cite{Gumbel} noise to perform the same task of selection.

To the best of our knowledge, there has been no work on implementing an analogue of Gaussian noise addition on finite computers. An obvious approach would be to round the output of some inexact sampling procedure. Properly analyzing this may be difficult, due to the fact that the underlying inexact Gaussian sampling procedure will be more complex than the equivalent for Laplace. Furthermore, in Section \ref{sec:utility}, we show empirically that our approach yields better utility than rounding.

In Proposition \ref{prop:cdp2adp} and Corollary \ref{cor:cdp2adp}, we give a conversion from R\'enyi and concentrated differential privacy to approximate differential privacy.\citetall{AsoodehLCKS20} provide an optimal conversion from R\'enyi differential privacy to approximate differential privacy as well as some approximations that subsume ours. Their optimal result is, by definition, tighter than ours (but only slightly) at the expense of being more complicated and less numerically stable. See Section \ref{sec:convert}.

Another of our secondary contributions is a simple and efficient method for sampling from a discrete Gaussian or discrete Laplace; see Section \ref{sec:sampling}.\citetall{Karney16, DuFW20} also provide such algorithms. We consider our method to be simpler. In particular, our method keeps all arithmetic within the integers or the rational numbers, where exact arithmetic is possible. In contrast, Karney's method still involves representing real numbers, but this can be carefully implemented on a finite computer using a flexible level of precision and lazy evaluation -- that is, although a uniform sample from $[0,1]$ requires an infinite number of bits to represent, only a finite (but a priori unbounded) number of these bits are actually needed and these can be sampled when needed. There are also methods for \emph{approximate} sampling \cite{ZhaoSS19}, but our interest is in exact sampling.

Finally, we remark that (a multivariate version of) the discrete Gaussian has been extensively studied in the context of lattice-based cryptography~\cite[etc.]{GentryPV08,Regev09,Peikert10,Stephens-Davidowitz17}.

\section{Privacy}
\label{sec:privacy}
For completeness, we state the definitions of differential privacy~\cite{DworkMNS06,DworkKMMN06} and concentrated differential privacy~\cite{DworkR16,BunS16}.
\begin{defn}[Pure/Approximate Differential Privacy]
A randomized algorithm $M\colon \mathcal{X}^n \to \mathcal{Y}$ satisfies \emph{$(\varepsilon,\delta)$-differential privacy} if, for all $x,x'\in\mathcal{X}^n$ differing on a single element and all events $E \subset \mathcal{Y}$, we have $\pr{}{M(x) \in E} \le e^\varepsilon \cdot \pr{}{M(x') \in E} + \delta.$
\end{defn}
The special case of $(\varepsilon,0)$-differential privacy is referred to as \emph{pure} or \emph{pointwise} $\varepsilon$-differential privacy, whereas, for $\delta>0$, $(\varepsilon,\delta)$-differential privacy is referred to as \emph{approximate differential privacy}.

\begin{defn}[Concentrated Differential Privacy]\label{defn:cdp}
A randomized algorithm $M\colon \mathcal{X}^n \to \mathcal{Y}$ satisfies \emph{$\frac12 \varepsilon^2$-concentrated differential privacy} if, for all $x,x' \in \mathcal{X}^n$ differing on a single element and all $\alpha \in (1,\infty)$, we have $\dr{\alpha}{M(x)}{M(x')} \le \frac12 \varepsilon^2  \alpha,$ where $\dr{\alpha}{P}{Q} = \frac{1}{\alpha-1} \log\left(\sum_{y \in \mathcal{Y}} P(y)^\alpha Q(y)^{1-\alpha}\right)$ is the R\'enyi divergence of order $\alpha$ of the distribution $P$ from the distribution $Q$.\footnote{We take $\log$ to be the natural logarithm -- i.e., base $e \approx 2.718$.}\footnote{We use the parameterization $\frac12\varepsilon^2$-concentrated differential privacy instead of $\rho$-concentrated differential privacy as in the original paper. This is because $\varepsilon$ is a more familiar privacy parameter and, by setting $\rho = \frac12\varepsilon^2$, we put it on the same ``scale'' as pure or approximate differential privacy. We revert to $\rho$ where it might otherwise be confusing, e.g., in Corollary \ref{cor:cdp2adp} where we simultaneously discuss concentrated differential privacy and approximate differential privacy.}
\end{defn}

Note that $(\varepsilon,0)$-differential privacy implies $\frac12 \varepsilon^2$-concentrated differential privacy and $\frac12 \varepsilon^2$-concentrated differential privacy implies $\left(\frac12 \varepsilon^2 + \varepsilon \cdot \sqrt{2\log(1/\delta)},\delta\right)$-differential privacy for all $\delta>0$ \cite{BunS16}.

\subsection{Concentrated Differential Privacy}
In this section, we prove our main result on concentrated differential privacy (CDP), showing that the discrete Gaussian provides the same CDP guarantees as the continuous one.
\begin{thm}[Discrete Gaussian Satisfies Concentrated Differential Privacy]\label{thm:priv}
Let $\Delta,\varepsilon>0$.
Let $q\colon \mathcal{X}^n \to \mathbb{Z}$ satisfy $|q(x)-q(x')|\le\Delta$ for all $x,x'\in\mathcal{X}^n$ differing on a single entry. Define a randomized algorithm $M\colon \mathcal{X}^n \to \mathbb{Z}$ by $M(x)=q(x)+Y$ where $Y \gets \dgauss{\Delta^2/\varepsilon^2}$. Then $M$ satisfies $\frac12 \varepsilon^2$-concentrated differential privacy.
\end{thm}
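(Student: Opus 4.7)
The plan is to compute the Rényi divergence $\dr{\alpha}{M(x)}{M(x')}$ directly from the discrete Gaussian PMF and show it is at most $\frac{1}{2}\varepsilon^2\alpha$, which is exactly the CDP condition. Writing $k=q(x)-q(x')\in\mathbb{Z}$ with $|k|\le\Delta$ and $\sigma^2=\Delta^2/\varepsilon^2$, both $M(x)$ and $M(x')$ are discrete Gaussians on $\mathbb{Z}$ with the same scale, and crucially with the same normalizing constant $Z\eqdef\sum_{y\in\mathbb{Z}}e^{-y^2/2\sigma^2}$ because the means $q(x),q(x')$ are integers (so the normalizer is just a translate-invariant lattice sum).

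Next I would do the algebraic simplification that makes the continuous case work. Substituting the PMFs into $\sum_y p_{q(x)}(y)^\alpha p_{q(x')}(y)^{1-\alpha}$, the two powers of $Z$ combine to give a single $Z$ in the denominator, and the quadratic in the exponent, $-\alpha(y-q(x))^2-(1-\alpha)(y-q(x'))^2$, completes the square to
\begin{equation*}
-(y-c)^2 + \alpha(\alpha-1)k^2,
\end{equation*}
where $c\eqdef q(x')+\alpha k$ (this is just the standard Gaussian identity; I would verify it with one line of expansion). Therefore
\begin{equation*}
\dr{\alpha}{M(x)}{M(x')} \;=\; \frac{\alpha k^2}{2\sigma^2} \;+\; \frac{1}{\alpha-1}\log\frac{\sum_{y\in\mathbb{Z}} e^{-(y-c)^2/2\sigma^2}}{Z}.
\end{equation*}

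The main obstacle is the second term: because $\alpha\in(1,\infty)$ is generally not an integer, the shift $c$ is not an integer, so the ``shifted'' theta-like sum in the numerator is not equal to $Z$ as it would be in the continuous case. The key lemma I need is that the lattice sum $\sum_{y\in\mathbb{Z}} e^{-(y-c)^2/2\sigma^2}$, as a function of $c\in\mathbb{R}$, is maximized at $c\in\mathbb{Z}$. This is a classical consequence of the Poisson summation formula: applied to $f(x)=e^{-x^2/2\sigma^2}$ whose Fourier transform is $\sigma\sqrt{2\pi}\,e^{-2\pi^2\sigma^2 t^2}\ge 0$, one gets
\begin{equation*}
\sum_{y\in\mathbb{Z}} e^{-(y-c)^2/2\sigma^2} \;=\; \sigma\sqrt{2\pi}\sum_{j\in\mathbb{Z}} e^{-2\pi^2\sigma^2 j^2} e^{2\pi i j c},
\end{equation*}
and since the coefficients are nonnegative and the $j=0$ term already equals the $c\in\mathbb{Z}$ value, the real part is bounded above by the integer case. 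I would state and prove (or cite) this as a separate Poisson-summation lemma, then use $Z_c\le Z$ to kill the second term.

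Once that is in hand, the $\log$ term is $\le 0$, so $\dr{\alpha}{M(x)}{M(x')}\le \frac{\alpha k^2}{2\sigma^2}\le \frac{\alpha\Delta^2}{2\sigma^2}=\frac{1}{2}\varepsilon^2\alpha$, uniformly in $\alpha>1$. This is precisely $\frac{1}{2}\varepsilon^2$-concentrated differential privacy per Definition~\ref{defn:cdp}, completing the proof. Note that this bound is exactly the one obtained for the continuous Gaussian, and the only ``discrete'' content of the argument is the Poisson-summation inequality that handles the non-integer mean $c$.
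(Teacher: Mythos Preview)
Your proposal is correct and follows essentially the same route as the paper: the paper reduces Theorem~\ref{thm:priv} to Proposition~\ref{prop:renyi}, whose proof is exactly your completing-the-square computation yielding the factor $e^{\alpha(\alpha-1)k^2/2\sigma^2}$ times the ratio $\sum_{y}e^{-(y-\alpha\mu)^2/2\sigma^2}/Z$, and then bounds that ratio by $1$ via Lemma~\ref{lem:poisson}, which is precisely your Poisson-summation inequality showing the shifted lattice Gaussian sum is maximized at integer shifts. The only cosmetic difference is that the paper isolates the Poisson-summation step as a standalone lemma and applies the triangle inequality to the Fourier series (rather than speaking of the real part), but the mathematical content is identical.
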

\noindent Theorem~\ref{thm:priv} follows from Proposition~\ref{prop:renyi} and Definition~\ref{defn:cdp} .
\begin{prop}\label{prop:renyi}
Let $\sigma,\alpha \in \mathbb{R}$ with $\sigma>0$ and $\alpha \ge 1$. Let $\mu, \nu \in \mathbb{Z}$. Then 
\begin{equation}
\dr{\alpha}{\dgausss{\mu}{\sigma^2}}{\dgausss{\nu}{\sigma^2}}\le \frac{(\mu-\nu)^2}{2\sigma^2}\cdot \alpha.
\end{equation}
Furthermore, this inequality is an equality whenever $\alpha \cdot (\mu - \nu)$ is an integer.
\end{prop}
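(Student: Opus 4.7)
The plan is to compute the Rényi divergence explicitly using the definition, complete the square to isolate the $(\mu-\nu)^2$ term, and then handle the residual factor via Poisson summation. Write $Z(t) \eqdef \sum_{x\in\mathbb{Z}} e^{-(x-t)^2/2\sigma^2}$ for $t\in\mathbb{R}$, so that for any $\mu \in \mathbb{Z}$ the normalizing constant of $\dgausss{\mu}{\sigma^2}$ equals $Z(\mu) = Z(0)$ by translation.

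First I would expand
\[
\alpha(x-\mu)^2 + (1-\alpha)(x-\nu)^2 = (x-c)^2 + \alpha(1-\alpha)(\mu-\nu)^2,\qquad c \eqdef \alpha\mu + (1-\alpha)\nu,
\]
which is a routine completion of the square. Summing over $x\in\mathbb{Z}$ then gives
\[
\sum_{x\in\mathbb{Z}} \dgausss{\mu}{\sigma^2}(x)^\alpha\, \dgausss{\nu}{\sigma^2}(x)^{1-\alpha} \;=\; \frac{Z(c)}{Z(0)}\cdot e^{-\alpha(1-\alpha)(\mu-\nu)^2/2\sigma^2}.
\]
Taking logarithms and dividing by $\alpha-1$, and using $-\alpha(1-\alpha)=\alpha(\alpha-1)$, yields the exact identity
\[
\dr{\alpha}{\dgausss{\mu}{\sigma^2}}{\dgausss{\nu}{\sigma^2}} \;=\; \frac{\alpha(\mu-\nu)^2}{2\sigma^2} \;+\; \frac{1}{\alpha-1}\log\frac{Z(c)}{Z(0)}.
\]
So the claimed upper bound will follow once I prove that the correction term is nonpositive, and equality will hold exactly when $Z(c)=Z(0)$.

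The main obstacle, and the one non-trivial step, is showing $Z(c) \le Z(0)$ for every $c\in\mathbb{R}$ with equality iff $c\in\mathbb{Z}$. For this I would invoke the Poisson summation formula applied to $f(x)=e^{-(x-c)^2/2\sigma^2}$, whose Fourier transform is $\hat f(k) = \sigma\sqrt{2\pi}\, e^{-2\pi^2\sigma^2 k^2} e^{-2\pi i k c}$. Pairing $k$ with $-k$ and using the evenness of the Gaussian factor gives
\[
Z(c) \;=\; \sigma\sqrt{2\pi}\,\sum_{k\in\mathbb{Z}} e^{-2\pi^2\sigma^2 k^2}\cos(2\pi k c) \;\le\; \sigma\sqrt{2\pi}\,\sum_{k\in\mathbb{Z}} e^{-2\pi^2\sigma^2 k^2} \;=\; Z(0),
\]
since $\cos(2\pi k c)\le 1$. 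Strict inequality fails only when $\cos(2\pi k c) = 1$ for all $k\in\mathbb{Z}$, i.e.\ $c\in\mathbb{Z}$. Because $\alpha-1 \ge 0$, the sign of the correction term is that of $\log(Z(c)/Z(0)) \le 0$, giving the stated inequality.

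For the equality clause, observe that $c = \nu + \alpha(\mu-\nu)$ and $\nu\in\mathbb{Z}$, so $c\in\mathbb{Z}$ precisely when $\alpha(\mu-\nu)\in\mathbb{Z}$; in that case the Poisson-summation bound is saturated and the earlier identity becomes an equality. The only care needed at the boundary $\alpha=1$ is to interpret $\dr{1}{\cdot}{\cdot}$ as the KL divergence (the limit as $\alpha \to 1^+$), where both sides of the identity are continuous in $\alpha$, so the statement extends by continuity.
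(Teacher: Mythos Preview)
Your proof is correct and follows essentially the same approach as the paper: complete the square to isolate the $\alpha(\mu-\nu)^2/2\sigma^2$ term, then bound the residual ratio $Z(c)/Z(0)\le 1$ via the Poisson summation formula (which the paper states separately as Lemma~\ref{lem:poisson}). Your use of the cosine form in place of the paper's triangle-inequality argument, and your explicit handling of $\alpha=1$ by continuity, are minor presentational differences only.
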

It is worth noting that the continuous Gaussian satisfies the same concentrated differential privacy bound, with equality for all R\'enyi divergence parameters: $\dr{\alpha}{\mathcal{N}(\mu,\sigma^2)}{\mathcal{N}(\nu,\sigma^2)} = \frac{(\mu-\nu)^2}{2\sigma^2} \cdot \alpha$ for all $\alpha,\mu,\nu,\sigma \in \mathbb{R}$ with $\sigma>0$. Thus we see that the privacy guarantee of the discrete Gaussian is essentially identical to that of the continuous Gaussian with the same parameters. 
To prove Proposition \ref{prop:renyi}, we use the following well-known (e.g.,~\cite{Regev09}) technical lemma. %

\begin{lem}\label{lem:poisson}
Let $\mu, \sigma \in \mathbb{R}$ with $\sigma>0$. Then 
\begin{equation}
\sum_{x \in \mathbb{Z}} e^{-(x-\mu)^2/2\sigma^2} \le \sum_{x \in \mathbb{Z}} e^{-x^2/2\sigma^2}.
\end{equation}
\end{lem}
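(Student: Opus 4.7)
The plan is to prove this via the Poisson summation formula applied to a shifted Gaussian. Recall that for a Schwartz function $f$ with Fourier transform $\hat f(k) = \int_{\mathbb{R}} f(x)\, e^{-2\pi i k x}\,\dx$, Poisson summation gives $\sum_{n\in\mathbb{Z}} f(n) = \sum_{k \in\mathbb{Z}} \hat f(k)$. I would apply this to $f_\mu(x) = e^{-(x-\mu)^2/2\sigma^2}$, which is Schwartz so there are no convergence issues.

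First, I would record the Fourier transform. For the unshifted Gaussian $f_0(x) = e^{-x^2/2\sigma^2}$, a standard calculation (completing the square in the integral) gives $\hat f_0(k) = \sigma\sqrt{2\pi}\, e^{-2\pi^2 \sigma^2 k^2}$. The shift-modulation rule then yields $\hat f_\mu(k) = e^{-2\pi i k \mu}\, \hat f_0(k)$. Plugging into Poisson summation,
\begin{equation}
\sum_{x \in \mathbb{Z}} e^{-(x-\mu)^2/2\sigma^2} = \sigma\sqrt{2\pi}\sum_{k \in \mathbb{Z}} e^{-2\pi^2 \sigma^2 k^2}\, e^{-2\pi i k \mu}.
\end{equation}

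Next, since the left-hand side is real, I would take the real part on the right, pairing the $k$ and $-k$ terms, to rewrite this as
\begin{equation}
\sum_{x \in \mathbb{Z}} e^{-(x-\mu)^2/2\sigma^2} = \sigma\sqrt{2\pi}\sum_{k \in \mathbb{Z}} e^{-2\pi^2 \sigma^2 k^2}\cos(2\pi k \mu).
\end{equation}
Because $e^{-2\pi^2 \sigma^2 k^2} > 0$ and $\cos(2\pi k \mu) \le 1$ for every $k \in \mathbb{Z}$, the sum is bounded term by term by the corresponding sum without the cosine factor, which by the same Poisson identity applied at $\mu=0$ equals $\sum_{x \in \mathbb{Z}} e^{-x^2/2\sigma^2}$. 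This chain of (in)equalities gives the claimed bound.

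The only real subtleties are bookkeeping ones: fixing a Fourier transform convention consistent with the stated Poisson summation formula, verifying the Gaussian Fourier transform, and noting that we are justified in interpreting the right-hand series as a real quantity. I would not expect any genuine obstacle, since all sums converge absolutely thanks to the rapid decay of $e^{-2\pi^2 \sigma^2 k^2}$. The whole argument is essentially the classical fact that the theta function $\theta(\mu) = \sum_{x \in \mathbb{Z}} e^{-(x-\mu)^2/2\sigma^2}$ is maximized over $\mu \in \mathbb{R}$ at $\mu \in \mathbb{Z}$, which Poisson summation makes transparent.
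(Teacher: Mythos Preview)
Your proposal is correct and is essentially the same argument as the paper's proof: both apply the Poisson summation formula to the Gaussian, use the positivity of the transformed weights $e^{-2\pi^2\sigma^2 k^2}$, and then bound the phase factor. The only cosmetic difference is that you take the real part and use $\cos(2\pi k\mu)\le 1$, whereas the paper applies the triangle inequality directly to the complex sum; these give the same term-by-term bound.
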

\begin{proof}
Let $f\colon \R\to\R$ be defined by $f(x) = e^{-x^2/2\sigma^2}$. Define its Fourier transform $\hat{f}\colon \mathbb{R} \to \mathbb{R}$ by 
\[
    \hat{f}(y) = \int_\mathbb{R} f(x) e^{-2\pi\sqrt{-1}xy} \mathrm{d}x = \sqrt{2\pi\sigma^2} \cdot e^{-2\pi^2\sigma^2y^2}.
\]
By the Poisson summation formula \cite{poisson,poisson2}, for every $t\in\R$, we have 
\[
        \sum_{x\in\mathbb{Z}} f(x+t) = \sum_{y\in\mathbb{Z}} \hat{f}(y) \cdot e^{2\pi \sqrt{-1} y t}.
\]
(This is the Fourier series representation of the $1$-periodic function $g : \mathbb{R} \to \mathbb{R}$ given by $g(t) = \sum_{x\in\mathbb{Z}} e^{-(x+t)^2/2\sigma^2}$.)
In particular, $f(x)>0$ and $\hat{f}(x)>0$ for all $x \in \mathbb{R}$. From this and the triangle inequality, we get
\begin{align*}
    \sum_{x \in \mathbb{Z}} e^{-(x-\mu)/2\sigma^2} &=    \sum_{x\in\mathbb{Z}} f(x-\mu) 
    = \left| \sum_{x\in\mathbb{Z}} f(x-\mu) \right| 
    = \left| \sum_{y\in\mathbb{Z}} \hat{f}(y)e^{-2\pi \sqrt{-1} y \mu}\right| \\
    &\leq \sum_{y\in\mathbb{Z}} \left| \hat{f}(y)  e^{-2\pi \sqrt{-1} y \mu}\right|
    = \sum_{y\in\mathbb{Z}} \hat{f}(y) 
    = \sum_{x\in\mathbb{Z}} f(x) = \sum_{x \in \mathbb{Z}} e^{-x^2/2\sigma^2},
\end{align*}
proving the lemma.
\end{proof}

\begin{proof}[Proof of Proposition \ref{prop:renyi}.] Without loss of generality, $\nu=0$ and $\alpha>1$. Recalling that $\mu\in\mathbb{Z}$, we have
\begin{align*}
    e^{(\alpha-1)\dr{\alpha}{\dgausss{\mu}{\sigma^2}}{\dgausss{0}{\sigma^2}}} &= \sum_{x \in \mathbb{Z}} \pr{X \gets \dgausss{\mu}{\sigma^2}}{X=x}^\alpha \cdot \pr{X \gets \dgausss{0}{\sigma^2}}{X=x}^{1-\alpha}\\
    &= \sum_{x \in \mathbb{Z}} \left(\frac{e^{-(x-\mu)^2/2\sigma^2}}{\sum_{y \in \mathbb{Z}} e^{-(y-\mu)^2/2\sigma^2}}\right)^\alpha \cdot \left(\frac{e^{-x^2/2\sigma^2}}{\sum_{y \in \mathbb{Z}} e^{-y^2/2\sigma^2}}\right)^{1-\alpha}\\
    &= \frac{\sum_{x \in \mathbb{Z}}\exp\left(\frac{-x^2 +2\alpha\mu x -\alpha \mu^2}{2\sigma^2}\right)}{\sum_{y \in \mathbb{Z}} e^{-y^2/2\sigma^2}}\\
    &=  e^{\alpha(\alpha-1)\mu^2/2\sigma^2} \cdot \frac{\sum_{x \in \mathbb{Z}}e^{{-(x-\alpha\mu)^2}/{2\sigma^2}}}{\sum_{y \in \mathbb{Z}} e^{-y^2/2\sigma^2}}\\
    &\le e^{\alpha(\alpha-1)\mu^2/2\sigma^2},
\end{align*}
where the final inequality follows from Lemma \ref{lem:poisson}. The inequality is an equality when $\alpha\mu\in\mathbb{Z}$.
\end{proof}

We remark that, like its continuous counterpart, the discrete Gaussian can also be analysed in the setting where the scale parameter $\sigma^2$ is data dependent \cite{BunDRS18}. This arises in the application of smooth sensitivity \cite{NissimRS07,BunS19}.

\subsection{Approximate Differential Privacy}
In this section, we prove our main result on approximate differential privacy; namely, a tight bound on the privacy parameters achieved by the discrete Gaussian.
\begin{thm}[Discrete Gaussian Satisfies Approximate Differential Privacy]\label{thm:approx-dp}
Let $\Delta,\sigma,\varepsilon>0$. Let $q\colon \mathcal{X}^n \to \mathbb{Z}$ satisfy $|q(x)-q(x')|\le\Delta$ for all $x,x'\in\mathcal{X}^n$ differing on a single entry. Define a randomized algorithm $M\colon \mathcal{X}^n \to \mathbb{Z}$ by $M(x)=q(x)+Y$ where $Y \gets \dgauss{\sigma^2}$. Then $M$ satisfies $(\varepsilon,\delta)$-differential privacy for 
\begin{equation}
\delta = \pr{Y \gets \dgauss{\sigma^2}}{Y > \frac{\varepsilon \sigma^2}{\Delta} - \frac{\Delta}{2}} - e^\varepsilon \cdot \pr{Y \gets \dgauss{\sigma^2}}{Y > \frac{\varepsilon \sigma^2}{\Delta} + \frac{\Delta}{2}}.
\end{equation}
Furthermore, this is the smallest possible value of $\delta$ for which this is true.
\end{thm}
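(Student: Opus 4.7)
The plan is to derive the tight $\delta$ via the standard privacy-loss-random-variable recipe: reduce to the worst-case neighboring pair, identify the optimal rejection event using the monotonicity of the log-likelihood ratio, and then evaluate the two probabilities that appear.

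I would first reduce to the canonical pair with $q(x)=0$ and $q(x')=-\Delta$. Since $M(x)=q(x)+Y$ is a translation of $Y\gets\dgauss{\sigma^2}$ (and since $\mu\in\mathbb{Z}$ the normalizer of $\dgausss{\mu}{\sigma^2}$ is independent of $\mu$), this reduces the approximate-DP problem to bounding $P(E)-e^\varepsilon Q(E)$ over $E\subseteq\mathbb{Z}$, where $P=\dgauss{\sigma^2}$ and $Q=\dgausss{-\Delta}{\sigma^2}$. The fact that taking the extremal shift is genuinely the worst case uses a monotonicity-in-shift statement, which I would justify from the monotone-likelihood-ratio property of the family $\{\dgausss{\mu}{\sigma^2}\}_\mu$ (the log pmf is quadratic in $y$, so the log ratio of two translates is linear in $y$). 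Strictly speaking, this reduction pins down the tight $\delta$ when $\Delta\in\mathbb{Z}$ can actually be realized by some neighboring pair; otherwise the formula remains a valid upper bound. This reduction step is where I expect most of the care to be required; every subsequent computation is routine algebra.

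Next I would use the standard fact that the smallest $\delta$ satisfying $P(E)\le e^\varepsilon Q(E)+\delta$ for every $E\subseteq\mathbb{Z}$ equals $\sum_{y\in\mathbb{Z}}\max(0,P(y)-e^\varepsilon Q(y))$ and is attained by $E^\star=\{y: P(y)>e^\varepsilon Q(y)\}$. Computing
\[
\log\frac{P(y)}{Q(y)}\;=\;\frac{(y+\Delta)^2-y^2}{2\sigma^2}\;=\;\frac{\Delta\, y}{\sigma^2}+\frac{\Delta^2}{2\sigma^2},
\]
which is strictly increasing in $y$, gives the threshold description $E^\star=\{y\in\mathbb{Z}: y>\varepsilon\sigma^2/\Delta-\Delta/2\}$.

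Finally I would evaluate. The first term $P(E^\star)$ is exactly $\pr{Y\gets\dgauss{\sigma^2}}{Y>\varepsilon\sigma^2/\Delta-\Delta/2}$. For the second term, noting that a draw from $Q$ has the same law as $Y'-\Delta$ for $Y'\gets\dgauss{\sigma^2}$, a change of variable gives $Q(E^\star)=\pr{Y'\gets\dgauss{\sigma^2}}{Y'>\varepsilon\sigma^2/\Delta+\Delta/2}$. Combining the two yields the stated formula, and plugging $E=E^\star$ into the definition of $(\varepsilon,\delta)$-DP certifies tightness.
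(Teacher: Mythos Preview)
Your proposal is correct and follows essentially the same approach as the paper: compute the log-likelihood ratio, identify the optimal rejection set as a threshold event, and evaluate the two tail probabilities. The paper packages the argument through the privacy-loss random variable machinery (its Lemma~\ref{lem:pled}) rather than your direct $\sup_E\bigl(P(E)-e^\varepsilon Q(E)\bigr)$ formulation, but these are the same content; your observation that tightness of the formula requires the extremal shift $\Delta$ to actually be realized by some neighboring pair is a point the paper's proof leaves implicit.
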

\noindent This privacy guarantee matches that of the continuous Gaussian: If we replace all occurrences of the discrete Gaussian with the continuous Gaussian above, then the same result holds \cite[Thm.~8]{BalleW18}. %
Empirically, these guarantees are very close.

We also provide some analytic upper bounds (proofs can be found at the end of this section): First, for $\Delta=1$ we have $\delta \le e^{-\lfloor \varepsilon \sigma^2 \rceil^2/2\sigma^2}/\sqrt{2\pi\sigma^2}$, where $\lfloor \cdot \rceil$ denotes rounding to the nearest integer. %
Furthermore, if $\eps > \frac{\Delta^2}{2\sigma^2}$ and $\frac{\eps\sigma^2}{\Delta}\pm \frac{\Delta}{2}\notin \mathbb{N}$, then %
\begin{equation}
\delta \le \pr{X \gets \mathcal{N}(0,\sigma^2)}{ X > \left\lfloor \frac{\eps\sigma^2}{\Delta}-\frac{\Delta}{2} \right\rfloor }-\left(1-\frac{1}{\sqrt{2\pi\sigma^2}+1}\right)e^\eps\pr{X \gets \mathcal{N}(0,\sigma^2)}{X > \left\lfloor \frac{\eps\sigma^2}{\Delta}+\frac{\Delta}{2} \right\rfloor }.\label{eq:analytic_ed}
\end{equation}
\noindent In Figure \ref{fig:epsdelta}, we empirically compare the optimal $\delta$ (given by Theorem~\ref{thm:approx-dp}) to  the bound attained by the corresponding continuous Gaussian, as well as this analytic upper bound \eqref{eq:analytic_ed}, the standard upper bound entailed by concentrated differential privacy, and an improved upper bound via concentrated differential privacy (Corollary \ref{cor:cdp2adp}). We see that the upper bounds are reasonably tight. The discrete and continuous Gaussian attain almost identical guarantees for large $\sigma$, but the discretization creates a small difference that becomes apparent for small $\sigma$.%
\begin{figure}[H]
    \centering
    \begin{minipage}{0.5\textwidth}
        \includegraphics[width=\textwidth]{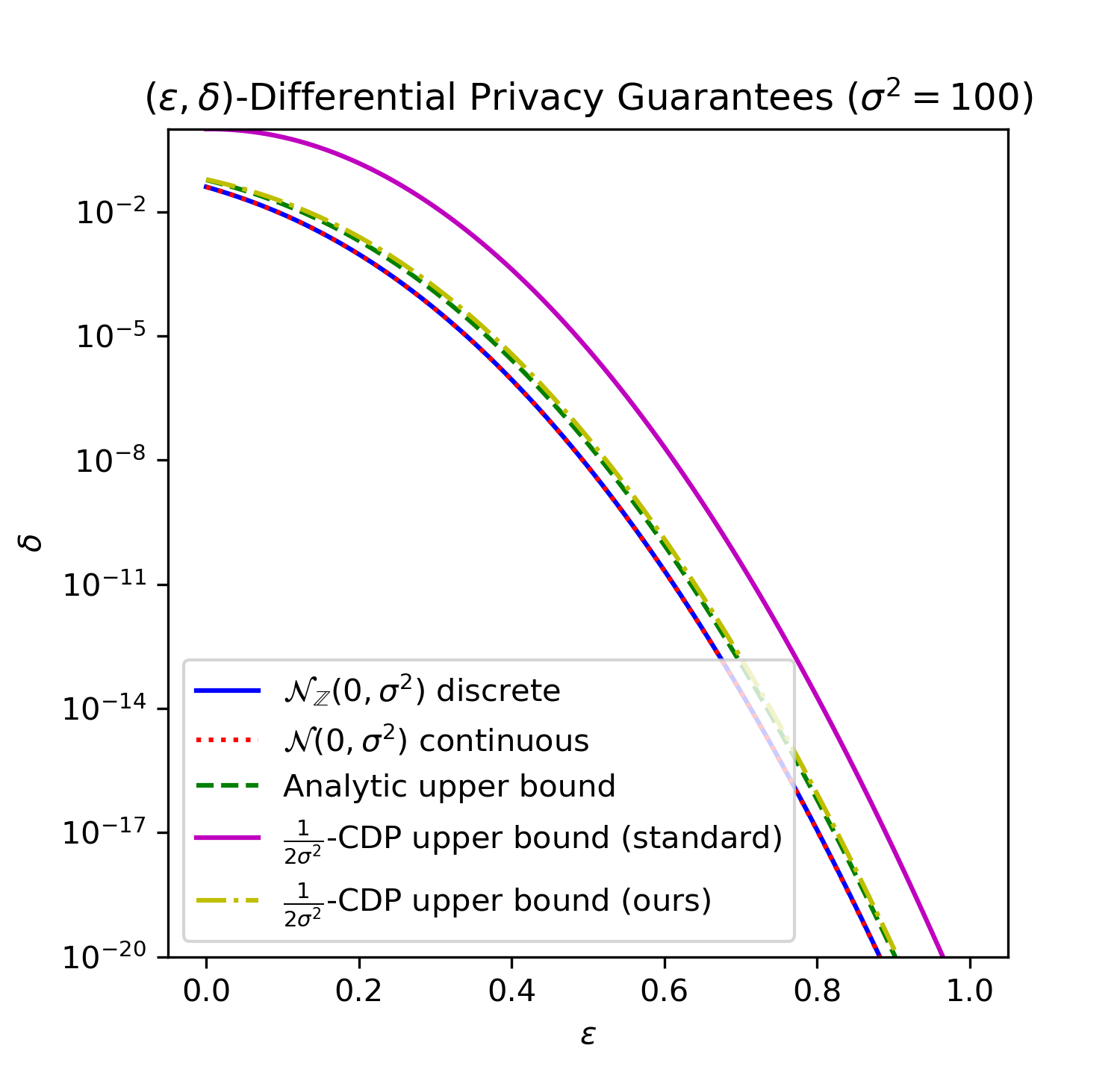}
    \end{minipage}
    \hspace{-10pt}
    \begin{minipage}{0.5\textwidth}
        \includegraphics[width=\textwidth]{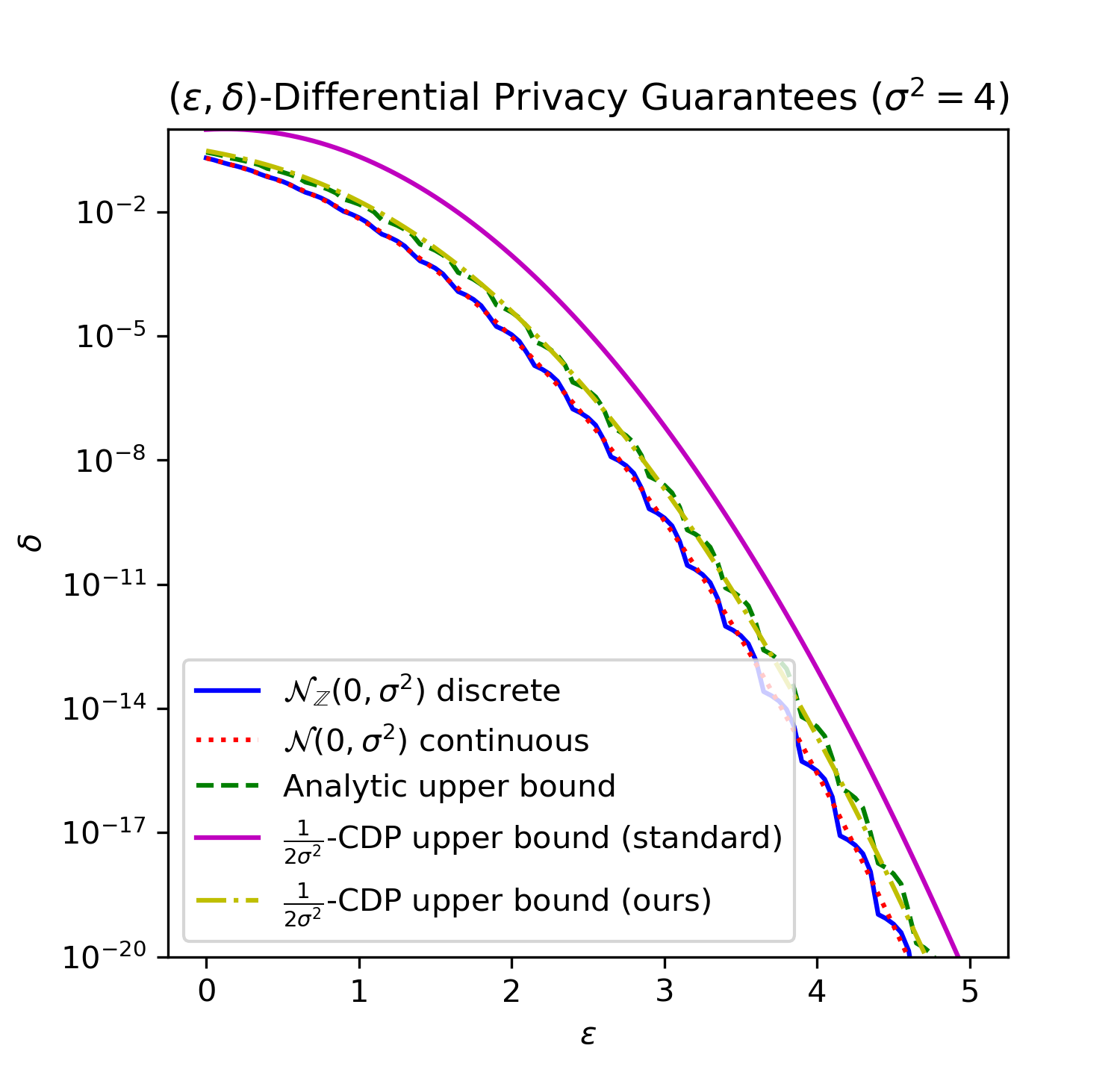}
    \end{minipage}
    \caption{\label{fig:epsdelta} Comparison of approximate $(\varepsilon,\delta)$-differential privacy guarantees ($\delta$ as a function of $\varepsilon$).}%
    \label{fig:my_label}
\end{figure}

To prove Theorem \ref{thm:approx-dp}, we introduce the privacy loss random variable \cite{DworkRV10,DworkR16,BunS16} and relate it to approximate differential privacy.\footnote{In the information theory literature, the term ``relative information spectrum'' is sometimes used for the distribution of what we call the privacy loss random variable \cite{SasonV16,Liu18}.}

\newcommand{\privloss}[2]{\mathsf{PrivLoss}\left(#1\middle\|#2\right)}
\begin{defn}[Privacy Loss Random Variable]\label{defn:privloss}
Let $M\colon \mathcal{X}^n \to \mathbb{Y}$ be a randomized algorithm. Let $x,x'\in\mathcal{X}^n$ be neighbouring inputs. Define $f\colon \mathcal{Y} \to \mathbb{R}$ by $f(y) = \log\left(\frac{\pr{}{M(x)=y}}{\pr{}{M(x')=y}}\right)$. (More formally, $f$ is the logarithm of the Radon-Nikodym derivative of the distribution of $M(x)$ with respect to the distribution of $M(x')$. If the distribution of $M(x)$ is not absolutely continuous with respect to $M(x')$, then the privacy loss random variable is undefined.) Let $Z=f(M(x))$. That is, $Z \in \mathbb{R}$ is the random variable generated by applying the function $f$ to the output of $M(x)$. (The randomness of $Z$ comes entirely from the algorithm $M$.)
Then $Z$ is called the \emph{privacy loss random variable} and is denoted $Z\gets\privloss{M(x)}{M(x')}$.
\end{defn}

Concentrated differential privacy can be formulated in terms of the moment generating function of the privacy loss \cite{BunS16}. Specifically, for any $M\colon \mathcal{X}^n \to \mathbb{Y}$, any $x,x' \in \mathcal{X}^n$, and any $\alpha \in (1,\infty)$, we have \begin{equation}
    \dr{\alpha}{M(x)}{M(x')} = \frac{1}{\alpha-1} \log \left( \ex{Z \gets \privloss{M(x)}{M(x')}}{e^{(\alpha-1)Z}} \right)
\end{equation}

Approximate differential privacy can also be characterized via the privacy loss as follows. This characterization is implicit in the work of Bun and Steinke \cite[Lemma B.2]{BunS16} and is explicit in the work of Meiser and Mohammadi \cite[Lemma 1]{MeiserM18} (see also \cite[Observation 2]{GoogleComposition} and references therein).

\begin{lem}\label{lem:pled}%
Let $\varepsilon,\delta\ge 0$. Let $M\colon \mathcal{X}^n \to \mathcal{Y}$ be a randomized algorithm. Then $M$ satisfies $(\varepsilon,\delta)$-differential privacy if and only if
\begin{align}
    \delta 
    &\ge \ex{Z \gets \privloss{M(x)}{M(x')}}{\max\{0,1-e^{\varepsilon-Z}\}} \label{eq:pled1}\\
    &= \pr{Z \gets \privloss{M(x)}{M(x')}}{Z>\varepsilon} - e^\varepsilon \cdot \! \pr{Z' \gets \privloss{M(x')}{M(x)}}{-Z'>\varepsilon}\label{eq:pled2}\\
    &= \int_\varepsilon^\infty e^{\varepsilon-z} \pr{Z \gets \privloss{M(x)}{M(x')}}{Z>z} \dx[z] \label{eq:pled3}
\end{align}
for all $x,x' \in \mathcal{X}^n$ differing on a single element.
\end{lem}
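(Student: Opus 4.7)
The plan is to fix neighbouring inputs $x,x'$, let $P$ and $Q$ denote the distributions of $M(x)$ and $M(x')$, and reduce $(\varepsilon,\delta)$-DP for this pair to a statement about the log-likelihood ratio $f(y)=\log(P(y)/Q(y))$. The key observation is that for any event $E$,
\[
P(E) - e^\varepsilon Q(E) = \sum_y (P(y) - e^\varepsilon Q(y))\,\mathbf{1}[y\in E],
\]
so the worst-case $E$ is the set $E^\star = \{y: P(y) > e^\varepsilon Q(y)\} = \{y: f(y)>\varepsilon\}$, and the corresponding optimal slack is $\sum_y (P(y)-e^\varepsilon Q(y))_+$. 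Rewriting the summand as $P(y)\cdot\max\{0,1-e^{\varepsilon-f(y)}\}$ and pushing the sum into an expectation over $Y\sim P$ yields
\[
\delta_{\min}(P,Q) = \ex{Z\gets \privloss{M(x)}{M(x')}}{\max\{0,1-e^{\varepsilon-Z}\}},
\]
which is exactly \eqref{eq:pled1}. Taking the supremum over neighbouring $x,x'$ proves the ``iff''. (In the continuous case the argument is identical with densities replacing masses; the Radon--Nikodym clause in Definition~\ref{defn:privloss} is what legitimises writing $e^{-Z}=Q(Y)/P(Y)$.)

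For \eqref{eq:pled2}, I would split $\max\{0,1-e^{\varepsilon-Z}\} = \mathbf{1}[Z>\varepsilon] - e^{\varepsilon-Z}\mathbf{1}[Z>\varepsilon]$ and observe that the first term contributes $\pr{}{Z>\varepsilon}$. For the second, perform a change of measure: under $P$ the quantity $e^{-Z}=Q(Y)/P(Y)$ re-weights the expectation to one over $Q$, so
\[
\ex{Y\sim P}{e^{-Z}\mathbf{1}[f(Y)>\varepsilon]} = \sum_y Q(y)\,\mathbf{1}[\log(P(y)/Q(y))>\varepsilon] = \pr{Z'\gets \privloss{M(x')}{M(x)}}{-Z'>\varepsilon},
\]
since the privacy loss of $Q$ relative to $P$ is exactly $-f$. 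Multiplying by $e^\varepsilon$ gives \eqref{eq:pled2}. For \eqref{eq:pled3}, note the elementary identity $\max\{0,1-e^{\varepsilon-Z}\} = \int_\varepsilon^\infty e^{\varepsilon-z}\mathbf{1}[Z>z]\,\dx[z]$ (both sides are $0$ when $Z\le\varepsilon$, and differentiating in $Z$ when $Z>\varepsilon$ yields $e^{\varepsilon-Z}$, matching the derivative of $1-e^{\varepsilon-Z}$); then swap expectation and integral by Tonelli to obtain the stated tail-integral form.

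The main technical delicacy is the ``only if'' direction together with the change-of-measure step for \eqref{eq:pled2}: one must verify that $E^\star$ is measurable and is indeed the maximiser (not just a supremum approached in the limit), and that the re-weighting by $e^{-Z}$ is valid wherever it is invoked, which is exactly what absolute continuity in Definition~\ref{defn:privloss} buys. Everything else is a direct manipulation of sums (or integrals, in the continuous setting), with \eqref{eq:pled3} following from the elementary identity above and Tonelli's theorem applied to the non-negative integrand $e^{\varepsilon-z}\mathbf{1}[Z>z]$.
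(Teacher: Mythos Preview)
Your proposal is correct and follows essentially the same approach as the paper: identify the worst event as $E^\star=\{y: f(y)>\varepsilon\}$, derive \eqref{eq:pled1} from the positive-part formula, and obtain \eqref{eq:pled2} via the change of measure $e^{-f(Y)}=Q(Y)/P(Y)$. The only minor variation is in \eqref{eq:pled3}: you use the pointwise identity $\max\{0,1-e^{\varepsilon-Z}\}=\int_\varepsilon^\infty e^{\varepsilon-z}\mathbf{1}[Z>z]\,\dx[z]$ together with Tonelli, whereas the paper uses integration by parts with $u(z)=\pr{}{Z>z}$ and $v(z)=1-e^{\varepsilon-z}$; both routes are elementary and equivalent.
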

\noindent Observe that, by Markov's inequality, for all $\alpha > 1$, it suffices to set
\begin{align}
    \delta %
    &= \pr{Z \gets \privloss{M(x)}{M(x')}}{Z>\varepsilon} \notag\\
    &\le e^{-(\alpha-1)\varepsilon} \cdot \ex{Z \gets \privloss{M(x)}{M(x')}}{e^{(\alpha-1)Z}} \notag\\
    &= e^{(\alpha-1)(\dr{\alpha}{M(x)}{M(x')} - \varepsilon)}.
\end{align} This is the usual expression that is used to convert bounds on the privacy loss or R\'enyi divergence into approximate differential privacy. Lemma \ref{lem:pled} and Proposition \ref{prop:cdp2adp} represent an improvement on this.

\begin{proof}
Fix neighbouring inputs $x,x'\in\mathcal{X}^n$. 
Let $f\colon \mathcal{Y} \to \mathbb{R}$ be as in Definition \ref{defn:privloss}.
For notational simplicity, let $Y=M(x)$ and $Y'=M(x')$ and $Z=f(Y)$ and $Z'=-f(Y')$. This is equivalent to $Z \gets \privloss{M(x)}{M(x')}$ and $Z' \gets \privloss{M(x')}{M(x)}$.
Our first goal is to prove that
$$\sup_{E \subset \mathcal{Y}} \pr{}{Y \in E} - e^\varepsilon \pr{}{Y' \in E} = \ex{}{\max\{0,1-e^{\varepsilon-Z}\}}.
$$
For any $E \subset \mathcal{Y}$, we have $$\pr{}{Y' \in E} = \ex{}{\mathbb{I}[Y' \in E]} = \ex{}{\mathbb{I}[Y \in E]\cdot e^{-f(Y)}}.$$ This is because $e^{-f(y)} = \frac{\pr{}{Y=y}}{\pr{}{Y'=y}}$.\footnote{Here we abuse notation: We use notation that only is well-defined for discrete random variables. However, the result holds in general under appropriate assumptions.}

Thus, for all $E \subset \mathcal{Y}$, we have $$\pr{}{Y \in E} - e^\varepsilon \pr{}{Y' \in E} = \ex{}{\mathbb{I}[Y \in E] \cdot (1 - e^{\varepsilon-f(Y)})}.$$ Now it is easy to identify the worst event as $E = \{y \in \mathcal{Y} : 1-e^{\varepsilon-f(y)}>0\}$. Thus $$\sup_{E \subset \mathcal{Y}} \pr{}{Y \in E} - e^\varepsilon \pr{}{Y' \in E} = \ex{}{\mathbb{I}[1-e^{\varepsilon-f(Y)}>0] \cdot (1 - e^{\varepsilon-f(Y)})} = \ex{}{\max\{0,1-e^{\varepsilon-Z}\}}.$$

\noindent Alternatively, since the worst event is equivalently $E = \{ y \in \mathcal{Y} : f(y) > \varepsilon \}$, we have $$\sup_{E \subset \mathcal{Y}} \pr{}{Y \in E} - e^\varepsilon \pr{}{Y' \in E} = \pr{}{f(Y)>\varepsilon} - e^\varepsilon \pr{}{f(Y')>\varepsilon} = \pr{}{Z>\varepsilon}-e^\varepsilon \pr{}{-Z'>\varepsilon}.$$

\noindent It only remains to show that $$\ex{}{\max\{0,1-e^{\varepsilon-Z}\}} = \int_\varepsilon^\infty e^{\varepsilon-z} \pr{}{Z>z} \dx[z].$$
This follows from integration by parts: Let $u(z) = \pr{}{Z>z}$ and $v(z) = 1-e^{\varepsilon-z}$ and $w(z)=u(z) \cdot v(z)$. Then 
\begin{align*}
    \ex{}{\max\{0,1-e^{\varepsilon-Z}\}} &= \int_\varepsilon^\infty v(z) \cdot u'(z) \dx[z] = \int_\varepsilon^\infty \left( w'(z) - v'(z) \cdot u(z) \right) \dx[z]\\
    &= \lim_{z \to \infty} w(z) - w(\varepsilon) + \int_\varepsilon^\infty e^{\varepsilon-z} \pr{}{Z>z} \dx[z].
\end{align*}
Now $w(\varepsilon) = u(\varepsilon) \cdot (1-e^{\varepsilon-\varepsilon}) = 0$ and $0 \le \lim_{z \to \infty} w(z) \le \lim_{z \to \infty} \pr{}{Z>z} = 0$, as required.
\end{proof}

\begin{proof}[Proof of Theorem \ref{thm:approx-dp}]
We will use Lemma \ref{lem:pled}, Equation \ref{eq:pled2}. Thus
our main task is to determine the distribution of the privacy loss random variable.

Fix neighbouring inputs $x,x'\in\mathcal{X}^n$. Without loss of generality, we may assume that $q(x)=0$ and $q(x')>0$. Now we have $q(x')\le\Delta$.
Let $f : \mathcal{Y} \to \mathbb{R}$ be as in Definition \ref{defn:privloss}.
For notational simplicity, let $Y=M(x)\sim \dgauss{\sigma^2}$ and $Y'=M(x')\sim \dgausss{q(x')}{\sigma^2}$ and $Z=f(Y)$ and $Z'=-f(Y')$. This is equivalent to $Z \gets \privloss{M(x)}{M(x')}$ and $Z' \gets \privloss{M(x')}{M(x)}$.
We must show that $$\pr{}{Z>\varepsilon}-e^\varepsilon \pr{}{-Z'>\varepsilon} \le \pr{Y \gets \dgauss{\sigma^2}}{Y > \frac{\varepsilon \sigma^2}{\Delta} - \frac{\Delta}{2}} - e^\varepsilon \cdot \pr{Y \gets \dgauss{\sigma^2}}{Y > \frac{\varepsilon \sigma^2}{\Delta} + \frac{\Delta}{2}}$$
For $y \in \mathcal{Y}$, we have
\begin{align*}
    f(y) &= \log \left(\frac{\pr{}{Y=y}}{\pr{}{Y'=y}}\right)
    = \log \left(\frac{e^{-(y-q(x))^2/2\sigma^2}}{e^{-(y-q(x'))^2/2\sigma^2}}\right)
    = \frac{(y-q(x'))^2-y^2}{2\sigma^2}
    = \frac{q(x') \cdot (q(x') -2y)}{2\sigma^2}.
\end{align*}
Thus, for all $y \in \mathcal{Y}$,
\begin{align*}
    f(y) > \varepsilon &\iff -y > \frac{\sigma^2\varepsilon}{q(x')}-\frac{q(x')}{2}.
\end{align*}
We note that $Y$ and $-Y$ and $Y'-q(x')$ and $q(x')-Y'$ all have the same distribution. Hence
$$\pr{}{Z>\varepsilon} = \pr{}{f(Y)>\varepsilon} = \pr{}{-Y > \frac{\sigma^2\varepsilon}{q(x')}-\frac{q(x')}{2}} = \pr{}{Y > \frac{\sigma^2\varepsilon}{q(x')}-\frac{q(x')}{2}}$$
and
$$\pr{}{-Z'>\varepsilon} = \pr{}{-Y' > \frac{\sigma^2\varepsilon}{q(x')}-\frac{q(x')}{2}} = \pr{}{Y-q(x') > \frac{\sigma^2\varepsilon}{q(x')}-\frac{q(x')}{2}} = \pr{}{Y > \frac{\sigma^2\varepsilon}{q(x')}+\frac{q(x')}{2}}.$$\qedhere
\end{proof}

\paragraph{Proofs of the analytical bounds.}
We now provide the proofs of the two aforementioned analytical bounds for $(\eps,\delta)$-differential privacy our theorem readily implies.
\begin{lem}
In the setting of Theorem~\ref{thm:approx-dp}, for $\Delta=1$, we have 
$
    \delta \le e^{-\lfloor \varepsilon \sigma^2 \rceil^2/2\sigma^2}/\sqrt{2\pi\sigma^2}\,,
$
where $\lfloor\cdot\rceil$ denotes rounding to the nearest integer. More generally, $\delta \le \sum_{k=\lceil \varepsilon\sigma^2/\Delta-\Delta/2\rceil}^{\lceil \varepsilon\sigma^2/\Delta+\Delta/2\rceil} \frac{e^{-k^2/2\sigma^2}}{\sqrt{2\pi \sigma^2}}$
\end{lem}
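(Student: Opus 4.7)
The plan is to weaken the exact formula for $\delta$ from Theorem~\ref{thm:approx-dp} by dropping the non-negative subtracted term, reduce what remains to a sum of point masses of $\dgauss{\sigma^2}$, and then bound each point mass by a Gaussian density via the Poisson summation identity already used in the proof of Lemma~\ref{lem:poisson}.

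Concretely, set $a = \eps\sigma^2/\Delta - \Delta/2$ and $b = \eps\sigma^2/\Delta + \Delta/2$, so $b-a = \Delta$. Splitting the event $\{Y > a\}$ as the disjoint union of $\{a < Y \le b\}$ and $\{Y > b\}$ inside the expression of Theorem~\ref{thm:approx-dp} yields
\[
\delta = \pr{}{a < Y \le b} - (e^\eps - 1)\pr{}{Y > b} \le \pr{}{a < Y \le b} = \sum_{k \in (a,b]\cap \mathbb{Z}} \pr{}{Y = k},
\]
and since every pmf value is nonnegative, enlarging the summation range to $k \in \{\lceil a\rceil,\ldots,\lceil b\rceil\}$ only weakens the bound.

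To control each $\pr{Y \gets \dgauss{\sigma^2}}{Y = k} = e^{-k^2/2\sigma^2}/S$ where $S = \sum_{x \in \mathbb{Z}} e^{-x^2/2\sigma^2}$, I need a lower bound on the partition function $S$. Setting $t=0$ in the Poisson summation identity already derived inside the proof of Lemma~\ref{lem:poisson} gives
\[
S = \sum_{y \in \mathbb{Z}} \hat f(y) = \sqrt{2\pi\sigma^2} \sum_{y \in \mathbb{Z}} e^{-2\pi^2 \sigma^2 y^2} \ge \sqrt{2\pi\sigma^2},
\]
where the inequality uses that every summand on the right is positive and the $y = 0$ term alone equals $\sqrt{2\pi\sigma^2}$. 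Hence $\pr{}{Y = k} \le e^{-k^2/2\sigma^2}/\sqrt{2\pi\sigma^2}$, and substituting this into the previous display gives the claimed general estimate. For the special case $\Delta = 1$, the interval $(a, b]$ has length exactly $1$ and thus contains a single integer, namely the nearest integer $\lfloor \eps\sigma^2 \rceil$ to $\eps\sigma^2$; applying the pmf bound at this one point yields $\delta \le e^{-\lfloor \eps\sigma^2 \rceil^2/2\sigma^2}/\sqrt{2\pi\sigma^2}$.

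There is no real obstacle: the whole argument is a short reduction from Theorem~\ref{thm:approx-dp}, and the only quantitative input is the partition-function estimate $S \ge \sqrt{2\pi\sigma^2}$, which drops out immediately from the Poisson summation computation already carried out in the proof of Lemma~\ref{lem:poisson}.
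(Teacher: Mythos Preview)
Your proposal is correct and follows essentially the same route as the paper: drop the $e^\varepsilon$-weighted tail term in the exact expression from Theorem~\ref{thm:approx-dp}, express the remaining probability as a finite sum of point masses, and bound each point mass using $S\ge\sqrt{2\pi\sigma^2}$. The only cosmetic difference is that the paper cites this normalization-constant bound as Fact~\ref{fact:normalization:constant}, whereas you re-derive it directly from the Poisson summation computation in Lemma~\ref{lem:poisson}; these are the same argument.
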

\begin{proof}
By Theorem~\ref{thm:approx-dp}, we have
\begin{align*}
\delta 
&= \pr{Y \gets \dgauss{\sigma^2}}{Y > \frac{\varepsilon \sigma^2}{\Delta}- \frac{\Delta}{2}} - e^\varepsilon \cdot \pr{Y \gets \dgauss{\sigma^2}}{Y > \frac{\varepsilon \sigma^2}{\Delta} + \frac{\Delta}{2}}\\
&\leq \pr{Y \gets \dgauss{\sigma^2}}{Y > \frac{\varepsilon \sigma^2}{\Delta}- \frac{\Delta}{2}} - \pr{Y \gets \dgauss{\sigma^2}}{Y > \frac{\varepsilon \sigma^2}{\Delta} + \frac{\Delta}{2}}\\
&= \sum_{k=\lceil \varepsilon\sigma^2/\Delta-\Delta/2\rceil}^{\lceil \varepsilon\sigma^2/\Delta+\Delta/2\rceil} \pr{Y \gets \dgauss{\sigma^2}}{Y = k}
\end{align*}
and the result now follows from the bound on the normalization constant from Fact~\ref{fact:normalization:constant}.
\end{proof}

\begin{lem}
In the setting of Theorem~\ref{thm:approx-dp}, if $\eps > \frac{\Delta^2}{2\sigma^2}$ and $\frac{\eps\sigma^2}{\Delta}+\frac{\Delta}{2},\frac{\eps\sigma^2}{\Delta}-\frac{\Delta}{2}\notin \mathbb{N}$ then
\begin{equation}
\delta \le \pr{X \gets \mathcal{N}(0,\sigma^2)}{ X > \left\lfloor \frac{\eps\sigma^2}{\Delta}-\frac{\Delta}{2} \right\rfloor }-\left(1-\frac{1}{\sqrt{2\pi\sigma^2}+1}\right)e^\eps\pr{X \gets \mathcal{N}(0,\sigma^2)}{X > \left\lfloor \frac{\eps\sigma^2}{\Delta}+\frac{\Delta}{2} \right\rfloor }.
\end{equation}
\end{lem}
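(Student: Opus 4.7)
The plan is to start from the exact formula in Theorem~\ref{thm:approx-dp}, $\delta=\pr{}{Y>a}-e^\eps\pr{}{Y>b}$ for $Y\sim\dgauss{\sigma^2}$ with $a=\tfrac{\eps\sigma^2}{\Delta}-\tfrac{\Delta}{2}$ and $b=\tfrac{\eps\sigma^2}{\Delta}+\tfrac{\Delta}{2}$, and to replace each discrete Gaussian tail by a continuous Gaussian tail at an integer cutoff. The hypothesis $\eps>\Delta^2/(2\sigma^2)$ is exactly the condition $a>0$, so the sums $\sum_{k>a}e^{-k^2/2\sigma^2}$ and $\sum_{k>b}e^{-k^2/2\sigma^2}$ run entirely over integers in the region $[0,\infty)$ where $e^{-x^2/2\sigma^2}$ is monotonically decreasing; the non-integrality assumption on $a$ and $b$ ensures $\{k\in\mathbb{Z}:k>a\}=\{k\geq\lfloor a\rfloor+1\}$ and similarly for $b$.

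The two technical ingredients are standard. By monotonicity of $e^{-x^2/2\sigma^2}$ on $[0,\infty)$, for integer $n\geq 1$ one has $\int_n^\infty e^{-x^2/2\sigma^2}\dx \leq \sum_{k\geq n}e^{-k^2/2\sigma^2} \leq \int_{n-1}^\infty e^{-x^2/2\sigma^2}\dx$. For the partition function $Z_\sigma=\sum_{k\in\mathbb{Z}}e^{-k^2/2\sigma^2}$, one has $\sqrt{2\pi\sigma^2}\leq Z_\sigma\leq\sqrt{2\pi\sigma^2}+1$: the lower bound is Lemma~\ref{lem:poisson} (via Poisson summation), and the upper bound follows by writing $Z_\sigma=1+2\sum_{k\geq 1}e^{-k^2/2\sigma^2}$ and comparing the remaining sum to $\int_0^\infty e^{-x^2/2\sigma^2}\dx=\sqrt{2\pi\sigma^2}/2$.

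Combining these bounds in opposite directions produces the claim. For the first term, the \emph{upper} Riemann bound applied to $\sum_{k\geq\lfloor a\rfloor+1}e^{-k^2/2\sigma^2}$ together with $Z_\sigma\geq\sqrt{2\pi\sigma^2}$ immediately gives $\pr{}{Y>a}\leq\pr{X\gets\mathcal{N}(0,\sigma^2)}{X>\lfloor a\rfloor}$. For the second term, the \emph{lower} Riemann bound on $\sum_{k\geq\lfloor b\rfloor+1}e^{-k^2/2\sigma^2}$ combined with the upper bound $Z_\sigma\leq\sqrt{2\pi\sigma^2}+1$ is the source of the target multiplicative factor $\sqrt{2\pi\sigma^2}/(\sqrt{2\pi\sigma^2}+1)=1-1/(\sqrt{2\pi\sigma^2}+1)$, which arises directly from the normalization slack. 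Subtracting $e^\eps$ times the second bound from the first then produces the stated inequality.

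The main obstacle is the second bound: the lower Riemann comparison naturally produces a continuous Gaussian tail at cutoff $\lfloor b\rfloor+1$ rather than $\lfloor b\rfloor$. Closing this one-integer gap while preserving only the advertised $\sqrt{2\pi\sigma^2}/(\sqrt{2\pi\sigma^2}+1)$ multiplicative constant requires showing that the ``missing'' single-interval mass $\int_{\lfloor b\rfloor}^{\lfloor b\rfloor+1}e^{-x^2/2\sigma^2}\dx$ can be absorbed into the $1/(\sqrt{2\pi\sigma^2}+1)$-fraction of the full tail integral $\int_{\lfloor b\rfloor}^\infty e^{-x^2/2\sigma^2}\dx$. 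This is the delicate algebraic step, and it is where the hypothesis $\eps>\Delta^2/(2\sigma^2)$ is used quantitatively (beyond merely ensuring $a>0$): it controls how large $\lfloor b\rfloor$ must be relative to $\sigma$, enabling Mills-ratio-type estimates on the Gaussian tail at $\lfloor b\rfloor$ to dominate the single-interval mass immediately to its right.
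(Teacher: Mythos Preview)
Your approach is exactly the paper's: upper-bound the first discrete tail via the inequality $\pr{}{Y\ge m}\le\pr{X\gets\mathcal{N}(0,\sigma^2)}{X\ge m-1}$ (your upper Riemann comparison plus $Z_\sigma\ge\sqrt{2\pi\sigma^2}$, which is Proposition~\ref{prop:tail:bound:gaussian}), and lower-bound the second discrete tail via the lower Riemann comparison together with $Z_\sigma\le\sqrt{2\pi\sigma^2}+1$ (Fact~\ref{fact:normalization:constant}).

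You are right that there is an off-by-one issue in the second term. In fact the paper's own proof commits precisely the slip you anticipate: it asserts $\pr{}{Y>b}=\pr{}{Y\ge\lfloor b\rfloor}$ for non-integer $b$, when the correct identity is $\pr{}{Y>b}=\pr{}{Y\ge\lfloor b\rfloor+1}$. With that (incorrect) equality in place, the series--integral lower bound at $m=\lfloor b\rfloor$ immediately yields the stated factor $\sqrt{2\pi\sigma^2}/(\sqrt{2\pi\sigma^2}+1)$ at cutoff $\lfloor b\rfloor$, and no further work is done. So the paper does not resolve the obstacle you flag; it overlooks it.

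Your proposed Mills-ratio repair, however, is only a sketch, and the hypothesis you point to does not supply what you claim. The condition $\eps>\Delta^2/(2\sigma^2)$ gives $b>\Delta$, hence $\lfloor b\rfloor\ge 1$, but it places no constraint on $\lfloor b\rfloor/\sigma$: with $\Delta=1$ and $\eps$ just above $1/(2\sigma^2)$ one has $\lfloor b\rfloor=1$ for arbitrarily large $\sigma$. Since both the single-interval mass $\int_{\lfloor b\rfloor}^{\lfloor b\rfloor+1}e^{-x^2/2\sigma^2}\dx$ and the tail ratio at $\lfloor b\rfloor$ are governed by $\lfloor b\rfloor/\sigma$, a Mills-ratio argument keyed only to this hypothesis does not obviously close the gap while preserving exactly the constant $1-1/(\sqrt{2\pi\sigma^2}+1)$. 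The honest repair is to run the argument at the correct cutoff and state the second continuous tail at $\lceil b\rceil=\lfloor b\rfloor+1$; that is what the paper's computation actually establishes.
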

\begin{proof}
Let $\eps > \frac{\Delta^2}{2\sigma^2}$ be such that $\frac{\eps\sigma^2}{\Delta}+\frac{\Delta}{2}\notin \mathbb{N}$, and set $M(\eps,\sigma) \eqdef \left\lceil {\eps\sigma^2}/{\Delta}-{\Delta}/{2} \right\rceil$ and $m(\eps,\sigma) \eqdef \left\lfloor {\eps\sigma^2}/{\Delta}+{\Delta}/{2} \right\rfloor$. Then, by Proposition~\ref{prop:tail:bound:gaussian},
\begin{align*}
    \pr{X \gets \dgauss{\sigma^2}}{ X > \frac{\eps\sigma^2}{\Delta}-\frac{\Delta}{2} }
    &= \pr{X \gets \dgauss{\sigma^2}}{ X \geq M(\eps,\sigma) }
    \leq \pr{X \gets \mathcal{N}(0,\sigma^2)}{ X \geq M(\eps,\sigma)-1 } \\
    &\leq \pr{X \gets \mathcal{N}(0,\sigma^2)}{ X \geq \left\lfloor \frac{\eps\sigma^2}{\Delta}-\frac{\Delta}{2} \right\rfloor }
\end{align*}
Conversely, by a comparison series-integral, we can easily show that, for any integer $m$,
\begin{align*}
\sum_{n=m}^\infty e^{-n^2/(2\sigma^2)}
&= \sum_{n=m}^\infty \int_{n}^{n+1} e^{-n^2/(2\sigma^2)}\,\dx
\geq \int_{m}^\infty e^{-x^2/(2\sigma^2)}\,\dx
= \sqrt{2\pi\sigma^2}\pr{X \gets \mathcal{N}(0,\sigma^2)}{ X \geq m }
\end{align*}
which, combined with Fact~\ref{fact:normalization:constant} on the normalization constant of the discrete Gaussian, yields
\begin{align*}
    \pr{X \gets \dgauss{\sigma^2}}{ X > \frac{\eps\sigma^2}{\Delta}+\frac{\Delta}{2} }
    &= \pr{X \gets \dgauss{\sigma^2}}{ X \geq m(\eps,\sigma) }
    \geq \frac{1}{1+\frac{1}{\sqrt{2\pi\sigma^2}}}\pr{X \gets \mathcal{N}(0,\sigma^2)}{ X \geq m(\eps,\sigma) }
\end{align*}
The result then follows from Theorem~\ref{thm:approx-dp}.
\end{proof}

\subsection{Converting Concentrated Differential Privacy to Approximate Differential Privacy}\label{sec:convert}

We have stated guarantees for both concentrated differential privacy (Theorem \ref{thm:priv}) and approximate differential privacy (Theorem \ref{thm:approx-dp}). Now we show how to convert from the former to the latter (Corollary~\ref{cor:cdp2adp}). This is particularly useful if the discrete Gaussian is being used repeatedly and we want to provide a privacy guarantee for the composition -- concentrated differential privacy has cleaner composition guarantees than approximate differential privacy. We include this result for completeness; this result was recently proved independently \cite[Lem.~1, Eq.~20]{AsoodehLCKS20}.%

We start with a conversion from R\'enyi differential privacy to approximate differential privacy. 
\begin{prop}\label{prop:cdp2adp}
Let $M\colon \mathcal{X}^n \to \mathcal{Y}$ be a randomized algorithm. Let $\alpha \in (1,\infty)$ and $\varepsilon\ge0$. Suppose $\dr{\alpha}{M(x)}{M(x')} \le \tau$ for all $x,x'\in\mathcal{X}^n$ differing in a single entry.\footnote{This assumption is the definition of $(\alpha,\tau)$-R\'enyi differential privacy \cite{Mironov17}.}
Then $M$ is $(\varepsilon,\delta)$-differentially private for \begin{equation}
    \delta=\frac{e^{(\alpha-1)(\tau-\varepsilon)}}{\alpha-1} \cdot \left(1-\frac{1}{\alpha}\right)^\alpha \le \frac{e^{(\alpha-1)(\tau-\varepsilon)-1}}{\alpha-1}.\label{eq:rdp2adp}
\end{equation}
\end{prop}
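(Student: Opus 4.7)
The plan is to combine the exact characterization of approximate differential privacy from Lemma~\ref{lem:pled} with a sharp pointwise envelope, and then apply the R\'enyi moment bound implied by the hypothesis. I would fix neighbouring inputs $x,x'\in\mathcal{X}^n$ and let $Z \gets \privloss{M(x)}{M(x')}$. By \eqref{eq:pled1}, it suffices to bound $\ex{}{\max\{0,1-e^{\varepsilon-Z}\}}$ by the claimed $\delta$, and the hypothesis $\dr{\alpha}{M(x)}{M(x')} \le \tau$ translates, by definition of R\'enyi divergence, to the moment bound $\ex{}{e^{(\alpha-1)Z}} \le e^{(\alpha-1)\tau}$.

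The key step is to identify the smallest constant $c = c(\alpha,\varepsilon)$ for which the pointwise inequality $\max\{0,1-e^{\varepsilon-z}\} \le c\cdot e^{(\alpha-1)z}$ holds for every $z\in\mathbb{R}$. Since the left-hand side vanishes for $z\le\varepsilon$, this reduces to maximising $g(z) \eqdef (1-e^{\varepsilon-z})e^{-(\alpha-1)z}$ over $z>\varepsilon$. A short calculus exercise gives $g'(z) = e^{-(\alpha-1)z}\bigl(\alpha e^{\varepsilon-z} - (\alpha-1)\bigr)$, so the unique critical point is $z_* = \varepsilon + \log(\alpha/(\alpha-1))$; at this point $1-e^{\varepsilon-z_*} = 1/\alpha$ and $e^{-(\alpha-1)z_*} = e^{-(\alpha-1)\varepsilon}\bigl((\alpha-1)/\alpha\bigr)^{\alpha-1}$, yielding
\[
c = g(z_*) = \frac{e^{-(\alpha-1)\varepsilon}(\alpha-1)^{\alpha-1}}{\alpha^\alpha}.
\]

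With this envelope in hand, taking expectations and applying the R\'enyi moment bound gives
\[
\ex{}{\max\{0,1-e^{\varepsilon-Z}\}} \le c\cdot\ex{}{e^{(\alpha-1)Z}} \le \frac{e^{(\alpha-1)(\tau-\varepsilon)}(\alpha-1)^{\alpha-1}}{\alpha^\alpha} = \frac{e^{(\alpha-1)(\tau-\varepsilon)}(1-1/\alpha)^\alpha}{\alpha-1},
\]
after the algebraic rewrite $(\alpha-1)^{\alpha-1}/\alpha^\alpha = (1-1/\alpha)^\alpha/(\alpha-1)$. This establishes the sharp form in \eqref{eq:rdp2adp}. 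The looser bound $\delta \le e^{(\alpha-1)(\tau-\varepsilon)-1}/(\alpha-1)$ then follows from the elementary inequality $(1-1/\alpha)^\alpha \le 1/e$, valid for all $\alpha>1$ and saturated in the limit $\alpha\to\infty$.

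The only real obstacle is pinning down the sharp constant $c$ via the optimisation above; everything else is routine. A simpler but cruder route would use the integration-by-parts representation \eqref{eq:pled3} with Markov applied inside the integral at each level, giving $\delta \le \int_\varepsilon^\infty e^{\varepsilon-z}e^{-(\alpha-1)(z-\tau)}\dx[z] = e^{(\alpha-1)(\tau-\varepsilon)}/\alpha$. This is weaker than the claimed bound by a factor of $((\alpha-1)/\alpha)^{\alpha-1}$, precisely because Markov is being invoked separately at every level $z$ rather than once globally via the pointwise envelope.
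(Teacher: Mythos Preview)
Your proof is correct and follows essentially the same approach as the paper: both invoke Lemma~\ref{lem:pled} (Equation~\eqref{eq:pled1}), then find the optimal pointwise envelope $\max\{0,1-e^{\varepsilon-z}\}\le c\,e^{(\alpha-1)z}$ by a one-variable optimisation, and combine with the moment bound $\ex{}{e^{(\alpha-1)Z}}\le e^{(\alpha-1)\tau}$. Your additional remark comparing this to the cruder level-by-level Markov bound via \eqref{eq:pled3} is a nice bonus not present in the paper.
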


In contrast, the standard bound \cite{DworkRV10,DworkR16,BunS16,Mironov17} is
$
    \delta \le e^{(\alpha-1)(\tau-\varepsilon)}
$.
Note that $\frac{e^{(\alpha-1)(\tau-\varepsilon)}}{\alpha-1} \cdot \left(1-\frac{1}{\alpha}\right)^\alpha = \frac{e^{(\alpha-1)(\tau-\varepsilon)}}{\alpha} \cdot \left(1-\frac{1}{\alpha}\right)^{\alpha-1}$.
Thus Proposition \ref{prop:cdp2adp} is strictly better than the standard bound for $\alpha>1$.
Equation \ref{eq:rdp2adp} can be rearranged to
\begin{equation}
    \varepsilon = \tau + \frac{\log(1/\delta) + (\alpha-1)\log(1-1/\alpha) - \log(\alpha)}{\alpha-1}.
\end{equation}

\begin{proof}

Fix neighbouring $x,x'\in\mathcal{X}^n$ and let $Z \gets \privloss{M(x)}{M(x')}$. We have $$\ex{}{e^{(\alpha-1)Z}} =e^{(\alpha-1)\dr{\alpha}{M(x)}{M(x')}}\le e^{(\alpha-1)\tau}.$$
By Lemma \ref{lem:pled}, our goal is to prove that $\delta \ge \ex{}{\max\{0,1-e^{\varepsilon-Z}\}}$.
Our approach is to pick $c>0$ such that $\max\{0,1-e^{\varepsilon-z}\} \le c \cdot e^{(\alpha-1)z}$ for all $z \in \mathbb{R}$. Then $$\ex{}{\max\{0,1-e^{\varepsilon-Z}\}} \le \ex{}{c \cdot e^{(\alpha-1)Z}} \le c \cdot e^{(\alpha-1)\tau}.$$
We identify the smallest possible value of $c$: $$c = \sup_{z \in \mathbb{R}} \frac{\max\{0,1-e^{\varepsilon-z}\}}{e^{(\alpha-1)z}} = \sup_{z \in \mathbb{R}} e^{z -\alpha \cdot z} - e^{\varepsilon-\alpha \cdot z} = \sup_{z \in \mathbb{R}} f(z),$$ where $f(z) = e^{z -\alpha \cdot z} - e^{\varepsilon-\alpha \cdot z}$. We have $$f'(z) = e^{z-\alpha z}(1-\alpha) - e^{\varepsilon - \alpha z}(-\alpha) = e^{-\alpha z} (\alpha e^\varepsilon - (\alpha-1)e^z).$$ Clearly $f'(z)=0 \iff e^z = \frac{\alpha}{\alpha-1}e^\varepsilon \iff z = \varepsilon - \log(1-1/\alpha)$. Thus \begin{align*}
    c&=f(\varepsilon-\log(1-1/\alpha))\\
    &= \left(\frac{\alpha}{\alpha-1}e^\varepsilon\right)^{1-\alpha} - e^\varepsilon \cdot \left(\frac{\alpha}{\alpha-1}e^\varepsilon\right)^{-\alpha}\\
    &= \left(\frac{\alpha}{\alpha-1}e^\varepsilon - e^\varepsilon\right) \cdot \left(\frac{\alpha-1}{\alpha}\cdot e^{-\varepsilon}\right)^\alpha\\
    &= \frac{e^\varepsilon}{\alpha-1} \cdot \left(1-\frac{1}{\alpha}\right)^\alpha \cdot e^{-\alpha\varepsilon}.
\end{align*} 
Thus $$\ex{}{\max\{0,1-e^{\varepsilon-Z}\}} \le \frac{e^\varepsilon}{\alpha-1} \cdot \left(1-\frac{1}{\alpha}\right)^\alpha \cdot e^{-\alpha\varepsilon} \cdot e^{(\alpha-1)\tau} = \frac{e^{(\alpha-1)(\tau-\varepsilon)}}{\alpha-1} \cdot \left(1-\frac{1}{\alpha}\right)^\alpha = \delta.\qedhere$$
\end{proof}

\citetall{AsoodehLCKS20} provide an optimal conversion from R\'enyi differential privacy to approximate differential privacy -- i.e., an optimal version of Proposition~\ref{prop:cdp2adp}. Specifically, the optimal bound is
\begin{equation}
    \delta = \inf \left\{ \hat\delta \in [0,1] : \forall p \in (\hat\delta,1) ~~ p^\alpha(p-\hat\delta)^{1-\alpha} + (1-p)^\alpha(e^\varepsilon-p+\hat\delta)^{1-\alpha} \le e^{(\alpha-1)(\tau -\varepsilon)} \right\}.
\end{equation} Clearly, the expression in Proposition~\ref{prop:cdp2adp} is simpler than this. Moreover, our expression is numerically stable, whereas the alternative is unstable for small values of $\delta$. We implemented both methods and found that they yield very similar results for the parameter regime of interest, but numerical stability was a significant practical issue.

By taking the infimum over all divergence parameters $\alpha$, Proposition~\ref{prop:cdp2adp} entails the following conversion from concentrated differential privacy to approximate differential privacy.

\begin{cor}\label{cor:cdp2adp}
Let $M\colon \mathcal{X}^n \to \mathcal{Y}$ be a randomized algorithm satisfying $\rho$-concentrated differential privacy. 
Then $M$ is $(\varepsilon,\delta)$-differentially private for any $\varepsilon\ge0$ and
\begin{equation}
    \delta= \inf_{\alpha \in (1,\infty)} \frac{e^{(\alpha-1)(\alpha\rho-\varepsilon)}}{\alpha-1} \cdot \left(1-\frac{1}{\alpha}\right)^\alpha \le \inf_{\alpha \in (1,\infty)} \frac{e^{(\alpha-1)(\alpha\rho-\varepsilon)-1}}{\alpha-1}.\label{eq:cdp2adp}
\end{equation}
\end{cor}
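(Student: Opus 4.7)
The plan is to derive this as a direct consequence of Proposition~\ref{prop:cdp2adp}. By Definition~\ref{defn:cdp}, $\rho$-concentrated differential privacy means precisely that $\dr{\alpha}{M(x)}{M(x')} \le \rho \cdot \alpha$ for every $\alpha \in (1,\infty)$ and every pair of neighbouring inputs $x,x' \in \mathcal{X}^n$. So for each fixed $\alpha > 1$, the hypothesis of Proposition~\ref{prop:cdp2adp} is satisfied with $\tau = \rho\alpha$.

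Applying Proposition~\ref{prop:cdp2adp} with $\tau = \rho\alpha$ then gives, for every fixed $\alpha>1$ and every $\varepsilon \ge 0$, that $M$ is $(\varepsilon,\delta_\alpha)$-differentially private for
\[
\delta_\alpha = \frac{e^{(\alpha-1)(\rho\alpha-\varepsilon)}}{\alpha-1}\cdot\left(1-\frac{1}{\alpha}\right)^\alpha.
\]
Since this holds for every $\alpha \in (1,\infty)$, $M$ is $(\varepsilon,\delta)$-DP for $\delta = \inf_{\alpha\in(1,\infty)} \delta_\alpha$, which is the claimed identity.

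For the stated upper bound, I would invoke the elementary inequality $(1-1/\alpha)^\alpha \le e^{-1}$, valid for all $\alpha > 1$ (this follows from $\log(1-1/\alpha) \le -1/\alpha$, giving $\alpha\log(1-1/\alpha) \le -1$). Substituting this into the expression above bounds each $\delta_\alpha$ by $e^{(\alpha-1)(\alpha\rho-\varepsilon)-1}/(\alpha-1)$, and taking the infimum yields the right-hand side of \eqref{eq:cdp2adp}.

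There is no real obstacle here: the corollary is essentially a repackaging of Proposition~\ref{prop:cdp2adp}, specialized by optimizing the free divergence order $\alpha$ against the CDP bound $\tau = \rho\alpha$. The one minor point worth care is that the infimum is over an open interval and may not be attained, but this is harmless since $(\varepsilon,\delta)$-DP is preserved under taking limits of $\delta$ from above (the condition $\Pr[M(x)\in E] \le e^\varepsilon\Pr[M(x')\in E] + \delta$ is closed in $\delta$).
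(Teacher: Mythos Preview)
Your proposal is correct and follows the same approach as the paper, which simply states that the corollary follows from Proposition~\ref{prop:cdp2adp} by taking the infimum over all divergence parameters $\alpha$. You have spelled out the details more carefully than the paper does, including the elementary inequality $(1-1/\alpha)^\alpha \le e^{-1}$ (which the paper embeds directly in the statement of Proposition~\ref{prop:cdp2adp}) and the observation that the infimum need not be attained; both are correct and helpful clarifications.
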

Corollary \ref{cor:cdp2adp} should be contrasted with the standard bound \cite{DworkRV10,DworkR16,BunS16,Mironov17} of
\begin{equation}
    \delta = \inf_{\alpha \in (1,\infty)} e^{(\alpha-1)(\alpha\rho-\varepsilon)} = e^{-(\varepsilon-\rho)^2/4\rho},
\end{equation}
which holds when $\varepsilon\ge\rho>0$. \citetall{BunS16} prove an intermediate bound of
\begin{equation}
    \delta = 2\sqrt{\pi\rho} \cdot e^\varepsilon \cdot \pr{X \gets \mathcal{N}(0,1)}{X>\frac{\varepsilon+\rho}{\sqrt{2\rho}}}
\end{equation}
\paragraph{Efficient computation of $\delta$.} For the looser expression in Corollary \ref{cor:cdp2adp}, we can analytically find an optimal $\alpha$. However, we can efficiently compute a tighter numerical bound: The equality in Equation \ref{eq:cdp2adp} is equivalent to
\begin{equation}
    \delta = \inf_{\alpha\in(1,\infty)} e^{g(\alpha)} ~~~~~\text{ where }~~~~~ g(\alpha) = (\alpha-1)(\alpha\rho-\varepsilon) + (\alpha-1) \cdot \log(1-1/\alpha) - \log(\alpha).
\end{equation}
We have 
\begin{equation}
    g'(\alpha)
    = (2\alpha-1)\rho - \varepsilon + \log(1-1/\alpha) 
\end{equation} and
\begin{equation}
g''(\alpha) 
= 2\rho + \frac{1}{\alpha(\alpha-1)} > 0.
\end{equation}
Since $g$ is a smooth convex function with\footnote{Here we assume $\varepsilon>\rho$, which is the setting of interest.} \begin{equation}
    g'\left(\frac{\varepsilon+\rho}{2\rho}\right) = 0+\log\left(\frac{\varepsilon-\rho}{\varepsilon+\rho}\right)<0
\end{equation}
and
\begin{equation}
    g'\left(\max\left\{\frac{\varepsilon+\rho+1}{2\rho},2\right\}\right) \ge 1 - \log 2>0,
\end{equation} it has a unique minimizer $\alpha_* \in \left(\frac{\varepsilon+\rho}{2\rho},\max\{\frac{\varepsilon+\rho+1}{2\rho},2\}\right)$. We can find the minimizer $\alpha_*$ by conducting a binary search over the interval $ \left(\frac{\varepsilon+\rho}{2\rho},\max\{\frac{\varepsilon+\rho+1}{2\rho},2\}\right)$. That is, we want to find $\alpha_*$ such that $g'(\alpha_*)=0$; if $\alpha<\alpha_*$, we have $g'(\alpha)<0$ and, if $\alpha>\alpha_*$, we have $g'(\alpha)>0$.

\subsection{Sharp Approximate Differential Privacy Bounds for Multivariate Noise}

Next we consider adding independent discrete Gaussians to a multivariate function. We begin with a concentrated differential privacy bound:

\begin{thm}[Multivariate Discrete Gaussian Satisfies Concentrated Differential Privacy]\label{thm:priv2}
Let $\sigma_1, \cdots, \sigma_d >0$ and $\varepsilon > 0$.
Let $q\colon \mathcal{X}^n \to \mathbb{Z}^d$ satisfy $\sum_{j \in [d]} (q_j(x)-q_j(x'))^2/\sigma_j^2\le\varepsilon^2$ for all $x,x'\in\mathcal{X}^n$ differing on a single entry. Define a randomized algorithm $M\colon \mathcal{X}^n \to \mathbb{Z}^d$ by $M(x)=q(x)+Y$ where $Y_j \gets \dgauss{\sigma_j^2}$ independently for all $j \in [d]$. Then $M$ satisfies $\frac12 \varepsilon^2$-concentrated differential privacy.
\end{thm}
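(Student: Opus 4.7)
The plan is to reduce the multivariate case to the univariate Proposition~\ref{prop:renyi} via the tensorization (additivity) of R\'enyi divergence under product distributions. Given neighbouring inputs $x, x' \in \mathcal{X}^n$, the output distributions $M(x)$ and $M(x')$ are product distributions on $\mathbb{Z}^d$: namely $M(x) \sim \bigotimes_{j=1}^d \dgausss{q_j(x)}{\sigma_j^2}$ and $M(x') \sim \bigotimes_{j=1}^d \dgausss{q_j(x')}{\sigma_j^2}$, because the noise coordinates $Y_j$ are independent.

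First I would record (or quickly derive from the definition) the standard tensorization identity
\[
\dr{\alpha}{\textstyle\bigotimes_j P_j}{\textstyle\bigotimes_j Q_j} = \sum_{j=1}^d \dr{\alpha}{P_j}{Q_j},
\]
which follows by factoring $\bigl(\prod_j P_j(y_j)\bigr)^\alpha \bigl(\prod_j Q_j(y_j)\bigr)^{1-\alpha}$ as $\prod_j P_j(y_j)^\alpha Q_j(y_j)^{1-\alpha}$ and applying Fubini's theorem, then taking $\frac{1}{\alpha-1}\log(\cdot)$. Applied to our setting with $P_j = \dgausss{q_j(x)}{\sigma_j^2}$ and $Q_j = \dgausss{q_j(x')}{\sigma_j^2}$, this gives
\[
\dr{\alpha}{M(x)}{M(x')} = \sum_{j=1}^d \dr{\alpha}{\dgausss{q_j(x)}{\sigma_j^2}}{\dgausss{q_j(x')}{\sigma_j^2}}.
\]

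Next I would apply Proposition~\ref{prop:renyi} coordinatewise. Since $q_j(x), q_j(x') \in \mathbb{Z}$, each term is bounded by $\frac{(q_j(x)-q_j(x'))^2}{2\sigma_j^2}\cdot\alpha$. Summing and invoking the sensitivity hypothesis yields
\[
\dr{\alpha}{M(x)}{M(x')} \le \alpha \cdot \sum_{j=1}^d \frac{(q_j(x)-q_j(x'))^2}{2\sigma_j^2} \le \frac{\varepsilon^2}{2}\cdot\alpha.
\]
Since this holds for all $\alpha \in (1,\infty)$ and all neighbouring $x, x'$, Definition~\ref{defn:cdp} gives $\frac{1}{2}\varepsilon^2$-concentrated differential privacy.

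There is no real obstacle here: the heavy lifting was already done in Proposition~\ref{prop:renyi} (which in turn used the Poisson summation formula). The only thing to be mindful of is to invoke tensorization rather than attempt a direct Poisson-summation argument in $d$ dimensions, which would be considerably more cumbersome; tensorization sidesteps the need for a multivariate Lemma~\ref{lem:poisson}. It is also worth noting that the integrality hypothesis $q_j(x), q_j(x') \in \mathbb{Z}$ is automatic from $q\colon \mathcal{X}^n \to \mathbb{Z}^d$, so Proposition~\ref{prop:renyi} applies without modification to each coordinate.
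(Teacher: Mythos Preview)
Your proposal is correct and takes essentially the same approach as the paper. The paper's one-line justification is that the theorem ``follows from Proposition~\ref{prop:renyi}, composition of concentrated differential privacy, and Definition~\ref{defn:cdp}''; your tensorization-of-R\'enyi-divergence argument is precisely the content of the (non-adaptive) CDP composition lemma applied here, so you have simply unpacked that black box.
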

\noindent Theorem~\ref{thm:priv2} follows from Proposition~\ref{prop:renyi}, composition of concentrated differential privacy, and Definition~\ref{defn:cdp}. If $\sigma_1 = \sigma_2 = \cdots = \sigma_d$, then the concentrated differential privacy guarantee depends only on the sensitivity of $q$ in the Euclidean norm; if the $\sigma_j$s are different, then it is a weighted Euclidean norm. Note that we only consider multivariate Gaussians with independent coordinates.

It is possible to obtain an approximate differential privacy guarantee for the multivariate discrete Gaussian from Theorem \ref{thm:priv2} and Corollary \ref{cor:cdp2adp}. While this bound is reasonably tight, we will now give an exact bound:

\begin{thm}
Let $\sigma_1, \cdots, \sigma_d >0$. Let $Y_j \gets \dgauss{\sigma_j^2}$ independently for each $j \in [d]$. Let $q\colon \mathcal{X}^n \to \mathbb{Z}^d$. Define a randomized algorithm $M\colon \mathcal{X}^n \to \mathbb{Z}^d$ by $M(x) = q(x)+Y$. Let $\varepsilon,\delta>0$. Then $M$ is $(\varepsilon,\delta)$-differentially private if, and only if, for all $x,x' \in \mathcal{X}^n$ differing on a single entry, we have
\begin{align}
    \delta &\ge \ex{}{\max\left\{ 0 , 1 - \exp\left(\varepsilon - Z \right) \right\}}\label{eq:multi_adp1}\\
    &= \pr{}{Z > \varepsilon} - e^\varepsilon \cdot \pr{}{Z < -\varepsilon}\label{eq:multi_adp2}\\
    &=  \int_\varepsilon^\infty e^{\varepsilon-z} \cdot \pr{}{Z>z} ~\dx[z] \label{eq:multi_adp3},
\end{align}
where 
\begin{equation}
    Z := \sum_{j=1}^d \frac{(q(x)_j-q(x')_j)^2 + 2 (q(x)_j-q(x')_j) \cdot Y_j}{2\sigma_j^2}.
\end{equation}
\end{thm}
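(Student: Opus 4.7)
The plan is to reduce directly to Lemma~\ref{lem:pled} after an explicit computation of the privacy loss random variable. Fix neighbouring inputs $x,x' \in \mathcal{X}^n$ and set $Y=M(x)=q(x)+Y$, $Y'=M(x')=q(x')+Y$ (with the same $Y$ denoting a generic sample from the product of discrete Gaussians, of course realized independently for $M(x)$ and $M(x')$). The first task is to show that the explicit random variable $Z$ in the theorem statement is distributed exactly as $\privloss{M(x)}{M(x')}$. Because each coordinate is independent and the means $q(x)_j, q(x')_j$ are integers, the normalization constants $\sum_{y_j \in \mathbb{Z}} e^{-(y_j - q(x)_j)^2/2\sigma_j^2}$ and $\sum_{y_j \in \mathbb{Z}} e^{-(y_j - q(x')_j)^2/2\sigma_j^2}$ are equal (the shift of the summation index by an integer leaves the sum unchanged), so in $f(y)=\log(\pr{}{M(x)=y}/\pr{}{M(x')=y})$ they cancel coordinate-wise, leaving
\[
f(y) = \sum_{j=1}^d \frac{(y_j - q(x')_j)^2 - (y_j - q(x)_j)^2}{2\sigma_j^2}.
\]
Substituting $y_j = q(x)_j + Y_j$ and expanding $(a)^2 - (b)^2 = (a-b)(a+b)$ with $a-b = q(x)_j - q(x')_j$ and $a+b = q(x)_j - q(x')_j + 2 Y_j$ yields exactly the expression defining $Z$.

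The second step is to apply Lemma~\ref{lem:pled}. Identifying $Z \gets \privloss{M(x)}{M(x')}$ immediately gives Equation~\ref{eq:multi_adp1} from Equation~\ref{eq:pled1} and Equation~\ref{eq:multi_adp3} from Equation~\ref{eq:pled3}. For Equation~\ref{eq:multi_adp2}, I need to replace the term $\pr{}{-Z' > \varepsilon}$ coming from $Z' \gets \privloss{M(x')}{M(x)}$ in Equation~\ref{eq:pled2} by $\pr{}{Z<-\varepsilon}$. Repeating the same computation for $-f(M(x'))$ with $y_j = q(x')_j + Y_j$ produces
\[
-f(M(x')) = \sum_{j=1}^d \frac{(q(x)_j - q(x')_j)^2 - 2 (q(x)_j - q(x')_j) Y_j}{2\sigma_j^2}.
\]
Since each $\dgauss{\sigma_j^2}$ is symmetric about $0$, $-Y_j$ has the same distribution as $Y_j$; so coordinate-wise the above equals in distribution the expression for $Z$, hence $-Z' \stackrel{d}{=} -Z$. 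Therefore $\pr{}{-Z' > \varepsilon} = \pr{}{-Z > \varepsilon} = \pr{}{Z < -\varepsilon}$, which is the form appearing in Equation~\ref{eq:multi_adp2}.

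The ``if and only if'' structure is inherited directly from Lemma~\ref{lem:pled}: the three quantities above are all equal to $\sup_{E \subseteq \mathbb{Z}^d}(\pr{}{M(x) \in E} - e^\varepsilon \pr{}{M(x') \in E})$, and $(\varepsilon,\delta)$-differential privacy is equivalent to this supremum being at most $\delta$ for every neighbouring pair. There is no real obstacle here beyond careful bookkeeping of signs in two places: (i) checking that the normalization constants cancel, which relies on $q$ being integer-valued so the two lattices coincide; and (ii) leveraging symmetry of each one-dimensional discrete Gaussian to collapse $\privloss{M(x')}{M(x)}$ and $\privloss{M(x)}{M(x')}$ into the same distribution, so that only a single explicit random variable $Z$ appears in the theorem statement rather than two.
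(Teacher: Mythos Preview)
Your proposal is correct and follows essentially the same approach as the paper: compute the privacy loss function $f$, identify $Z=f(M(x))$ with the explicit expression in the statement, invoke Lemma~\ref{lem:pled} for Equations~\eqref{eq:multi_adp1} and~\eqref{eq:multi_adp3}, and then use the symmetry $Y_j \stackrel{d}{=} -Y_j$ to show $Z' \stackrel{d}{=} Z$, which converts Equation~\eqref{eq:pled2} into Equation~\eqref{eq:multi_adp2}. The paper simplifies slightly by assuming $q(x)=0$ without loss of generality, whereas you carry the general $q(x),q(x')$ through; you also make explicit the (important but easy) point that the normalization constants cancel because $q$ is integer-valued, which the paper leaves implicit.
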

\begin{proof}
    Fix neighbouring $x,x' \in \mathcal{X}^n$. Without loss of generality, we may assume $q(x)=0$.
    Following the proof of Theorem \ref{thm:approx-dp}, we will apply Lemma \ref{lem:pled}, which requires understanding the privacy loss random variable.
    
    Let $f\colon \mathbb{Z}^d \to \mathbb{R}$ be as in Definition \ref{defn:privloss}. That is,
    \begin{align*}
        f(y) &= \log\left(\frac{\pr{}{M(x)=y}}{\pr{}{M(x')=y}}\right)\\
        &= \sum_{j=1}^d \log\left(\frac{\pr{Y_j \gets \dgauss{\sigma_j^2}}{q(x)_j+Y_j=y_j}}{\pr{Y_j \gets \dgauss{\sigma_j^2}}{q(x')_j+Y_j=y_j}}\right)\\
        &= \sum_{j=1}^d \frac{-y_j^2 + (y_j-q(x')_j)^2}{2\sigma_j^2}\\
        &= \sum_{j=1}^d \frac{q(x')_j^2 - 2 y_j \cdot q(x')_j}{2\sigma_j^2}.
    \end{align*}
    Then the privacy loss $Z \gets \privloss{M(x)}{M(x')}$ is given by \[Z=f(Y)= \sum_{j=1}^d \frac{q(x')_j^2-2q(x')_j Y_j}{2\sigma_j^2}.\]
    Substituting this expression into Equations \ref{eq:pled1} and \ref{eq:pled3} yields Equations \ref{eq:multi_adp1} and \ref{eq:multi_adp3} respectively.
    
    Next we look at $Z' \gets \privloss{M(x')}{M(x)}$, which is given by \[Z' = -f(q(x')+Y) = -\sum_{j=1}^d \frac{q(x')_j^2 - 2 q(x')_j (Y_j+q(x')_j)}{2\sigma_j^2} = \sum_{j=1}^d \frac{q(x')_j^2 + 2 q(x')_j Y_j}{2\sigma_j^2}.\]
    Noting that each $Y_j$ has a symmetric distribution, we see that $Z'$ has the same distribution as $Z$.
    Substituting these expressions into Equation \ref{eq:pled2} yields Equation \ref{eq:multi_adp2}.
\end{proof}

Theorem \ref{thm:priv2} gives three equivalent expressions for the approximate differential privacy guarantee of the multivariate discrete Gaussian. All of these expressions are in terms of the privacy loss random variable $Z \gets \privloss{M(x)}{M(x')}$. We make some remarks about evaluating these expressions:
\begin{enumerate}
    \item Direct evaluation of the expressions is often impractical. Computing the distribution of $Z$ entails evaluating an infinite sum. Fortunately the terms decay rapidly, so the sum can be truncated, but this still leaves a number of terms that grows exponentially in the dimensionality $d$. Thus we must find more effective ways to evaluate the expressions.
    \item The approach underlying concentrated differential privacy is to consider the moment generating function $e^{(\alpha-1)\dr{\alpha}{M(x)}{M(x')}} = \ex{}{e^{(\alpha-1)Z}}$. This provides reasonably tight upper bounds on the approximate differential privacy guarantees. However, this approach is not suitable for numerically computing exact bounds \cite{McCullagh94}.
    \item Instead of the moment generating function, we consider the characteristic function: Let $\sigma_1, \cdots, \sigma_d >0$. Let $Y_j \gets \dgauss{\sigma_j^2}$ independently for each $j \in [d]$.
    Let $\mu = q(x)-q(x') \in \mathbb{Z}^d$ and \( Z := \sum_{j=1}^d \frac{\mu_j^2 + 2 \mu_j \cdot Y_j}{2\sigma_j^2}. \)
    The characteristic function of the discrete Gaussian can be expressed two ways: Let $i=\sqrt{-1}$. For $t \in \mathbb{R}$ and all $j \in [d]$, we have
    \begin{align}
        \ex{}{e^{itY_j}} &= \frac{\sum_{y \in \mathbb{Z}} e^{ity-y^2/2\sigma_j^2}}{\sum_{y \in \mathbb{Z}} e^{-y^2/2\sigma_j^2}}\label{eq:dg-cf}\\
        &= \frac{\sum_{u \in \mathbb{Z}} e^{-(t-2\pi u)^2\sigma_j^2/2}}{\sum_{u \in \mathbb{Z}} e^{-(2\pi u)^2\sigma_j^2/2}}.\label{eq:dg-cf-poisson}
    \end{align}
    The equivalence of the second expression \eqref{eq:dg-cf-poisson} follows from the Poisson summation formula. When $2\pi^2\sigma_j^2 > 1/2\sigma_j^2$, then the second expression converges more rapidly; otherwise the first expression converges faster. In either case, accurately evaluating the characteristic function of the discrete Gaussian is easy.
    
    It is then possible to compute the characteristic function of the privacy loss:
    \begin{align}
        \ex{}{e^{itZ}} &= \prod_j^d \left( e^{it \frac{\mu_j^2}{2\sigma_j^2}}\cdot \ex{}{e^{i t \frac{\mu_j}{\sigma_j^2} Y_j}} \right).\label{eq:pl-multi}
    \end{align}
    
    \item Since the discrete Gaussian is symmetric about $0$, its characteristic function is real-valued. Since the discrete Gaussian is supported on the integers, its characteristic function is periodic: $\ex{}{e^{i(t+2\pi)Y_j}} = \ex{}{e^{itY_j}}$ for all $t \in \mathbb{R}$ and all $j \in [d]$.
    
    \item Assume there exists some $\gamma>0$ such that, for all $j \in [d]$, there exists $v \in \mathbb{Z}$ satisfying $1/\sigma_j^2 = \gamma \cdot v$. 
    This assumption holds if $\sigma_j^2$ is rational for all $j \in [d]$.
    
    Under this assumption the privacy loss is always an integer multiple of $\gamma$ -- i.e., $\pr{}{Z \in \gamma \mathbb{Z}}=1$.
    Consequently the characteristic function of the privacy loss is also periodic -- i.e., $\ex{}{e^{i(t+2\pi/\gamma)Z}} = \ex{}{e^{itZ}}$ for all $t \in \mathbb{R}$.
    
    \item It is possible to compute the probability mass function of the privacy loss from the characteristic function:
    \begin{equation}
        \forall z \in \gamma \mathbb{Z} ~~~~~~~~~~ \pr{}{Z=z} = \frac{\gamma}{2\pi} \int_0^{2\pi/\gamma} e^{-itz} \cdot \ex{}{e^{itZ}} \dx[t] \label{eq:pmf-cf}
    \end{equation}
    This can form the basis of an algorithm for computing the guarantee of Theorem \ref{thm:priv2}: The characteristic function can be easily computed from Equations \ref{eq:dg-cf}, \ref{eq:dg-cf-poisson}, and \ref{eq:pl-multi} and then we numerically integrate it according to Equation \ref{eq:pmf-cf} to compute the probability distribution of the privacy loss and finally we substitute this into Equation \ref{eq:multi_adp1}.
    
    The downside of this approach is that (i) it requires numerical integration and (ii) it only gives us the probabilities one at a time. Both of these downsides could make the procedure quite slow.
    
    \item We propose to use the discrete Fourier transform (a.k.a.~fast Fourier transform) to avoid these downsides.
    
    Effectively, we will compute the distribution of $Z$ modulo $m\gamma$ for some integer $m$. (For fast computation, $m$ should be a power of two.) Call this modular random variable $Z_m$, so that \[\pr{}{Z_m=z} = \sum_{k \in \mathbb{Z}} \pr{}{Z=z+m\gamma k}.\]
    Rather than taking $Z_m$ to be supported on $\{0,\gamma,\cdots,(m-1)\gamma\}$ as is usual, we will take $Z_m$ to be supported on $\{(1-m/2)\gamma,(2-m/2)\gamma, \cdots, (m/2-1)\gamma,(m/2)\gamma\}$.
    
    We will choose $m$ large enough so that $\pr{}{Z \ne Z_m}$ is sufficiently small.
    
    \item The inverse discrete Fourier transform allows us to compute the probability mass of $Z_m$ from the characteristic function of $Z$ (which is identical to the characteristic function of $Z_m$ at the points of interest):
    \begin{equation}
        \pr{}{Z_m=z} = \frac{1}{m} \sum_{k=0}^{m-1} e^{-i 2\pi zk / m\gamma} \cdot \ex{}{e^{i 2\pi k Z/m\gamma}}\\
    \end{equation}
    The fast Fourier transform uses a divide-and-conquer approach to allow us to compute the entire distribution of $Z_m$ in nearly linear time from the values $\ex{}{e^{i2\pi k Z / m\gamma}}$ for $k=0 \cdots m-1$. These values can be easily computed using Equations \ref{eq:dg-cf-poisson} and \ref{eq:pl-multi}.
    
    \item Now we can compute an upper bound on the approximate differential privacy guarantee \eqref{eq:multi_adp1} using the inequality
    \begin{equation}
        \delta = \ex{}{\max\{0, 1-e^{\varepsilon-Z}\}} \le \ex{}{\max\{0, 1-e^{\varepsilon-Z_m}\}} + \pr{}{Z>Z_m}.
    \end{equation}
    
    \item It only remains to bound $\pr{}{Z>Z_m}$. For $\alpha>1$, we have
    \begin{equation}
        \pr{}{Z>Z_m} = \pr{}{Z > (m/2)\gamma} \le \ex{}{e^{(\alpha-1)(Z-(m/2)\gamma)}} = e^{(\alpha-1)(\dr{\alpha}{M(x)}{M(x')} -m \gamma /2 )}.
    \end{equation}
    From the concentrated differential privacy analysis, we have $\dr{\alpha}{M(x)}{M(x')} \le \alpha \cdot \sum_j^d \frac{\mu_j^2}{2\sigma_j^2}$ for all $\alpha>1$. Assuming $m\gamma > \sum_j^d \mu_j^2/\sigma_j^2$, we can set $\alpha = \frac12 + \frac{m\gamma}{2\sum_j^d \mu_j^2/\sigma_j^2}>1$ to obtain the bound 
    \begin{equation}
        \pr{}{Z>Z_m} \le \exp\left(\frac{-\left(m\gamma - \sum_j^d \mu_j^2/\sigma_j^2\right)^2}{8\sum_j^d \mu_j^2/\sigma_j^2}\right).\label{eq:modular-fail}
    \end{equation}
    The value of $m$ should be chosen such that this error term is tolerable. For example, if the intent is to obtain an approximate $(\varepsilon,\delta)$-differential privacy bound with $\delta=10^{-6}$, then we should choose $m$ large enough such that Equation \ref{eq:modular-fail} is less than, say, $10^{-9}$.
    
    We should set $m=\frac{1}{\gamma} \cdot \left(\sqrt{8 \log(1/\delta') \sum_j^d \mu_j^2/\sigma_j^2} + \sum_j^d \mu_j^2/\sigma_j^2\right)$, where $\delta'>0$ is the error tolerance in our final estimate of $\delta$.
    
    To obtain lower bounds on $\delta$, we would use 
    \begin{equation}
        \delta = \ex{}{\max\{0,1-e^{\varepsilon-Z}\}} \ge \ex{}{\max\{0,1-e^{\varepsilon-Z_m}\}} - \pr{}{Z<Z_m}
    \end{equation} 
    and, for all $\alpha>1$, we have 
    \begin{equation}
        \pr{}{Z<Z_m} = \pr{}{Z \le -\gamma m / 2} \le \ex{}{e^{-\alpha(Z + \gamma m / 2)}} = e^{(\alpha-1)\dr{\alpha}{M(x')}{M(x)} - \alpha\gamma m / 2}.
    \end{equation}
    
    \item The algorithm we have sketched above should be relatively efficient and numerically stable. The fast Fourier transform requires $O(m \log m)$ operations. We must evaluate the characteristic function of $Z$ at $m$ points; each evaluation requires evaluating the characteristic function of $d$ discrete Gaussians and multiplying the results together. (Of course, we must only evaluate coordinates where $\mu_j \ne 0$.) The characteristic function of the discrete Gaussian has a very rapidly converging series representation, so this should be close to a constant number of operations.
    
    The discrete Fourier transform is also numerically stable, since it is a unitary operation. (Indeed this is the advantage of the characteristic function/Fourier transform over the moment generating function/Laplace transform.)
    
    \item The main problem for this algorithm would be if $\gamma$ is extremely small (as the space and time used grows linearly with $1/\gamma$) or if the assumption that $\gamma$ exists fails. This depends on the choice of the parameters $\sigma_1, \cdots, \sigma_d$.
    
    In this case, one solution is to ``bucket'' the privacy loss \cite{KoskelaJPH20,GoogleComposition}. That is, rather than relying on the privacy loss naturally falling on a discrete grid $\gamma \mathbb{Z}$ as we do, we artificially round it to such a grid. Rounding up results in computing an upper bound on $\delta$, while rounding down gives a lower bound. The advantage of this bucketing approach is that we have direct control over the granularity of the approximation. The disadvantage is that we cannot use the Poisson summation formula \eqref{eq:dg-cf-poisson} to speed up evaluation of the characteristic function.
    
 \end{enumerate}

\section{Utility}
\label{sec:utility}

We now consider how much noise the discrete Gaussian adds. As a comparison point, we consider both the continuous Gaussian and, in the interest of a fair comparison, the rounded Gaussian~--~i.e., a sample from the continuous Gaussian rounded to the nearest integral value.
In Figure \ref{fig:plots}, we show how these compare numerically. We see that the tail of the rounded Gaussian stochastically dominates that of the discrete Gaussian. In other words, the utility of the discrete Gaussian is strictly better than the rounded Gaussian (although not by much for reasonable values of $\sigma$, i.e., those which are not very small).

\begin{figure}[h!]
    \centering
    \begin{minipage}{0.64\textwidth}
        \includegraphics[width=\textwidth]{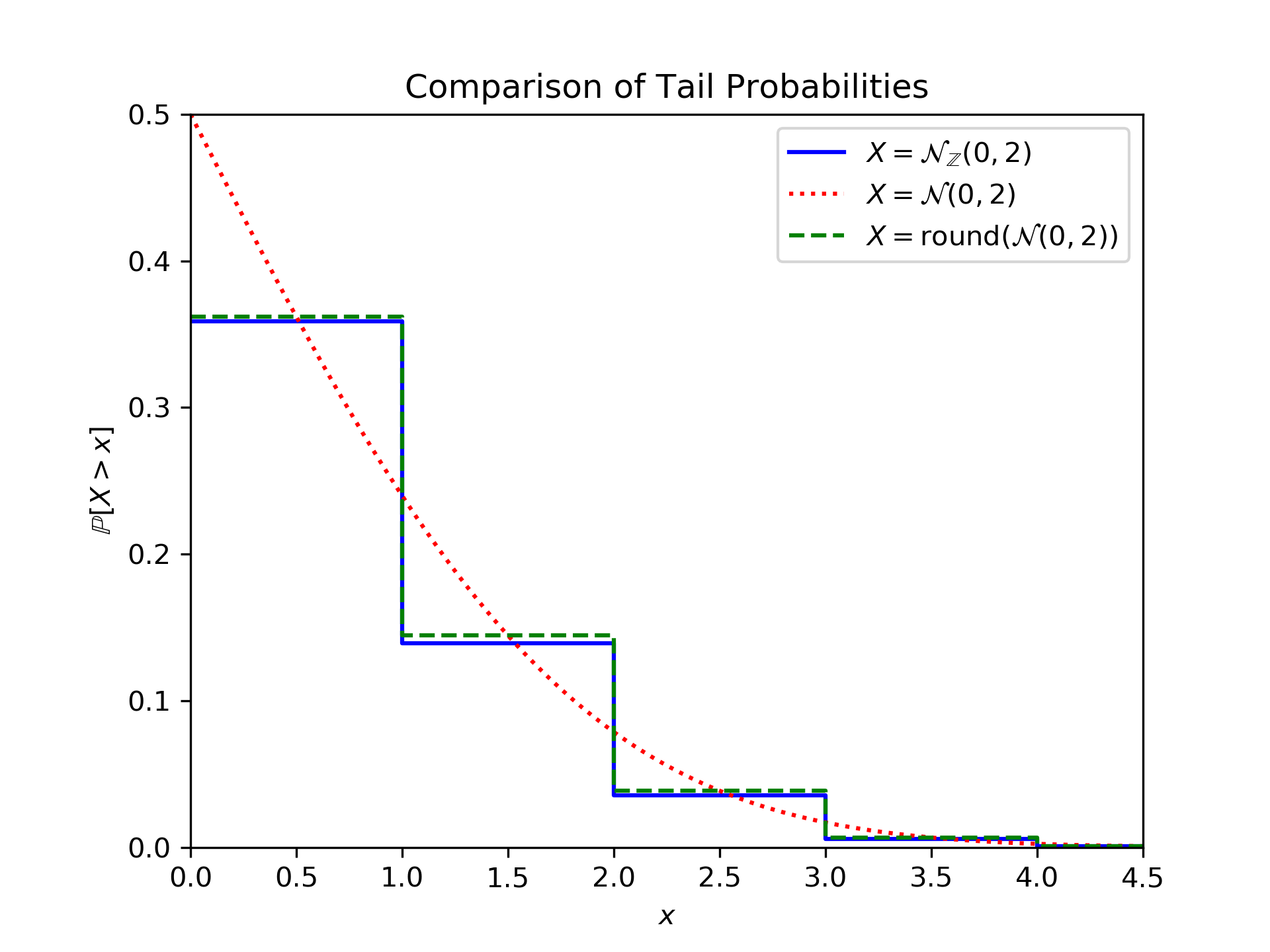}
    \end{minipage}
    \hspace{-10pt}
    \begin{minipage}{0.36\textwidth}
        \includegraphics[width=\textwidth]{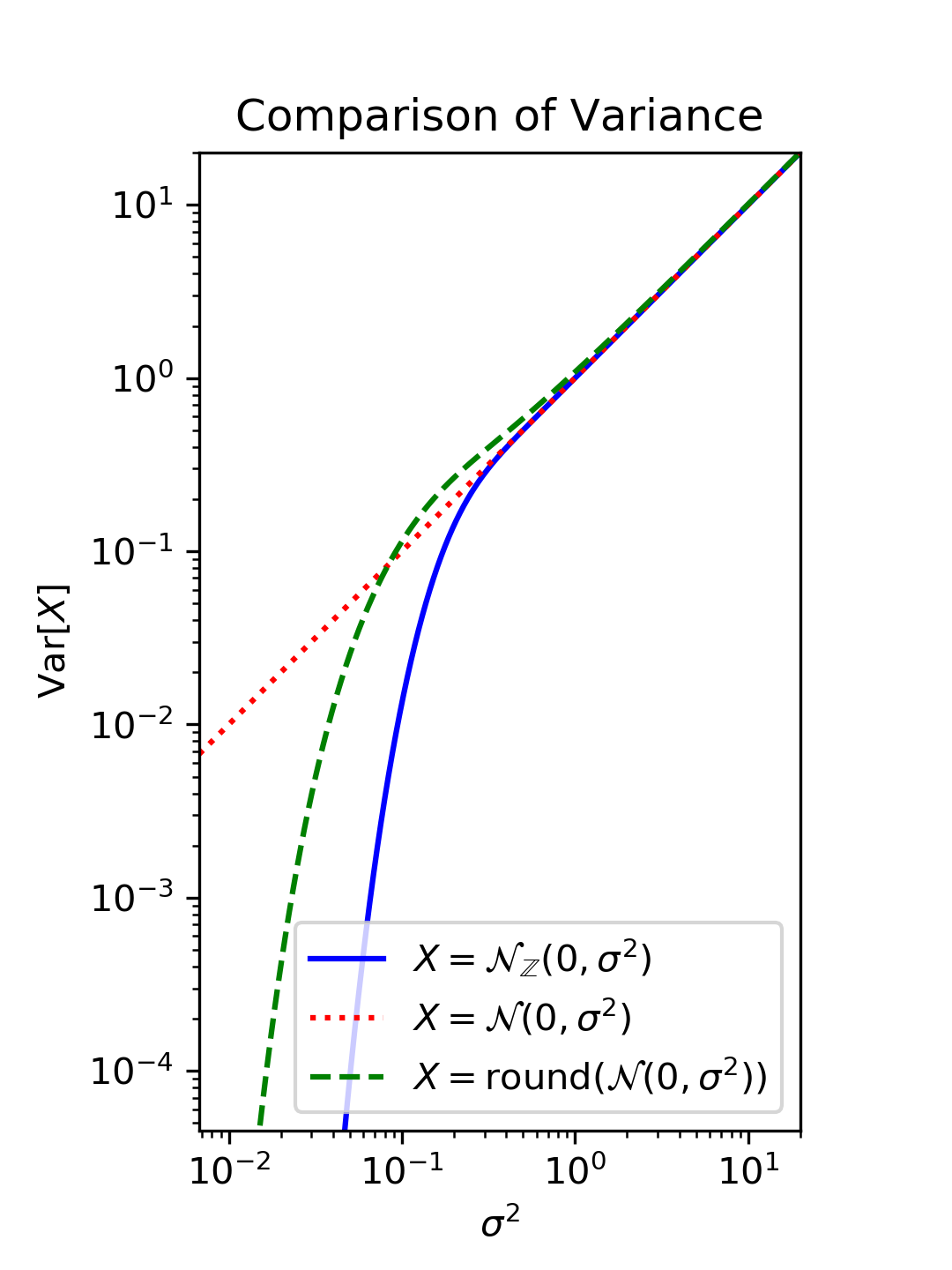}
    \end{minipage}
    \caption{Comparison of tail bounds and variance for continuous, discrete, and rounded Gaussians.}
    \label{fig:plots}
\end{figure}

To obtain analytic bounds, we begin by bounding the moment generating function:
\begin{lem}
Let $t,\sigma\in\mathbb{R}$ with $\sigma>0$. Then $\ex{X \gets \dgauss{\sigma^2}}{e^{tX}}\le e^{t^2\sigma^2/2}$.
\end{lem}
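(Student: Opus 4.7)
The plan is to directly compute the moment generating function by writing out the definition of the discrete Gaussian, complete the square in the exponent, and then invoke Lemma~\ref{lem:poisson} to cancel the remaining shifted sum against the normalization constant.

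Concretely, I would start by writing
\[
\ex{X \gets \dgauss{\sigma^2}}{e^{tX}} = \frac{\sum_{x \in \mathbb{Z}} e^{tx - x^2/2\sigma^2}}{\sum_{y \in \mathbb{Z}} e^{-y^2/2\sigma^2}}.
\]
Then I would complete the square in the numerator: $tx - x^2/2\sigma^2 = -\frac{(x - t\sigma^2)^2}{2\sigma^2} + \frac{t^2\sigma^2}{2}$, so that the numerator factors as $e^{t^2\sigma^2/2}\sum_{x\in\mathbb{Z}} e^{-(x-t\sigma^2)^2/2\sigma^2}$. This reduces the claim to showing that the ratio of $\sum_{x\in\mathbb{Z}} e^{-(x-t\sigma^2)^2/2\sigma^2}$ to $\sum_{y\in\mathbb{Z}} e^{-y^2/2\sigma^2}$ is at most $1$.

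That ratio is bounded by $1$ by applying Lemma~\ref{lem:poisson} with $\mu = t\sigma^2 \in \mathbb{R}$ (which is exactly the general form stated there, since Lemma~\ref{lem:poisson} does not require $\mu$ to be an integer). Multiplying through yields the claimed bound $e^{t^2\sigma^2/2}$.

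I do not anticipate any real obstacles: the proof is essentially a one-line reduction to the Poisson-summation inequality already proved. The only minor subtlety worth noting is that the shift $t\sigma^2$ appearing after completing the square need not be an integer, but Lemma~\ref{lem:poisson} is stated for arbitrary real $\mu$, so this causes no difficulty. It is also worth remarking that, unlike in the continuous Gaussian case, the inequality here can be strict, reflecting the slightly lighter-than-Gaussian tails of $\dgauss{\sigma^2}$ that are consistent with Corollary~\ref{cor:bounds}.
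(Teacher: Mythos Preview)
Your proposal is correct and follows exactly the same approach as the paper: write the moment generating function as a ratio of sums, complete the square to factor out $e^{t^2\sigma^2/2}$, and apply Lemma~\ref{lem:poisson} with $\mu = t\sigma^2$ to bound the remaining ratio by $1$. Your observation that $\mu = t\sigma^2$ need not be an integer is apt and matches the generality of Lemma~\ref{lem:poisson} as stated.
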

\noindent For comparison, recall that the continuous Gaussian satisfies the same bound, but with equality: $\ex{X \gets \mathcal{N}(0,\sigma^2)}{e^{tX}}=e^{t^2\sigma^2/2}$ for all $t,\sigma\in\mathbb{R}$ with $\sigma>0$.
\begin{proof}
By Lemma \ref{lem:poisson},
\begin{align*}
    \ex{X \gets \dgauss{\sigma^2}}{e^{tX}} &= \frac{\sum_{x \in \mathbb{Z}} e^{tx-x^2/2\sigma^2}}{\sum_{y \in \mathbb{Z}} e^{-y^2/2\sigma^2}} = \frac{\sum_{x \in \mathbb{Z}} e^{-(x-t\sigma^2)^2/2\sigma^2} \cdot e^{t^2\sigma^2/2}}{\sum_{y \in \mathbb{Z}} e^{-y^2/2\sigma^2}} \le e^{t^2\sigma^2/2}.
\end{align*}
\end{proof}
The bound on the moment generating function shows that the discrete Gaussian is subgaussian \cite{Rivasplata12}. Standard facts about subgaussian random variables yield bounds on the variance and tails:
\begin{cor}
\label{cor:bounds}
Let $X \gets \dgauss{\sigma^2}$. Then $\var{}{X}\le \sigma^2$ and $\pr{}{X\ge\lambda} \le e^{-\lambda^2/2\sigma^2}$ for all $\lambda \ge 0$.
\end{cor}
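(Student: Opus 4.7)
The proof proposal is to apply two standard consequences of the subgaussian moment generating function bound just established.

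First, for the tail bound, I would use a Chernoff-style argument. For any $t>0$ and $\lambda\ge 0$,
\[
\pr{}{X\ge\lambda}=\pr{}{e^{tX}\ge e^{t\lambda}}\le e^{-t\lambda}\ex{}{e^{tX}}\le e^{-t\lambda+t^2\sigma^2/2},
\]
where the last inequality uses the preceding lemma. Optimizing over $t$ by setting $t=\lambda/\sigma^2$ yields $\pr{}{X\ge\lambda}\le e^{-\lambda^2/2\sigma^2}$, as claimed.

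Second, for the variance bound, the plan is to Taylor-expand both sides of the MGF inequality $\ex{}{e^{tX}}\le e^{t^2\sigma^2/2}$ around $t=0$. The key observation is that $\dgauss{\sigma^2}$ is symmetric about $0$ (its unnormalized density $x\mapsto e^{-x^2/2\sigma^2}$ is even), so $\ex{}{X^{2k+1}}=0$ for all $k\ge 0$; in particular $\ex{}{X}=0$, hence $\var{}{X}=\ex{}{X^2}$. Since the MGF is finite for all real $t$ (again by the lemma), term-by-term differentiation is justified, and the power-series comparison
\[
1+\frac{t^2}{2}\ex{}{X^2}+O(t^4)=\ex{}{e^{tX}}\le e^{t^2\sigma^2/2}=1+\frac{t^2}{2}\sigma^2+O(t^4)
\]
gives, after subtracting $1$, dividing by $t^2/2$, and letting $t\to 0$, the desired inequality $\ex{}{X^2}\le\sigma^2$.

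The only mildly delicate point is the justification of the Taylor expansion of $\ex{}{e^{tX}}$ and the identification of its second derivative at $0$ with $\ex{}{X^2}$; this follows from the fact that the MGF is finite on all of $\mathbb{R}$, so one can differentiate under the summation by dominated convergence. Alternatively, one could avoid the expansion entirely by writing $\ex{}{X^2}=\lim_{t\to 0}\frac{2}{t^2}\bigl(\ex{}{\cosh(tX)}-1\bigr)$ and bounding $\ex{}{\cosh(tX)}\le\frac12(e^{t^2\sigma^2/2}+e^{t^2\sigma^2/2})=e^{t^2\sigma^2/2}$ via the lemma applied to $\pm t$. Either route is entirely routine once the MGF bound is in hand, so no real obstacle is expected.
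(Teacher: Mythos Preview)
Your proposal is correct and matches the paper's approach: the paper does not spell out a proof but simply states that the corollary follows from ``standard facts about subgaussian random variables'' given the MGF bound, and your Chernoff argument for the tail and Taylor-expansion argument for the variance are precisely those standard facts made explicit.
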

Thus the variance of the discrete Gaussian is at most that of the corresponding continuous Gaussian and we also have subgaussian tail bounds.
In fact, it is possible to obtain slightly tighter bounds, showing that the variance of the discrete Gaussian is \emph{strictly less} than that of the continuous Gaussian. We elaborate in the following subsections, providing tighter variance and tail bounds. However, these improvements are most pronounced for small $\sigma$, which is not the typical regime of interest for differential privacy. Nonetheless, these facts may be of independent interest. %

\subsection{A Few Good Facts}
\label{sec:good-facts}
Here, we state and derive some basic and useful facts about the discrete Gaussian, which will be useful in proving tighter bounds. We start with the expectation of $\discN(\mu,\sigma^2)$. It is straightforward to see by a change of index that, for $\mu\in\Z$, one has $\ex{}{\discN(\mu,\sigma^2)} = \mu$; however, the case $\mu\notin\Z$ is not as immediate. Our first result states that, indeed, $\discN(\mu,\sigma^2)$ has mean $\mu$ even for some non-integer $\mu$, specifically, for half-integers:\footnote{Note that this is not true in general for $\mu\notin\frac{1}{2}\Z$, as can be seen, e.g., by observing that as $\sigma\to 0$ the discrete Gaussian tends to a point mass on the integer closest to $\mu$ (or, if a half integer, to a uniform mixture over both closest integers).}
\begin{fact}[Expectation]\label{fact:expectation}
    For all $\sigma \in \R$ with $\sigma>0$, and all $\mu\in\frac{1}{2}\Z$,
    $
       \ex{}{\discN(\mu,\sigma^2)} = \mu\,.
    $
\end{fact}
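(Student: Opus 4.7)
The natural approach is Poisson summation (already invoked in Lemma~\ref{lem:poisson}), since non-integer $\mu$ is exactly the regime where the lattice sum $\sum_{n\in\mathbb{Z}}e^{-(n-\mu)^2/(2\sigma^2)}$ departs from its continuous counterpart. First I would reduce the claim to a vanishing-derivative condition. Setting $Z(\mu)\coloneqq \sum_{n\in\mathbb{Z}} e^{-(n-\mu)^2/(2\sigma^2)}$, differentiation under the sum gives
\[
\sigma^{2}Z'(\mu) \;=\; \sum_{n\in\mathbb{Z}}(n-\mu)\,e^{-(n-\mu)^2/(2\sigma^2)} \;=\; Z(\mu)\cdot\bigl(\mathbb{E}[X]-\mu\bigr),
\]
so the claim is equivalent to showing that $Z$ is constant in $\mu$.

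Next, I would compute $Z(\mu)$ explicitly. With $f(x)=e^{-x^2/(2\sigma^2)}$ having Fourier transform $\hat f(y)=\sqrt{2\pi\sigma^2}\,e^{-2\pi^2\sigma^2 y^2}$, the shift identity $\sum_n f(n-\mu)=\sum_y \hat f(y)e^{-2\pi i y\mu}$ from the proof of Lemma~\ref{lem:poisson} yields
\[
Z(\mu) \;=\; \sqrt{2\pi\sigma^2}\sum_{y\in\mathbb{Z}} e^{-2\pi^2\sigma^2 y^2}\, e^{-2\pi i y\mu},
\]
and pairing $y$ with $-y$ collapses the first-moment identity above into the real series
\[
\mathbb{E}[X]-\mu \;=\; -\frac{4\pi\sigma^2\sqrt{2\pi\sigma^2}}{Z(\mu)}\sum_{y\geq 1} y\, e^{-2\pi^2\sigma^2 y^2}\sin(2\pi y\mu).
\]

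The main obstacle is precisely the substantive content of the claim: showing that this sine series vanishes. It clearly does so whenever $2\mu\in\mathbb{Z}$, since then $\sin(2\pi y\mu)=0$ for every integer $y$; this recovers the easy case $\mu\in\mathbb{Z}$ noted in the text (and also $\mu\in\mathbb{Z}+\tfrac12$ by reflection symmetry of the summand). For generic real $\mu$, however, the leading $y=1$ contribution $e^{-2\pi^2\sigma^2}\sin(2\pi\mu)$ is nonzero, so bare Poisson summation only yields $\mathbb{E}[X]=\mu + O(\sigma^2 e^{-2\pi^2\sigma^2})$ -- an exponentially small but apparently non-vanishing gap. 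Closing it to the advertised exact equality would require either an additional symmetry of the discrete Gaussian beyond those visible in the Fourier decomposition (the expansion of $Z$ above manifestly has nonzero higher modes), or an algebraic identity I do not see how to produce. My honest expectation is therefore that a careful execution of this plan establishes only the approximate version $\mathbb{E}[X]=\mu+O(\sigma^2 e^{-2\pi^2\sigma^2})$, and that obtaining exact equality for all real $\mu$ will require either a restriction such as $2\mu\in\mathbb{Z}$ or an argument fundamentally different from Poisson summation.
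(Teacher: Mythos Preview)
Your approach via Poisson summation is exactly the route the paper takes, and your skepticism is well founded: the statement as written is false for generic $\mu\notin\tfrac12\mathbb{Z}$, and the paper's proof contains an error precisely at the point you isolate. The paper applies Poisson summation to both numerator and denominator, obtaining
\[
\mathbb{E}[X] \;=\; \frac{\mu\sum_{n\in\mathbb{Z}} e^{-2\pi^2 n^2\sigma^2-2\pi i n\mu} \;-\; 2\pi i\sigma^2\sum_{n\in\mathbb{Z}} n\,e^{-2\pi^2 n^2\sigma^2-2\pi i n\mu}}{\sum_{n\in\mathbb{Z}} e^{-2\pi^2 n^2\sigma^2-2\pi i n\mu}},
\]
and then asserts that ``the second sum in the numerator is zero.'' Pairing $n$ with $-n$ shows that this sum equals $-2i\sum_{n\ge1} n\,e^{-2\pi^2 n^2\sigma^2}\sin(2\pi n\mu)$, which is exactly your sine series and is nonzero whenever $2\mu\notin\mathbb{Z}$ (indeed, for large $\sigma$ the $n=1$ term dominates). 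So the paper's concluding step is simply incorrect.

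The Fact itself fails: take $\sigma$ small and $\mu=0.3$, so that the discrete Gaussian places essentially all its mass at $0$ and has mean near $0$, not $0.3$. Your diagnosis---exact equality only for $2\mu\in\mathbb{Z}$, with an $O(\sigma^2 e^{-2\pi^2\sigma^2})$ correction otherwise---is the correct conclusion. There is no missing idea on your side; the discrepancy lies in the paper.
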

\begin{proof}
By the Poisson summation formula \cite{poisson,poisson2},
\[
        \ex{}{\discN(\mu,\sigma^2)}
        = \frac{\sum_{n\in\Z} n e^{-(n-\mu)^2/(2\sigma^2)}}{\sum_{n\in\Z} e^{-(n-\mu)^2/(2\sigma^2)}}
        = \frac{\sum_{n\in\Z} \hat{f}(n)}{\sum_{n\in\Z} \hat{g}(n)},
\]
where $f(x) = xe^{-(n-\mu)^2/(2\sigma^2)}$ and $g(x) = xe^{-(n-\mu)^2/(2\sigma^2)}$. For $t\in\R$, we can compute their Fourier transforms as
\begin{align*}
    \hat{f}(t) &= \int_{\R} g(x) e^{-2\pi i x t}\,\dx = \sqrt{2\pi\sigma^2}(\mu-2\pi i \sigma^2 t) e^{-2\pi^2t^2\sigma^2 - 2\pi i t \mu}\\
    \hat{g}(t) &= \int_{\R} f(x) e^{-2\pi i x t}\,\dx = \sqrt{2\pi\sigma^2}e^{-2\pi^2t^2\sigma^2 - 2\pi i t \mu}
\end{align*}
so that
\[
        \ex{}{\discN(\mu,\sigma^2)}
        = \frac{\mu \sum_{n\in\Z}  e^{-2\pi^2n^2\sigma^2 - 2\pi i n \mu}  -  2\pi i \sigma^2\sum_{n\in\Z} n e^{-2\pi^2n^2\sigma^2 - 2\pi i n \mu} }{\sum_{n\in\Z} e^{-2\pi^2n^2\sigma^2 - 2\pi i n \mu} }
        = \mu,
\]
as the second sum in the numerator is zero:
\[
    \sum_{n\in\Z} n e^{-2\pi^2n^2\sigma^2 - 2\pi i n \mu}
    = 2\sum_{n=1}^\infty n e^{-2\pi^2n^2\sigma^2} (e^{- 2\pi i n \mu}-e^{2\pi i n \mu})
    = -4\sum_{n=1}^\infty n e^{-2\pi^2n^2\sigma^2} \underbrace{\sin (2\mu n \pi)}_{=0}
\]
the last part using that $\mu$ is a half-integer.
\end{proof}

We now turn to the normalization constant of $\discN(0,\sigma^2)$, comparing it to the normalization constant $\sqrt{2\pi\sigma^2}$ of the corresponding continuous Gaussian.
\begin{fact}[Normalization constant]\label{fact:normalization:constant}
    For all $\sigma \in \R$ with $\sigma>0$,
    \begin{equation}
        \max\{ \sqrt{2\pi\sigma^2} , 1\}
        \leq \sum_{n\in\Z} e^{-n^2/(2\sigma^2)} 
        \leq \sqrt{2\pi\sigma^2} + 1\,.
    \end{equation}
\end{fact}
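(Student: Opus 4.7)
The plan is to split the claim into three separate inequalities: the two lower bounds $1$ and $\sqrt{2\pi\sigma^2}$, and the upper bound $\sqrt{2\pi\sigma^2}+1$. The lower bound of $1$ is immediate, since the $n=0$ term contributes $e^0 = 1$ and every other term in the series is strictly positive.

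For the lower bound of $\sqrt{2\pi\sigma^2}$, I would reuse the Poisson summation computation from the proof of Lemma~\ref{lem:poisson}. Applying Poisson summation to $f(x) = e^{-x^2/(2\sigma^2)}$ (whose Fourier transform is $\hat{f}(y) = \sqrt{2\pi\sigma^2}\, e^{-2\pi^2\sigma^2 y^2}$) with the shift $t=0$, one obtains
\[
\sum_{n\in\Z} e^{-n^2/(2\sigma^2)} \;=\; \sum_{y\in\Z} \hat{f}(y) \;=\; \sqrt{2\pi\sigma^2}\cdot\sum_{y\in\Z} e^{-2\pi^2\sigma^2 y^2}.
\]
Since every summand on the right is positive and the $y=0$ term equals $1$, this is at least $\sqrt{2\pi\sigma^2}$, as desired.

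For the upper bound, I would use a straightforward series-integral comparison exploiting that $x\mapsto e^{-x^2/(2\sigma^2)}$ is decreasing on $[0,\infty)$ and increasing on $(-\infty,0]$. Concretely, for each integer $n\ge 1$, $e^{-n^2/(2\sigma^2)} \le \int_{n-1}^{n} e^{-x^2/(2\sigma^2)}\,\mathrm{d}x$, and by symmetry for each $n \le -1$, $e^{-n^2/(2\sigma^2)} \le \int_{n}^{n+1} e^{-x^2/(2\sigma^2)}\,\mathrm{d}x$. Summing and separating off the $n=0$ term gives
\[
\sum_{n\in\Z} e^{-n^2/(2\sigma^2)} \;\le\; 1 + \int_{-\infty}^{\infty} e^{-x^2/(2\sigma^2)}\,\mathrm{d}x \;=\; 1+\sqrt{2\pi\sigma^2}.
\]

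I do not expect any genuine difficulty here: the only minor care needed is getting the monotonicity intervals correct so that each discrete term is dominated by the correct unit integral, and making sure to pull out the $n=0$ term (which is not covered by the monotonicity-based integral bound). The Poisson summation step is already worked out in the proof of Lemma~\ref{lem:poisson}, so it can simply be invoked.
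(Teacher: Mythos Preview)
Your proposal is correct and follows essentially the same route as the paper: the lower bound of $1$ from the $n=0$ term, the lower bound $\sqrt{2\pi\sigma^2}$ via Poisson summation (keeping only the $y=0$ term), and the upper bound via a series-integral comparison after separating off the $n=0$ term. The only cosmetic difference is that the paper exploits symmetry to write $1+2\sum_{n\ge 1}$ before comparing to the integral, whereas you handle the positive and negative tails separately; the content is identical.
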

\begin{proof}
We first show the lower bound. Clearly $\sum_{n\in\Z} e^{-n^2/(2\sigma^2)} \ge e^{-0^2/2\sigma^2}=1$. By the Poisson summation formula,
\[
        \sum_{n\in\Z} e^{-n^2/(2\sigma^2)} 
        = \sum_{n\in\Z} \sqrt{2\pi\sigma^2} \cdot e^{-2\pi^2\sigma^2n^2} 
        \geq \sqrt{2\pi\sigma^2} \cdot 1.
\]
As for the upper bound, it follows from a standard comparison between series and integral:
\[
        \sum_{n\in\Z} e^{-n^2/(2\sigma^2)} 
        = 1+ 2\sum_{n=1}^\infty e^{-n^2/(2\sigma^2)}
        \leq 1+ 2\sum_{n=1}^\infty \int_{n-1}^n e^{-x^2/(2\sigma^2)} \dx 
        = 1+ 2\int_{0}^\infty e^{-x^2/(2\sigma^2)} \dx\,. 
\]
\end{proof}
The above bounds, albeit simple to obtain, are not quite as tight as they could be. We state below a refinement, which can be found, e.g., in~\cite[Claim~2.8.1]{Stephens-Davidowitz17}:
\begin{fact}[Normalization constant, refined]\label{fact:normalization:constant:better}
    For all $\sigma \in \R$ with $\sigma>0$,
    \begin{equation}
        \sqrt{2\pi\sigma^2} \cdot (1+2e^{-2\pi^2\sigma^2}) 
            \leq \sum_{n\in\Z} e^{-n^2/(2\sigma^2)} 
            \leq \sqrt{2\pi\sigma^2} \cdot (1+2e^{-2\pi^2\sigma^2}) + e^{-2\pi^2\sigma^2}
    \end{equation}
    and
    \begin{equation}
        1+2e^{-1/(2\sigma^2)}
            \leq \sum_{n\in\Z} e^{-n^2/(2\sigma^2)} 
            \leq 1+2e^{-1/(2\sigma^2)} + \sqrt{2\pi\sigma^2} e^{-1/(2\sigma^2)}
    \end{equation}
    The first set of bounds is better for $\sigma \geq \frac{1}{\sqrt{2\pi}}$, and the second for $\sigma < \frac{1}{\sqrt{2\pi}}$.
\end{fact}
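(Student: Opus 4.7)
The plan is to derive both pairs of bounds from a single recipe: express the sum as its few largest terms plus a non-negative tail, then control that tail by a series-to-integral comparison. Which terms dominate depends on whether one works with the original sum or with its Poisson-summed form, and this is exactly why there are two distinct pairs of bounds, one sharper for large $\sigma$ and one for small $\sigma$. For the first pair I would apply Poisson summation to $f(x) = e^{-x^2/(2\sigma^2)}$, whose Fourier transform is $\hat f(y) = \sqrt{2\pi\sigma^2}\, e^{-2\pi^2\sigma^2 y^2}$, exactly as in the proof of Lemma~\ref{lem:poisson}, and peel off the $n = 0, \pm 1$ terms on the frequency side, writing
\[
    \sum_{n \in \Z} e^{-n^2/(2\sigma^2)} = \sqrt{2\pi\sigma^2}\left(1 + 2e^{-2\pi^2\sigma^2} + 2\sum_{n \ge 2} e^{-2\pi^2\sigma^2 n^2}\right).
\]
The lower bound is then immediate since the tail is non-negative. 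For the matching upper bound I would bound the tail by $\int_1^\infty e^{-2\pi^2\sigma^2 x^2}\, \dx$ using monotonicity of the integrand on $[1,\infty)$, substitute $x = u + 1$, and discard the factor $e^{-4\pi^2\sigma^2 u} \le 1$; the resulting half-Gaussian integral evaluates to $\tfrac{1}{2\sqrt{2\pi\sigma^2}}\,e^{-2\pi^2\sigma^2}$, which multiplied by $2\sqrt{2\pi\sigma^2}$ is precisely the $e^{-2\pi^2\sigma^2}$ slack in the claimed upper bound.

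The second pair of bounds is handled by the same recipe applied to the \emph{original} series instead of its dual. Writing
\[
    \sum_{n \in \Z} e^{-n^2/(2\sigma^2)} = 1 + 2e^{-1/(2\sigma^2)} + 2\sum_{n \ge 2} e^{-n^2/(2\sigma^2)}
\]
gives the lower bound by dropping the tail. For the upper bound, the same monotonicity step yields $\sum_{n \ge 2} e^{-n^2/(2\sigma^2)} \le \int_1^\infty e^{-x^2/(2\sigma^2)}\, \dx$, and the substitution $x = u+1$ together with $e^{-u/\sigma^2} \le 1$ bounds this integral by $e^{-1/(2\sigma^2)} \int_0^\infty e^{-u^2/(2\sigma^2)}\, \dx[u] = \tfrac{1}{2}\sqrt{2\pi\sigma^2}\, e^{-1/(2\sigma^2)}$; doubling gives exactly the claimed slack $\sqrt{2\pi\sigma^2}\, e^{-1/(2\sigma^2)}$.

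There is no real obstacle in either calculation; the only point requiring care is the change-of-variable trick that isolates the leading exponential factor from the tail integral so that the remaining pure Gaussian integral exactly cancels the square-root prefactor. The crossover at $\sigma = 1/\sqrt{2\pi}$ is the natural fixed point of the duality $\sigma \leftrightarrow 1/(2\pi\sigma)$ underlying Poisson summation: there $\sqrt{2\pi\sigma^2} = 1$ and $2\pi^2\sigma^2 = 1/(2\sigma^2) = \pi$, so the two upper bounds coincide, and away from this point each is sharper on the side where its own geometric series converges faster.
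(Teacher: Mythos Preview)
Your argument is correct and complete. The Poisson-summation identity, the peeling of the $n=0,\pm1$ terms, the series-to-integral comparison for the tail, and the substitution $x=u+1$ followed by dropping the cross term all check out exactly as you describe; the two Gaussian half-integrals evaluate to $\tfrac{1}{2\sqrt{2\pi\sigma^2}}$ and $\tfrac{1}{2}\sqrt{2\pi\sigma^2}$ respectively, producing precisely the slack terms claimed. Your observation about the self-dual point $\sigma=1/\sqrt{2\pi}$ is also accurate.

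Note, however, that the paper does not actually prove this fact: it simply states the refined bounds and points to \cite[Claim~2.8.1]{Stephens-Davidowitz17} for a reference. So there is no ``paper's own proof'' to compare against here. Your write-up therefore supplies something the paper omits, and does so cleanly; the approach you take (Poisson summation plus a tail bound via monotone series-integral comparison) is exactly the natural one and matches the spirit of the arguments used elsewhere in the paper for Lemma~\ref{lem:poisson} and Fact~\ref{fact:normalization:constant}.
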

\noindent The bounds obtained in Fact~\ref{fact:normalization:constant:better} are depicted in the figure below.
\begin{figure}[ht]
    \centering
    \includegraphics[width=0.75\textwidth]{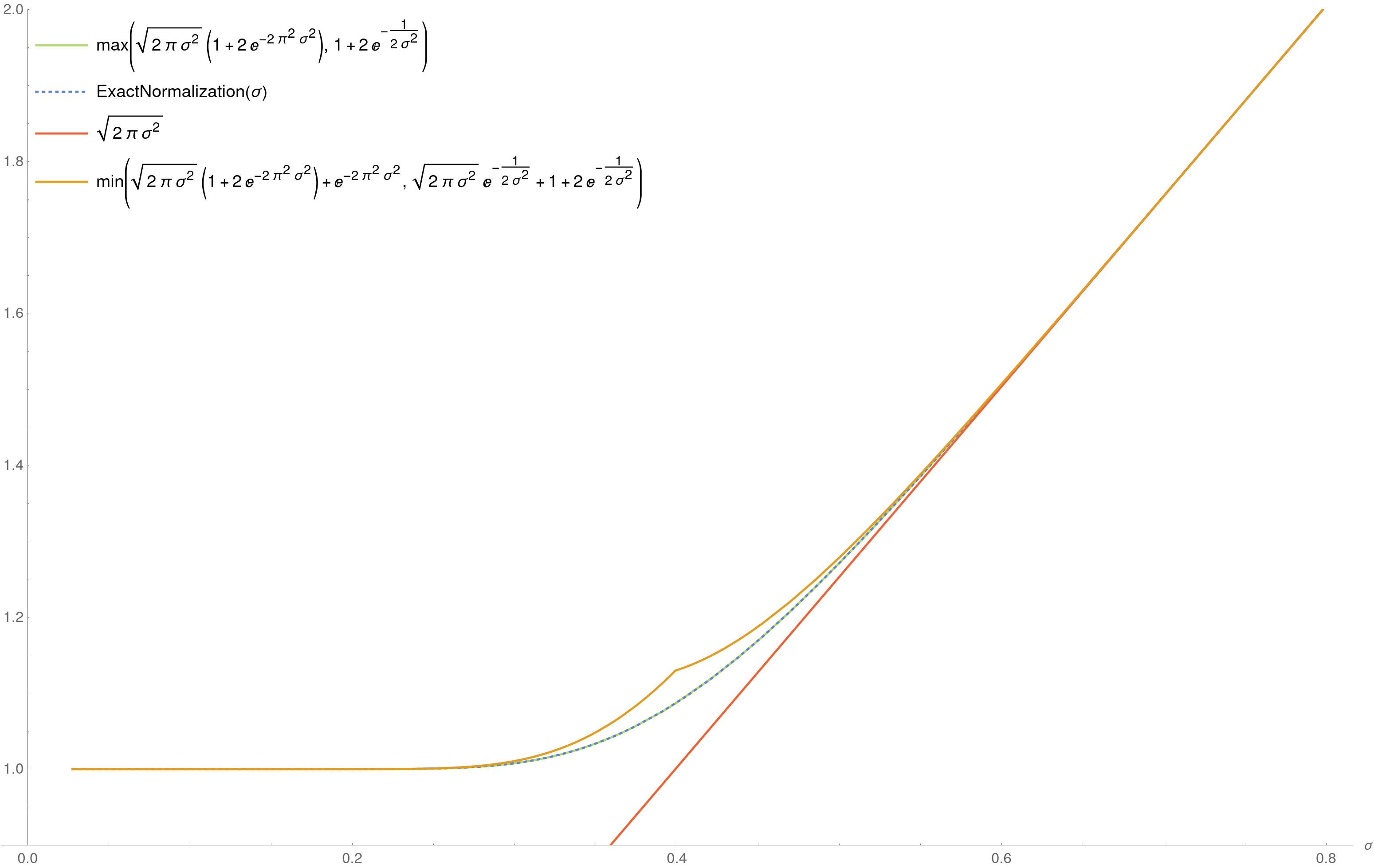}
    \caption{Bounds from Fact~\ref{fact:normalization:constant:better} on the normalization constant $\sum_{n\in\Z} e^{-n^2/(2\sigma^2)}$, as a function of $\sigma$. Note that the normalization constant of the continuous Gaussian, $\sqrt{2\pi\sigma^2}$ (in orange) becomes a very accurate approximation for $\sigma \gg 1$; however, for $\sigma \ll 1$, it is not, as the upper and lower bound from Fact~\ref{fact:normalization:constant:better} both converge towards $1$, as expected. Interestingly, we see that the lower bound (green) empirically seems to be nearly tight, as it appears to coincide with the exact expression of the normalization constant (dotted blue) for all $\sigma >0$. The discontinuity in the upper bound (orange) happens at $\sigma=\frac{1}{\sqrt{2\pi}}$.}
    \label{fig:normalization:constant}
\end{figure}
\subsection{Tighter Variance and Tail Bounds}
\label{sec:tighter-bounds}
We now analyze the variance of the discrete Gaussian, showing that it is stricty smaller than that of the corresponding continuous Gaussian (and asymptotically the same), with a much better variance for small $\sigma$.
\begin{prop}[Variance]
  \label{prop:variance:upper:bound}
    For all $\sigma \in \R$ with $\sigma>0$,
\begin{equation}
\var{}{\discN(0,\sigma^2)} \leq \sigma^2 \left(1 - \frac{4\pi^2\sigma^2}{e^{4\pi^2\sigma^2}-1} \right)<\sigma^2
\end{equation}
Moreover, if $\sigma^2\le 1/3$ then $\var{}{\discN(0,\sigma^2)} \le 3 \cdot e^{-1/2\sigma^2}$.
\end{prop}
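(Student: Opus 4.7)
The plan is to turn the variance into a form where the Poisson summation formula gives us the exact cancellation structure needed. Set $f(x) = e^{-x^2/(2\sigma^2)}$ and $g(x) = x^2 f(x)$. Their Fourier transforms (with the same convention as in Lemma~\ref{lem:poisson}) are $\hat f(t) = \sqrt{2\pi\sigma^2}\, e^{-2\pi^2\sigma^2 t^2}$ and, via $\widehat{x^2 f}(t) = -\frac{1}{4\pi^2}\hat f''(t)$, $\hat g(t) = \sqrt{2\pi\sigma^2}\,\sigma^2\,(1 - 4\pi^2\sigma^2 t^2)\,e^{-2\pi^2\sigma^2 t^2}$. Applying Poisson summation to both and dividing, the $\sqrt{2\pi\sigma^2}$ factors cancel and we obtain the exact identity
\[
\var{}{\discN(0,\sigma^2)} \;=\; \sigma^2 \;-\; 4\pi^2\sigma^4 \cdot \tilde{R}, \qquad \tilde{R} \;:=\; \frac{\sum_{n\in\mathbb{Z}} n^2 e^{-2\pi^2\sigma^2 n^2}}{\sum_{n\in\mathbb{Z}} e^{-2\pi^2\sigma^2 n^2}}.
\]
The first claimed inequality is thus equivalent to the lower bound $\tilde{R} \ge 1/(e^{4\pi^2\sigma^2}-1)$.

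To prove this lower bound, set $\beta := 2\pi^2\sigma^2$. The plan is to be crude but clean: retain only the $n = \pm 1$ term in the numerator, giving $\sum_n n^2 e^{-\beta n^2} \ge 2 e^{-\beta}$, and use the pointwise inequality $e^{-\beta n^2} \le e^{-\beta|n|}$ for $|n| \ge 1$ (since $n^2 \ge |n|$ there) to bound the denominator by a geometric series, yielding $\sum_n e^{-\beta n^2} \le 1 + 2e^{-\beta}/(1-e^{-\beta}) = (e^\beta + 1)/(e^\beta - 1)$. Combining gives $\tilde R \ge 2(e^\beta - 1)/(e^{2\beta} + e^\beta)$, and checking this against $1/(e^{2\beta}-1)$ reduces after clearing denominators to the elementary quadratic inequality $2(e^\beta - 1)^2 \ge e^\beta$, i.e., $2u^2 - 5u + 2 \ge 0$ for $u = e^\beta$. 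This holds exactly when $u \ge 2$, i.e., $\beta \ge \log 2$, covering the regime $\sigma^2 \ge \log 2 / (2\pi^2)$.

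For the complementary small-$\sigma$ regime $\sigma^2 < \log 2/(2\pi^2)$, which sits strictly inside the range $\sigma^2 \le 1/3$ of the second inequality, the plan is to prove the second inequality first and then deduce the first from it. The second bound is routine: with $q := e^{-1/(2\sigma^2)} \le e^{-3/2}$, using $T \ge 1$ from Fact~\ref{fact:normalization:constant} gives $\var{}{\discN(0,\sigma^2)} \le 2\sum_{n\ge 1} n^2 q^{n^2} = 2q + 8 q^4 + 18 q^9 + \cdots$, so the claim $\le 3q$ follows from $8 q^3 + 18 q^8 + 32 q^{15} + \cdots \le 1$, which is immediate since already $8 e^{-9/2} < 0.1$ and subsequent terms decay super-geometrically. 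To transfer this to the first bound in the small-$\sigma$ range, I would verify the one-variable inequality $3 e^{-1/(2\sigma^2)} \le \sigma^2\bigl(1 - 4\pi^2\sigma^2/(e^{4\pi^2\sigma^2}-1)\bigr)$ on $\sigma^2 \in (0, \log 2/(2\pi^2))$: using $x/(e^x-1) = 1 - x/2 + O(x^2)$ the right-hand side is $\sim 2\pi^2\sigma^4$, which dwarfs the exponentially small left-hand side by a wide margin throughout this range.

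The main obstacle is really the unavoidable case split: the $n = \pm 1$ truncation is genuinely too lossy once $e^{2\beta} < 2$, since then even the sign of the numerator's contribution past $n = \pm 1$ cannot be controlled term-by-term. Luckily the \emph{first} bound becomes very loose exactly where the $n = \pm 1$ argument fails (both the bound and the variance are of order $\sigma^4$ or smaller there, while the bound has polynomial slack while the true variance is exponentially small), which is what makes the comparison with the second bound go through cleanly.
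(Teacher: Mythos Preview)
Your argument is correct, including the case split and the transfer step (the final one-variable inequality is easy to make fully rigorous, e.g.\ by noting that $1-x/(e^x-1)\ge x/2-x^2/4$ and then checking monotonicity of the log of the resulting comparison on the relevant interval). The approach, however, is genuinely different from the paper's at the key step.

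Both you and the paper start from the same Poisson-summation identity, which the paper records as Lemma~\ref{lem:upper=lower}: your $\tilde R$ is exactly $\var{}{\discN(0,\tau^2)}$ with $\tau=1/(2\pi\sigma)$, so the identity reads $\var{}{\discN(0,\sigma^2)}=\sigma^2\bigl(1-4\pi^2\sigma^2\,\var{}{\discN(0,1/(4\pi^2\sigma^2))}\bigr)$. The paper then exploits this structural observation rather than estimating $\tilde R$ directly: it invokes the universal variance lower bound of Proposition~\ref{prop:variance:general:lower:bound} (a Cauchy--Schwarz argument on the privacy loss), which gives $\var{}{\discN(0,\tau^2)}\ge 1/(e^{1/\tau^2}-1)=1/(e^{4\pi^2\sigma^2}-1)$ for \emph{all} $\sigma>0$ in one stroke. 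This is exactly the inequality you want for $\tilde R$, so no case split is needed. Your route (truncate the numerator at $n=\pm1$, dominate the denominator by a geometric series) is more elementary and self-contained---it does not rely on the Cauchy--Schwarz lemma---but the price is the breakdown below $\beta=\log 2$ and the consequent detour through the second inequality. For the second inequality itself, your direct term-by-term estimate is fine; the paper instead uses $(m+1)^2\ge 2m+1$ to reduce to the closed-form sum $\sum_{m\ge0}(m+1)^2 x^m=(1+x)/(1-x)^3$ with $x=e^{-1/\sigma^2}$, which is a minor stylistic difference.
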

To prove Proposition \ref{prop:variance:upper:bound} we use the following lemma which relates \emph{upper} bounds on the variance of a discrete Gaussian to \emph{lower} bounds on it, and vice-versa.
\begin{lem}\label{lem:upper=lower}
For $\sigma>0$, 
\begin{equation}
    \label{eq:relation:variances}
\var{}{\discN(0,\sigma^2)} = \sigma^2 (1 - 4\pi^2\sigma^2\var{}{\discN(0,1/(4\pi^2\sigma^2))} )\,.
\end{equation}
\end{lem}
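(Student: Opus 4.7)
The plan is to use the Poisson summation formula in the same spirit as Lemma~\ref{lem:poisson} and Fact~\ref{fact:expectation}, but applied to the function $g(x) = x^2 e^{-x^2/(2\sigma^2)}$ so as to extract the numerator of the variance expression $\var{}{\discN(0,\sigma^2)} = \sum_{n\in\Z} n^2 e^{-n^2/(2\sigma^2)} / \sum_{n\in\Z} e^{-n^2/(2\sigma^2)}$. The duality between the parameters $\sigma^2$ and $1/(4\pi^2\sigma^2)$ in the statement is exactly what Fourier duality produces for Gaussians, so this should pop out of the computation once the Fourier transform of $g$ is in hand.

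First I would write $f(x) = e^{-x^2/(2\sigma^2)}$ and recall that $\hat{f}(y) = \sqrt{2\pi\sigma^2}\, e^{-2\pi^2\sigma^2 y^2}$. Then I would compute $\hat{g}$ from $\hat f$ via the standard identity $\widehat{x^2 f(x)}(y) = -\tfrac{1}{4\pi^2}\hat f''(y)$, which a short calculation yields
\begin{equation}
    \hat{g}(y) = \sigma^2 \sqrt{2\pi\sigma^2}\, e^{-2\pi^2\sigma^2 y^2}\bigl(1 - 4\pi^2\sigma^2 y^2\bigr).
\end{equation}
Applying Poisson summation to $f$ and to $g$ then gives
\begin{equation}
    \sum_{n\in\Z} e^{-n^2/(2\sigma^2)} = \sqrt{2\pi\sigma^2}\sum_{k\in\Z} e^{-2\pi^2\sigma^2 k^2}, \qquad
    \sum_{n\in\Z} n^2 e^{-n^2/(2\sigma^2)} = \sigma^2 \sqrt{2\pi\sigma^2}\sum_{k\in\Z} e^{-2\pi^2\sigma^2 k^2}\bigl(1-4\pi^2\sigma^2 k^2\bigr).
\end{equation}

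Taking the ratio of the two expressions cancels the $\sqrt{2\pi\sigma^2}$ factor and yields
\begin{equation}
    \var{}{\discN(0,\sigma^2)} = \sigma^2\left(1 - 4\pi^2\sigma^2 \cdot \frac{\sum_{k\in\Z} k^2 e^{-2\pi^2\sigma^2 k^2}}{\sum_{k\in\Z} e^{-2\pi^2\sigma^2 k^2}}\right).
\end{equation}
Finally, I would observe that setting $\tilde\sigma^2 \eqdef 1/(4\pi^2\sigma^2)$ turns $e^{-2\pi^2\sigma^2 k^2}$ into $e^{-k^2/(2\tilde\sigma^2)}$, so the Fourier-side ratio is precisely $\var{}{\discN(0,1/(4\pi^2\sigma^2))}$. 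Substituting this in yields \eqref{eq:relation:variances}.

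I do not expect a serious obstacle here: both series converge absolutely (Gaussian decay) so Poisson summation applies without fuss, and the Fourier transform of $x^2 f(x)$ is a standard textbook computation. The only mild care needed is the $(-1/4\pi^2)$ sign in the derivative-vs-multiplication rule (which convention one uses for the Fourier transform matters), but the Fourier transform convention is already fixed in the proof of Lemma~\ref{lem:poisson}, so this is just bookkeeping.
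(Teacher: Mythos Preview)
Your proposal is correct and is essentially the same argument as the paper's own proof: apply Poisson summation to both the numerator $\sum_n n^2 e^{-n^2/(2\sigma^2)}$ and the denominator $\sum_n e^{-n^2/(2\sigma^2)}$, compute the Fourier transforms of $e^{-x^2/(2\sigma^2)}$ and $x^2 e^{-x^2/(2\sigma^2)}$, take the ratio, and recognize the Fourier-side quotient as the variance of $\discN(0,1/(4\pi^2\sigma^2))$. The only cosmetic difference is that you obtain the Fourier transform of $x^2 e^{-x^2/(2\sigma^2)}$ via the identity $\widehat{x^2 f}(y)=-\tfrac{1}{4\pi^2}\hat f''(y)$ rather than by direct integration, and you swap the names $f$ and $g$ relative to the paper.
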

\begin{proof}
By applying the Poisson summation formula to both numerator and denominator of the variance, we have
\begin{align*}
    \var{}{\discN(0,\sigma^2)} 
    &= \frac{\sum_{n\in \Z} n^2 e^{-n^2/(2\sigma^2)}}{\sum_{n\in\Z} e^{-n^2/(2\sigma^2)}}
    = \frac{\sum_{n\in \Z} f(n)}{\sum_{n\in \Z} g(n)}
    = \frac{\sum_{n\in \Z} \hat{f}(n)}{\sum_{n\in \Z} \hat{g}(n)}
\end{align*}
where $f(x) = x^2 e^{-x^2/(2\sigma^2)}$, $g(x) = e^{-x^2/(2\sigma^2)}$. Now, for $t\in\R$, we can compute
\begin{align*}
    \hat{f}(t) &= \int_{\R} f(x) e^{-2\pi i x t}\,\dx = \sqrt{2\pi} \sigma^3 e^{-2\pi^2t^2\sigma^2} (1- 4\pi^2t^2\sigma^2)\\
    \hat{g}(t) &= \int_{\R} g(x) e^{-2\pi i x t}\,\dx = \sqrt{2\pi} \sigma e^{-2\pi^2t^2\sigma^2}.
\end{align*}
Thus
\begin{align*}
    \var{}{\discN(0,\sigma^2)} 
    &= \sigma^2\frac{\sum_{n\in \Z} e^{-2\pi^2n^2\sigma^2} (1- 4\pi^2n^2\sigma^2) }{\sum_{n\in \Z} e^{-2\pi^2n^2\sigma^2}}
    = \sigma^2\left( 1 - 4\pi^2\sigma^2\frac{\sum_{n\in \Z} n^2e^{-2\pi^2n^2\sigma^2} }{\sum_{n\in \Z} e^{-2\pi^2n^2\sigma^2}}\right) \\
    &= \sigma^2\left( 1 - 4\pi^2\sigma^2\frac{\sum_{n\in \Z} n^2 e^{-n^2/(2\tau^2)} }{\sum_{n\in \Z} e^{-n^2/(2\tau^2)}}\right)
    = \sigma^2\left( 1 - 4\pi^2\sigma^2\var{}{\discN(0,\tau^2)} \right)
\end{align*}
where we set $\tau \eqdef \frac{1}{2\pi\sigma}$.
\end{proof}

Next we have a lower bound on the variance. 

\begin{prop}[Universal Variance Lower Bound]\label{prop:variance:general:lower:bound}
Let $X$ be a distribution on $\mathbb{R}$ such that $\dr{2}{X+1}{X} \le \varepsilon^2$. Then
\begin{equation}
    \var{}{X} \geq \frac{1}{e^{\varepsilon^2}-1}
\end{equation}
\end{prop}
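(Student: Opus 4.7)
The plan is to recognize this as a Hammersley--Chapman--Robbins-style variance lower bound applied to the translation family $\{X+\theta\}_\theta$. The R\'enyi divergence of order $2$ is directly related to the $\chi^2$ divergence: by definition,
\[
    e^{\dr{2}{X+1}{X}} = \sum_y \frac{p(y-1)^2}{p(y)} = 1 + \chi^2(X+1\,\|\,X),
\]
so the hypothesis $\dr{2}{X+1}{X}\le\varepsilon^2$ is equivalent to $\chi^2(X+1\,\|\,X) \le e^{\varepsilon^2}-1$. (I write sums but the same manipulations go through for densities; any $y$ with $p(y)=0$ and $p(y-1)>0$ would force the divergence to be infinite, so we may restrict attention to the common support.)

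Next, I would identify the variance via Cauchy--Schwarz. Writing $\mu = \ex{}{X}$ and summing by a change of index, one checks the telescoping identity
\[
    \sum_y (y-\mu)\bigl(p(y-1)-p(y)\bigr) = \bigl(\ex{}{X+1}-\mu\bigr) - \bigl(\ex{}{X}-\mu\bigr) = 1.
\]
I would then split the summand as $(y-\mu)\sqrt{p(y)} \cdot \frac{p(y-1)-p(y)}{\sqrt{p(y)}}$ and apply Cauchy--Schwarz to get
\[
    1 = \left(\sum_y (y-\mu)\bigl(p(y-1)-p(y)\bigr)\right)^2 \le \left(\sum_y (y-\mu)^2 p(y)\right) \cdot \left(\sum_y \frac{(p(y-1)-p(y))^2}{p(y)}\right) = \var{}{X} \cdot \chi^2(X+1\,\|\,X).
\]

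Combining this with the first step yields $1 \le \var{}{X} \cdot (e^{\varepsilon^2}-1)$, i.e., $\var{}{X} \ge 1/(e^{\varepsilon^2}-1)$, as required. There is no real obstacle here; the only subtle point is choosing the right pairing in Cauchy--Schwarz so that the cross terms produce exactly the mean shift $\ex{}{X+1} - \ex{}{X} = 1$, which in turn forces the appearance of $\var{}{X}$ rather than $\ex{}{X^2}$ after centering at $\mu$. If $X$ has a density on $\R$, the entire argument carries over verbatim with sums replaced by Lebesgue integrals.
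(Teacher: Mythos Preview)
Your proof is correct and essentially the same as the paper's: both apply Cauchy--Schwarz to the pairing of (a centering of) $X$ with the likelihood ratio $p(y-1)/p(y)-1$, recognizing that this ratio has mean zero and second moment $\chi^2(X+1\,\|\,X)=e^{\dr{2}{X+1}{X}}-1$. The only cosmetic differences are that you center at $\mu$ and thereby obtain $\var{}{X}$ directly (the paper centers at $0$ and obtains $\ex{}{X^2}$, implicitly relying on translation invariance of the hypothesis), and that you name the Hammersley--Chapman--Robbins connection explicitly.
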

\begin{proof}
We follow the proof Lemma C.2 of\citetall{BunS16}. For notational simplicity we assume $X$ has a probability density with respect to the Lesbesgue measure on the reals, which we abusively denote by $\pr{}{X=x}$. Let $f(x)=\log(\pr{}{X+1=x}/\pr{}{X=x})$ -- i.e., $f$ is the logarithm of the Radon-Nikodym derivative of the shifted distribution $X+1$ with respect to the distribution of $X$. Then $$e^{\dr{2}{X+1}{X}} = \int_\mathbb{R} \pr{}{X+1=x}^2 \pr{}{X=x}^{-1} \mathrm{d}x = \int_\mathbb{R} \left(\frac{\pr{}{X+1=x}}{\pr{}{X=x}}\right)^2 \pr{}{X=x} \mathrm{d}x = \ex{}{e^{2f(X)}}$$
and $$\ex{}{X+1} = \int_\mathbb{R} x \pr{}{X+1=x} \mathrm{d}x = \int_\mathbb{R} x e^{f(x)} \pr{}{X=x} \mathrm{d}x = \ex{}{X \cdot e^{f(X)}}.$$
We also have $$\ex{}{e^{f(X)}} = \int_\mathbb{R} \frac{\pr{}{X+1=x}}{\pr{}{X=x}} \pr{}{X=x} \mathrm{d}x = \int_\mathbb{R} \pr{}{X=x} \mathrm{d}x = 1.$$
By Cauchy-Schwarz, $$1=\ex{}{X+1}-\ex{}{X} = \ex{}{X \cdot (e^{f(X)}-1)} \le \sqrt{\ex{}{X^2} \cdot \ex{}{(e^{f(X)}-1)^2}}.$$
This rearranges to give
$$\ex{}{X^2} \ge \frac{1}{\ex{}{(e^{f(X)}-1)^2}} = \frac{1}{\ex{}{e^{2f(X)}-2e^{f(X)}+1}}=\frac{1}{e^{\dr{2}{X+1}{X}}-1}.$$
\end{proof}

The discrete Gaussian $\dgauss{1/\varepsilon^2}$ satisfies the hypotheses of Proposition \ref{prop:variance:general:lower:bound} (by Proposition \ref{prop:renyi}), which yields the following corollary.
\begin{cor}\label{cor:dgauss_var}
For all $\sigma>0$, 
\begin{equation}
    \var{X \gets \dgauss{\sigma^2}}{X} \ge \frac{1}{e^{1/\sigma^2}-1}
\end{equation}
\end{cor}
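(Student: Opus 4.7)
The plan is to derive Corollary~\ref{cor:dgauss_var} as a direct consequence of Proposition~\ref{prop:variance:general:lower:bound} applied to $X \gets \dgauss{\sigma^2}$. The only thing to verify is that this distribution satisfies the hypothesis $\dr{2}{X+1}{X} \le \varepsilon^2$ with the correct constant $\varepsilon^2 = 1/\sigma^2$.

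First, observe that if $X \gets \dgausss{0}{\sigma^2}$ then $X+1$ is distributed as $\dgausss{1}{\sigma^2}$; this is immediate from the definition of the discrete Gaussian (the probability mass function only depends on the displacement from the location parameter, and shifting integers by the integer $1$ preserves the support $\mathbb{Z}$). Thus
\[
    \dr{2}{X+1}{X} = \dr{2}{\dgausss{1}{\sigma^2}}{\dgausss{0}{\sigma^2}}.
\]

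Next, I invoke Proposition~\ref{prop:renyi} with $\alpha=2$, $\mu=1$, $\nu=0$. The bound there gives
\[
    \dr{2}{\dgausss{1}{\sigma^2}}{\dgausss{0}{\sigma^2}} \le \frac{(1-0)^2}{2\sigma^2}\cdot 2 = \frac{1}{\sigma^2}.
\]
(In fact, the ``furthermore'' clause of Proposition~\ref{prop:renyi} tells us that equality holds here, since $\alpha(\mu-\nu) = 2 \in \mathbb{Z}$; but we only need the inequality.)

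Finally, applying Proposition~\ref{prop:variance:general:lower:bound} with $\varepsilon^2 = 1/\sigma^2$ yields
\[
    \var{X \gets \dgauss{\sigma^2}}{X} \ge \frac{1}{e^{\varepsilon^2}-1} = \frac{1}{e^{1/\sigma^2}-1},
\]
which is exactly the claim. There is no substantive obstacle here: the only care needed is to identify the correct parameter ($\varepsilon^2 = 1/\sigma^2$, not $1/(2\sigma^2)$) by remembering that the Rényi divergence of order $\alpha=2$ picks up the extra factor of $\alpha$ in the bound of Proposition~\ref{prop:renyi}.
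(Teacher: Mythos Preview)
Your proof is correct and follows exactly the same approach as the paper: apply Proposition~\ref{prop:variance:general:lower:bound} to $X\gets\dgauss{\sigma^2}$, using Proposition~\ref{prop:renyi} (with $\alpha=2$, $\mu=1$, $\nu=0$) to verify the hypothesis $\dr{2}{X+1}{X}\le 1/\sigma^2$. In fact, you spell out the details more carefully than the paper does, which simply states in one sentence that the hypotheses of Proposition~\ref{prop:variance:general:lower:bound} are satisfied by Proposition~\ref{prop:renyi}.
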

We emphasize that the lower bound of Proposition \ref{prop:variance:general:lower:bound} is not specific to the discrete Gaussian. It applies to any distribution $X$ such that adding $X$ to a sensitivity-1 function provides $\frac12\varepsilon^2$-concentrated differential privacy.

\begin{proof}[Proof of Proposition \ref{prop:variance:upper:bound}]

Combining Lemma \ref{lem:upper=lower} with Proposition \ref{prop:variance:general:lower:bound} (specifically, Corollary \ref{cor:dgauss_var}) yields the first claim:
$$
\var{}{\discN(0,\sigma^2)} = \sigma^2 (1 - 4\pi^2\sigma^2\var{}{\discN(0,1/(4\pi^2\sigma^2))} ) \leq \sigma^2 \left(1 - \frac{4\pi^2\sigma^2}{e^{4\pi^2\sigma^2}-1} \right).
$$

Now we establish the last part of the proposition. We have $(m+1)^2\ge 2m+1$ and, hence,
\begin{align*}
    \var{}{\discN(0,\sigma^2)} &= \frac{\sum_{n \in \Z} n^2 \cdot e^{-n^2/2\sigma^2}}{\sum_{n \in \Z} e^{-n^2/2\sigma^2}} \le \sum_{n \in \Z} n^2 \cdot e^{-n^2/2\sigma^2} = 2\sum_{m=0}^\infty (m+1)^2 \cdot e^{-(m+1)^2/2\sigma^2}\\
    &\le 2\sum_{m=0}^\infty (m+1)^2 \cdot e^{-(2m+1)/2\sigma^2} = \frac{2}{e^{1/2\sigma^2}} \sum_{m=0}^\infty (m+1)^2 e^{-m/\sigma^2}.
\end{align*}
It only remains to show that $\sum_{m=0}^\infty (m+1)^2 e^{-m/\sigma^2} \leq 3/2$ when $\sigma^2\le1/3$. For $x \in (-1,1)$, one can show that $\sum_{m=0}^\infty (m+1)^2 x^{m} = \frac{1+x}{(1-x)^3}$. Set $x=e^{-1/\sigma^2}$ to conclude.
\end{proof}

Next we prove tail bounds:

\begin{prop}\label{prop:tail:bound:gaussian}
For all $m \in \Z$ with $m \ge 1$ and all $\sigma \in \R$ with $\sigma>0$, 
\begin{equation}
    \pr{X \gets \dgauss{\sigma^2}}{X \ge m} \le \pr{X \gets \mathcal{N}(0,\sigma^2)}{X \ge m-1}.
\end{equation}
Moreover, if $\sigma \geq 1/\sqrt{2\pi}$, we have
\begin{equation}
    \pr{X \gets \dgauss{\sigma^2}}{X \ge m} \ge \frac{1}{1+3e^{-2\pi^2\sigma^2}}\pr{X \gets \mathcal{N}(0,\sigma^2)}{X \ge m}.
\end{equation}
\end{prop}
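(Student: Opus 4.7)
}
The plan is a series-integral comparison for the tail sum combined with the normalization-constant bounds from Facts~\ref{fact:normalization:constant} and~\ref{fact:normalization:constant:better}. Writing
\[
    \pr{X \gets \dgauss{\sigma^2}}{X \ge m} = \frac{\sum_{n=m}^\infty e^{-n^2/2\sigma^2}}{\sum_{n\in\mathbb{Z}} e^{-n^2/2\sigma^2}},
\]
both tasks reduce to bounding the numerator against a Gaussian integral and bounding the denominator against $\sqrt{2\pi\sigma^2}$.

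For the upper bound, I will use that $x\mapsto e^{-x^2/2\sigma^2}$ is nonincreasing on $[0,\infty)$, so for every $n\ge 1$,
\[
    e^{-n^2/2\sigma^2}\le \int_{n-1}^{n} e^{-x^2/2\sigma^2}\,\dx.
\]
Summing over $n\ge m\ge 1$ telescopes the integrals into $\int_{m-1}^\infty e^{-x^2/2\sigma^2}\,\dx = \sqrt{2\pi\sigma^2}\cdot\pr{X\gets \mathcal{N}(0,\sigma^2)}{X\ge m-1}$. Dividing by the normalization constant, which Fact~\ref{fact:normalization:constant} bounds from below by $\sqrt{2\pi\sigma^2}$, immediately yields the first inequality.

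For the lower bound, I use the matching inequality in the other direction: for $n\ge 0$,
\[
    e^{-n^2/2\sigma^2}\ge \int_{n}^{n+1} e^{-x^2/2\sigma^2}\,\dx,
\]
so the tail sum is at least $\int_m^\infty e^{-x^2/2\sigma^2}\,\dx = \sqrt{2\pi\sigma^2}\cdot\pr{X\gets\mathcal{N}(0,\sigma^2)}{X\ge m}$. For the denominator, Fact~\ref{fact:normalization:constant:better} gives, for all $\sigma>0$,
\[
    \sum_{n\in\mathbb{Z}} e^{-n^2/2\sigma^2} \le \sqrt{2\pi\sigma^2}(1+2e^{-2\pi^2\sigma^2}) + e^{-2\pi^2\sigma^2}.
\]
The remaining step is to observe that when $\sigma\ge 1/\sqrt{2\pi}$ we have $\sqrt{2\pi\sigma^2}\ge 1$, so $e^{-2\pi^2\sigma^2}\le \sqrt{2\pi\sigma^2}\cdot e^{-2\pi^2\sigma^2}$; plugging this in bounds the right-hand side by $\sqrt{2\pi\sigma^2}(1+3e^{-2\pi^2\sigma^2})$. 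Taking the ratio of the bounded numerator and denominator gives the claimed $\frac{1}{1+3e^{-2\pi^2\sigma^2}}$ factor.

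There is no serious obstacle here — the argument is mechanical once Facts~\ref{fact:normalization:constant} and~\ref{fact:normalization:constant:better} are in hand. The only subtle choice is matching the tight upper bound on the partition function (the one derived from Poisson summation, not the elementary bound $\sqrt{2\pi\sigma^2}+1$) so that the $e^{-2\pi^2\sigma^2}$ factor appears in the final constant; using the weaker bound $\sqrt{2\pi\sigma^2}+1$ would give a worse, non-exponentially-small correction. The regime restriction $\sigma\ge 1/\sqrt{2\pi}$ is exactly what is needed to absorb the leftover additive $e^{-2\pi^2\sigma^2}$ into the multiplicative $\sqrt{2\pi\sigma^2}$ factor.
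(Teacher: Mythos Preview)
Your proposal is correct and follows essentially the same approach as the paper: a series--integral comparison for the tail sum in both directions, combined with the lower bound on the partition function from Fact~\ref{fact:normalization:constant} for the upper tail and the refined upper bound from Fact~\ref{fact:normalization:constant:better} for the lower tail. Your explicit remark that the hypothesis $\sigma\ge 1/\sqrt{2\pi}$ is precisely what lets one absorb the additive $e^{-2\pi^2\sigma^2}$ into $\sqrt{2\pi\sigma^2}\cdot e^{-2\pi^2\sigma^2}$ (turning the coefficient $2$ into $3$) spells out a step the paper leaves implicit.
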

\begin{proof}
We have $\pr{X \gets \dgauss{\sigma^2}}{X \ge m} = \frac{\sum_{k=m}^\infty e^{-k^2/2\sigma^2}}{\sum_{\ell\in \Z} e^{-\ell^2/2\sigma^2}}$. 
By Fact~\ref{fact:normalization:constant}, the denominator is at least $\sqrt{2\pi\sigma^2}$.
For the numerator, we have 
\[
    \sum_{k=m}^\infty e^{-k^2/2\sigma^2} 
    = \int_{m-1}^\infty e^{-\lceil x \rceil^2/2\sigma^2} \dx 
    \le \int_{m-1}^\infty e^{-x^2/2\sigma^2}\dx 
    = \sqrt{2\pi\sigma^2}\cdot\pr{X \gets \mathcal{N}(0,\sigma^2)}{X \ge m-1}.
\]
Turning to the lower bound, suppose $\sigma \geq 1/\sqrt{2\pi}$; by Fact~\ref{fact:normalization:constant:better}, we have $\sum_{\ell\in \Z} e^{-\ell^2/2\sigma^2}\leq \sqrt{2\pi\sigma^2}\cdot(1+3e^{-2\pi^2\sigma^2})$, which combined with
\[
    \sum_{k=m}^\infty e^{-k^2/2\sigma^2} 
    = \int_{m}^\infty e^{-\lfloor x \rfloor^2/2\sigma^2} \dx 
    \ge \int_{m}^\infty e^{-x^2/2\sigma^2}\dx 
    = \sqrt{2\pi\sigma^2}\cdot\pr{X \gets \mathcal{N}(0,\sigma^2)}{X \ge m}
\]
gives the claim.
\end{proof}
Note that the above proposition focuses on upper tail bounds, but by symmetry of the discrete Gaussian one immediately gets similar lower tail bounds. The upshot is that, up to a small shift or $(1+o(1))$ multiplicative factor, discrete and continuous Gaussians display the same tails.

One can actually slightly refine the above upper bound, by comparing the discrete Gaussian to the \emph{rounded} Gaussian $\roundN(0,\sigma^2)$, obtained by rounding a standard continuous Gaussian to the nearest integer:
\begin{prop}\label{prop:upper:bound:gaussian:2}
For all $m \in \Z$ with $m \ge 1$ and all $\sigma \in \R$ with $\sigma>0$, 
\[
    \pr{X \gets \dgauss{\sigma^2}}{X \ge m} 
    \le \pr{X \gets \roundN(0,\sigma^2)}{X \ge m} 
            + \frac{1}{2}\pr{X \gets \dgauss{\sigma^2}}{X = m}\,.
\]
\end{prop}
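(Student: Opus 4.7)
The plan is to reduce the stated bound to a purely analytic sum-versus-integral comparison, and then exploit the monotonicity of the Gaussian density on the tail $[m-\tfrac12,\infty)$ to carry out that comparison term by term.

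First, I would expand both sides in closed form. The left-hand side minus the correction reads
\[
\pr{X \gets \dgauss{\sigma^2}}{X \ge m} - \tfrac{1}{2}\pr{X \gets \dgauss{\sigma^2}}{X = m} = \frac{\tfrac{1}{2}e^{-m^2/2\sigma^2} + \sum_{k=m+1}^\infty e^{-k^2/2\sigma^2}}{\sum_{\ell \in \mathbb{Z}} e^{-\ell^2/2\sigma^2}},
\]
while the rounded-Gaussian tail is exactly $\pr{X \gets \roundN(0,\sigma^2)}{X \ge m} = \int_{m-1/2}^\infty \tfrac{1}{\sqrt{2\pi\sigma^2}}e^{-y^2/2\sigma^2}\,\dx[y]$. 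Invoking the lower bound $\sum_{\ell \in \mathbb{Z}} e^{-\ell^2/2\sigma^2} \ge \sqrt{2\pi\sigma^2}$ from Fact~\ref{fact:normalization:constant}, the claim reduces to
\[
\tfrac{1}{2}e^{-m^2/2\sigma^2} + \sum_{k=m+1}^\infty e^{-k^2/2\sigma^2} \;\le\; \int_{m-1/2}^\infty e^{-y^2/2\sigma^2}\,\dx[y].
\]

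Next, I would partition the integral into unit-length chunks $\int_{k-1/2}^{k+1/2}$ for $k\ge m$ and further bisect each chunk at $y=k$. Because $m\ge 1$, the integrand $y \mapsto e^{-y^2/2\sigma^2}$ is strictly decreasing on each $[k-\tfrac12,k+\tfrac12]$, so each half-interval integral is at least $\tfrac12$ times its right-endpoint value, yielding
\[
\int_{k-1/2}^{k+1/2} e^{-y^2/2\sigma^2}\,\dx[y] \;\ge\; \tfrac{1}{2}e^{-k^2/2\sigma^2} + \tfrac{1}{2}e^{-(k+1/2)^2/2\sigma^2} \;\ge\; \tfrac{1}{2}e^{-k^2/2\sigma^2} + \tfrac{1}{2}e^{-(k+1)^2/2\sigma^2}.
\]
Summing over $k\ge m$ telescopes cleanly to
\[
\sum_{k=m}^\infty \bigl(\tfrac{1}{2}e^{-k^2/2\sigma^2} + \tfrac{1}{2}e^{-(k+1)^2/2\sigma^2}\bigr) = \tfrac{1}{2}e^{-m^2/2\sigma^2} + \sum_{k=m+1}^\infty e^{-k^2/2\sigma^2},
\]
which is exactly the inequality required.

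There is no real obstacle: the assumption $m\ge 1$ is precisely what ensures the Gaussian density is monotone on every relevant half-interval, and the factor $\tfrac{1}{2}\pr{X\gets \dgauss{\sigma^2}}{X=m}$ on the right-hand side of the proposition is exactly the ``missing half'' of the leftmost integration piece $[m-\tfrac12,m]$. The only minor bookkeeping is confirming that the $\sqrt{2\pi\sigma^2}$ lower bound on the normalization constant is tight enough here — which it is, since it is used without any additional slack on the density side.
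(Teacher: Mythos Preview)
Your proof is correct and uses essentially the same ingredients as the paper: the normalization-constant lower bound $\sum_{\ell} e^{-\ell^2/2\sigma^2}\ge\sqrt{2\pi\sigma^2}$ together with a monotone sum-versus-integral comparison on the tail. The only difference is bookkeeping: the paper partitions the integral as $[m-\tfrac12,m]\cup\bigcup_{n\ge m+1}[n-1,n]$ and handles the half-interval $[m-\tfrac12,m]$ separately, whereas your symmetric partition $\bigcup_{k\ge m}[k-\tfrac12,k+\tfrac12]$ with bisection yields the same inequality in one stroke without a boundary case.
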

\begin{proof}
On the one hand, by definition of a rounded Gaussian, we have
\[
\sqrt{2\pi\sigma^2}\pr{X \gets \roundN(0,\sigma^2)}{X \ge m}
= \int_{m-1/2}^\infty e^{-x^2/(2\sigma^2)}\, \dx
= \int_{m-1/2}^m e^{-x^2/(2\sigma^2)}\, \dx + \int_{m}^\infty e^{-x^2/(2\sigma^2)}\, \dx\,;
\]
on the other hand, we have
\[
    \sqrt{2\pi\sigma^2}\pr{X \gets \dgauss{\sigma^2}}{X \ge m+1} 
    = \sqrt{2\pi\sigma^2}\frac{\sum_{n=m+1}^\infty e^{-n^2/(2\sigma^2)}}{\sum_{n\in\Z} e^{-n^2/(2\sigma^2)}} \leq \sum_{n=m+1}^\infty e^{-n^2/(2\sigma^2)}
\]
by Fact~\ref{fact:normalization:constant}. Similarly as before, we can write
\begin{align*}
\sum_{n=m+1}^\infty e^{-n^2/(2\sigma^2)}
&\leq \sum_{n=m+1}^\infty \int_{n-1}^n e^{-x^2/(2\sigma^2)}\,\dx
= \int_{m}^\infty e^{-x^2/(2\sigma^2)}\,\dx
\end{align*}
using monotonicity of $x\mapsto e^{-x^2/(2\sigma^2)}$ on $[0,\infty)$. Combining the three equations above gives
\begin{align*}
    \pr{X \gets \dgauss{\sigma^2}}{X \ge m} 
    &\leq \pr{X \gets \dgauss{\sigma^2}}{X = m} + \pr{X \gets \roundN(0,\sigma^2)}{X \ge m} - \frac{1}{\sqrt{2\pi\sigma^2}}\int_{m-1/2}^m e^{-x^2/(2\sigma^2)}\, \dx
     \\
    &\leq \pr{X \gets \dgauss{\sigma^2}}{X = m} 
    + \pr{X \gets \roundN(0,\sigma^2)}{X \ge m} - \frac{1}{2}\frac{e^{-m^2/(2\sigma^2) }}{\sum_{n\in\Z} e^{-n^2/(2\sigma^2)} }\\
    &= \frac12 \pr{X \gets \dgauss{\sigma^2}}{X = m} 
    + \pr{X \gets \roundN(0,\sigma^2)}{X \ge m},
\end{align*}
as $\int_{m-1/2}^m e^{-x^2/(2\sigma^2) }\geq \frac{1}{2}e^{-m^2/(2\sigma^2) }$ and $\sum_{n\in\Z} e^{-n^2/(2\sigma^2)} \geq \sqrt{2\pi\sigma^2}$ by Fact \ref{fact:normalization:constant}.
\end{proof}
We highlight the fact that comparing with the rounded Gaussian, as the above proposition does, is meaningful, since by postprocessing any differential privacy guarantee implied by adding rounded Gaussian noise to discrete data is at least as good as that implied by adding continuous Gaussian noise to the same discrete data.

\subsection{Other Discretizations, and Convergence to the Continuous Gaussian}
\label{sec:other-discs}
Although we focused on this paper on the discrete Gaussian over $\Z$, one can of course consider different discretizations, such as the discrete Gaussian over $\alpha\Z := \{\alpha z : z \in \Z \}$ for some fixed $\alpha>0$. We denote this distribution by $\discparamN{\alpha}(\mu,\sigma^2)$. It is defined by
\begin{equation}
\forall x \in \alpha\Z ~~~~~ \pr{X \gets \discparamN{\alpha}(\mu,\sigma^2)}{X=x}=\frac{e^{-(x-\mu)^2/2\sigma^2}}{\sum_{y \in \alpha\Z} e^{-(y-\mu)^2/2\sigma^2}}.
\end{equation}
It is immediate that
\begin{equation}\label{eq:discrete:gaussian:alpha:relation}
\forall x \in \alpha\Z ~~~~~ \pr{X \gets \discparamN{\alpha}(\mu,\sigma^2)}{X=x}=\pr{X \gets \discN(\frac{\mu}{\alpha}, \frac{\sigma^2}{\alpha^2})}{X=\frac{x}{\alpha}}.
\end{equation}
 In particular, all our results on the (standard) discrete Gaussian will translate to the discrete Gaussian over $\alpha\Z$, up to that change of the parameters $\mu$ and $\sigma$.
 
 Further, one would expect than, as $\alpha \to 0^+$, the discrete Gaussian $\discparamN{\alpha}(0,\sigma^2)$ converges to $\mathcal{N}(0,\sigma^2)$. We show that this is indeed the case:
 \begin{prop}
    For all $\sigma \in \R$ with $\sigma>0$, as $\alpha \to 0^+$ the discrete Gaussian $\discparamN{\alpha}(0,\sigma^2)$ converges in distribution to the continuous Gaussian $\mathcal{N}(0,\sigma^2)$.
 \end{prop}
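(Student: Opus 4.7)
The plan is to establish convergence in distribution via pointwise convergence of cumulative distribution functions. Fix $t \in \R$; since the normal CDF $\Phi_\sigma$ of $\mathcal{N}(0,\sigma^2)$ is continuous everywhere, it suffices to show $F_\alpha(t) \to \Phi_\sigma(t)$ as $\alpha \to 0^+$, where $F_\alpha$ is the CDF of $\discparamN{\alpha}(0,\sigma^2)$. Writing the CDF as
\[
    F_\alpha(t) = \frac{\alpha \sum_{k \in \Z : \alpha k \le t} e^{-(\alpha k)^2/2\sigma^2}}{\alpha \sum_{k \in \Z} e^{-(\alpha k)^2/2\sigma^2}},
\]
the key observation is that numerator and denominator are Riemann sums of mesh $\alpha$ for the continuous Gaussian density $x \mapsto e^{-x^2/2\sigma^2}$ over $(-\infty,t]$ and $\R$ respectively. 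The target limits are $\int_{-\infty}^t e^{-x^2/2\sigma^2}\dx$ and $\sqrt{2\pi\sigma^2}$; taking the ratio yields $\Phi_\sigma(t)$.

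The first step is a truncation. Fix an arbitrary $M > |t|+1$. On the bounded intervals $[-M,t]$ and $[-M,M]$, the integrand is continuous and positive, so standard Riemann integration theory gives
\[
    \alpha \!\!\!\sum_{k : -M \le \alpha k \le t}\!\!\! e^{-(\alpha k)^2/2\sigma^2} \,\to\, \int_{-M}^{t} e^{-x^2/2\sigma^2} \dx \quad\text{and}\quad \alpha \!\!\!\sum_{k : -M \le \alpha k \le M}\!\!\! e^{-(\alpha k)^2/2\sigma^2} \,\to\, \int_{-M}^{M} e^{-x^2/2\sigma^2} \dx
\]
as $\alpha \to 0^+$.

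The main obstacle will be controlling the tails uniformly in $\alpha$. I would exploit the monotonicity of $x \mapsto e^{-x^2/2\sigma^2}$ outside $[-M,M]$ to dominate the tail sum by a tail integral: for $\alpha \le 1$,
\[
    \alpha \sum_{k : \alpha k > M} e^{-(\alpha k)^2/2\sigma^2} \le \int_{M-1}^\infty e^{-x^2/2\sigma^2}\dx,
\]
and analogously on the left; the right-hand side is independent of $\alpha$ and tends to $0$ as $M \to \infty$ by the super-polynomial decay of the Gaussian. The integral tail $\int_{|x|>M} e^{-x^2/2\sigma^2}\dx$ is of course controlled the same way. Combining the interior convergence with uniformly small tails gives the two desired limits, and taking the ratio completes the argument.

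As an alternative and cleaner route for the denominator, one can invoke the Poisson summation formula (as in Lemma~\ref{lem:poisson}) to obtain the exact identity $\alpha \sum_{k \in \Z} e^{-(\alpha k)^2/2\sigma^2} = \sqrt{2\pi\sigma^2} \sum_{k \in \Z} e^{-2\pi^2 k^2 \sigma^2/\alpha^2}$; as $\alpha \to 0^+$ the right-hand sum collapses to $1 + o(1)$, giving $\sqrt{2\pi\sigma^2}$ with exponentially small error. However, Poisson summation does not apply directly to the one-sided numerator, so the truncation-plus-domination argument seems unavoidable there.
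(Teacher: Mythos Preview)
Your proof is correct but takes a genuinely different route from the paper's. The paper exploits the rescaling identity \eqref{eq:discrete:gaussian:alpha:relation}, which reduces $\discparamN{\alpha}(0,\sigma^2)$ to the standard discrete Gaussian $\discN(0,\sigma^2/\alpha^2)$ with a large variance parameter, and then invokes the already-established comparison of tails between the discrete and continuous Gaussians (Proposition~\ref{prop:tail:bound:gaussian}) to sandwich the CDF of $\discparamN{\alpha}(0,\sigma^2)$ between $\pr{Y\gets\mathcal{N}(0,\sigma^2)}{Y\le x-\alpha}$ and $(1+3e^{-2\pi^2\sigma^2/\alpha^2})^{-1}\pr{Y\gets\mathcal{N}(0,\sigma^2)}{Y\le x}$; both bounds converge to $\Phi_\sigma(x)$ as $\alpha\to 0^+$. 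Your approach is instead a direct Riemann-sum argument: truncate, use Riemann convergence on the compact part, and control the tails uniformly by domination with a shifted integral. What the paper's route buys is brevity, since the heavy lifting has already been done in Proposition~\ref{prop:tail:bound:gaussian} (whose proof uses essentially the same monotonicity-and-integral-comparison idea you use for the tails). What your route buys is self-containment: it does not rely on any earlier results beyond basic analysis, and in particular does not require the Poisson-summation machinery underlying Fact~\ref{fact:normalization:constant:better}. Both are perfectly sound.
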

 \begin{proof}
 Fix any $0 < \alpha \leq \sqrt{2\pi\sigma^2}$. By Equation~\ref{eq:discrete:gaussian:alpha:relation}, for any $x \in \alpha\Z$, we have
 $
      \pr{X \gets \discparamN{\alpha}(0,\sigma^2)}{X\leq x}  = \pr{X \gets \discN(0, \frac{\sigma^2}{\alpha^2})}{X\leq x/\alpha }  
 $, and so, by Proposition~\ref{prop:tail:bound:gaussian},
 \[
         \pr{X \gets \mathcal{N}(0,\frac{\sigma^2}{\alpha^2})}{\alpha X \leq x-\alpha} \leq \pr{X \gets \discparamN{\alpha}(0,\sigma^2)}{X\leq x} \leq \frac{1}{1+3e^{-2\pi^2\sigma^2/\alpha^2}}\pr{X \gets \mathcal{N}(0,\frac{\sigma^2}{\alpha^2})}{\alpha X \le x}
 \]
 or, equivalently,
 \[
         \pr{Y \gets \mathcal{N}(0,\sigma^2)}{Y \leq x-\alpha} \leq \pr{X \gets \discparamN{\alpha}(0,\sigma^2)}{X\leq x} \leq \frac{1}{1+3e^{-2\pi^2\sigma^2/\alpha^2}}\pr{Y \gets \mathcal{N}(0,\sigma^2)}{Y \le x}\,.
 \]
 Both sides converge to $\pr{Y \gets \mathcal{N}(0,\sigma^2)}{Y \leq x}$ as $\alpha \to 0^+$. %
\end{proof}

 In applications where query values are not naturally discrete, it is necessary to round them before adding discrete noise. A finer discretization (i.e., smaller $\alpha$) entails less error being introduced by the rounding.

\section{Discrete Laplace}
\label{sec:dlap}
We now compare the discrete Gaussian with the most obvious alternative -- the discrete Laplace. But first we give a formal definition and state some relevant facts.

\begin{defn}[Discrete Laplace]
Let $t > 0$. The discrete Laplace distribution with scale parameter $t$ is denoted $\discL(t)$. It is a probability distribution supported on the integers and defined by 
\begin{equation}
\forall x \in \mathbb{Z}, ~~~~~ \pr{X \gets \discL(t)}{X=x}=\frac{e^{1/t}-1}{e^{1/t}+1} \cdot e^{-|x|/t}.
\end{equation}
\end{defn}

The discrete Laplace (also known as the \emph{two-sided geometric}) was introduced into the differential privacy literature by~\citetall{GhoshRS12}, who showed that it satisfies strong optimality properties.

\begin{lem}[Discrete Laplace Privacy]
Let $\Delta,\varepsilon>0$. Let $q\colon \mathcal{X}^n \to \mathbb{Z}$ satisfy $|q(x)-q(x')|\le\Delta$ for all $x,x'\in\mathcal{X}^n$ differing on a single entry. Define a randomized algorithm $M\colon \mathcal{X}^n \to \mathbb{Z}$ by $M(x)=q(x)+Y$ where $Y \gets \discL(\Delta/\varepsilon)$. Then $M$ satisfies $(\varepsilon,0)$-differential privacy.
\end{lem}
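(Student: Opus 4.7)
The plan is to verify $(\varepsilon,0)$-differential privacy directly from the definition by establishing a pointwise likelihood-ratio bound. Since $M(x)$ and $M(x')$ are both supported on $\mathbb{Z}$, it suffices to show that for every pair of neighbouring inputs $x, x' \in \mathcal{X}^n$ and every $y \in \mathbb{Z}$, one has $\pr{}{M(x) = y} \le e^\varepsilon \cdot \pr{}{M(x') = y}$; summing over $y \in E$ then yields the event-wise guarantee needed in the definition of pure differential privacy.

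Setting $t = \Delta/\varepsilon$, I would begin by unwinding the shift: $\pr{}{M(x) = y} = \pr{Y \gets \discL(t)}{Y = y - q(x)}$. Because $q$ is integer-valued by hypothesis, both $y - q(x)$ and $y - q(x')$ lie in $\mathbb{Z}$, so the discrete Laplace mass function applies to each. The normalising constant $\tfrac{e^{1/t} - 1}{e^{1/t} + 1}$ cancels in the ratio, leaving
\begin{equation*}
\frac{\pr{}{M(x) = y}}{\pr{}{M(x') = y}} = \exp\!\left(\frac{|y - q(x')| - |y - q(x)|}{t}\right).
\end{equation*}

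The only analytic ingredient is now the reverse triangle inequality, $\bigl| |y - q(x')| - |y - q(x)| \bigr| \le |q(x) - q(x')| \le \Delta$. Multiplying by $1/t = \varepsilon/\Delta$ bounds the exponent in absolute value by $\varepsilon$, which gives the desired pointwise ratio bound. Summation over $y \in E$ concludes the argument.

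Honestly, there is no real obstacle here; this is the discrete mirror of the classical Laplace mechanism proof. The only subtlety worth flagging is the role of the hypothesis $q\colon\mathcal{X}^n \to \mathbb{Z}$: if $q$ could take non-integer values, then the supports of $M(x)$ and $M(x')$ would be disjoint cosets of $\mathbb{Z}$ and the privacy loss would blow up to infinity on either one, making $(\varepsilon,0)$-DP impossible. The assumption $|q(x) - q(x')| \le \Delta$ is then used exclusively through the reverse triangle inequality, exactly as in the continuous case.
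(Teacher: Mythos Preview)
Your argument is correct and is exactly the standard pointwise likelihood-ratio computation for the Laplace mechanism, transported to the discrete setting. The paper does not actually supply a proof of this lemma; it is stated without proof as a known fact (attributed to Ghosh, Roughgarden, and Sundararajan), so there is nothing to compare against beyond noting that your write-up is the expected one.
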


\begin{lem}[Discrete Laplace Utility]\label{lem:dlap-util}
Let $\varepsilon>0$ and let $Y \gets \discL(1/\varepsilon)$. The distribution is symmetric; in particular, $\ex{}{Y}=0$. We have $\ex{}{|Y|}=\frac{2 \cdot e^\varepsilon}{e^{2\varepsilon}-1}$ and $\var{}{Y}=\ex{}{Y^2} = \frac{2\cdot e^{\varepsilon}}{(e^{\varepsilon}-1)^2}$. For all $\lambda<\varepsilon$, $$\ex{}{e^{\lambda|Y|}} = \frac{e^{\varepsilon}-1}{e^{\varepsilon}+1} \cdot  \frac{e^{\varepsilon-\lambda}+1}{e^{\varepsilon-\lambda}-1}.$$
For all $m \in \mathbb{N}$, $$\pr{}{Y\ge m} = \pr{}{Y\le -m} = \frac{e^{-\varepsilon(m-1)}}{e^\varepsilon+1}.$$
\end{lem}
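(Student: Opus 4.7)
The plan is to compute everything directly from the probability mass function $\Pr[Y=y] = \frac{e^\varepsilon-1}{e^\varepsilon+1} \cdot e^{-\varepsilon|y|}$ for $y \in \mathbb{Z}$, using only the standard closed forms for geometric series and their first two derivatives, namely $\sum_{k=0}^\infty x^k = \frac{1}{1-x}$, $\sum_{k=1}^\infty k x^k = \frac{x}{(1-x)^2}$, and $\sum_{k=1}^\infty k^2 x^k = \frac{x(1+x)}{(1-x)^3}$ for $|x|<1$. Symmetry and $\ex{}{Y}=0$ are immediate, since the PMF depends on $y$ only through $|y|$ (and absolute convergence of the defining series justifies rearrangement).

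Next I would split every remaining sum as $\sum_{y\in\mathbb{Z}} h(|y|) = h(0) + 2\sum_{k=1}^\infty h(k)$ and substitute $x = e^{-\varepsilon}$ (or $x=e^{-(\varepsilon-\lambda)}$ for the MGF). For $\ex{}{|Y|}$ this gives $2 \cdot \frac{e^\varepsilon-1}{e^\varepsilon+1} \cdot \frac{e^{-\varepsilon}}{(1-e^{-\varepsilon})^2}$; clearing the $e^{-\varepsilon}$ by multiplying top and bottom by $e^{2\varepsilon}$ collapses this to $\frac{2e^\varepsilon}{(e^\varepsilon+1)(e^\varepsilon-1)} = \frac{2e^\varepsilon}{e^{2\varepsilon}-1}$. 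The same recipe with $k^2 x^k$ yields $\ex{}{Y^2} = 2 \cdot \frac{e^\varepsilon-1}{e^\varepsilon+1} \cdot \frac{e^{-\varepsilon}(1+e^{-\varepsilon})}{(1-e^{-\varepsilon})^3}$, and after multiplying by $e^{3\varepsilon}/e^{3\varepsilon}$ the $(e^\varepsilon+1)$ factors cancel to leave $\frac{2e^\varepsilon}{(e^\varepsilon-1)^2}$; since $\ex{}{Y}=0$, this also equals the variance.

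For the MGF of $|Y|$, the assumption $\lambda<\varepsilon$ is exactly what makes $e^{-(\varepsilon-\lambda)}<1$, so the series converges. Applying the same split and $\sum_{k=1}^\infty e^{-(\varepsilon-\lambda)k} = \frac{e^{-(\varepsilon-\lambda)}}{1-e^{-(\varepsilon-\lambda)}}$ gives $\ex{}{e^{\lambda|Y|}} = \frac{e^\varepsilon-1}{e^\varepsilon+1} \cdot \frac{1+e^{-(\varepsilon-\lambda)}}{1-e^{-(\varepsilon-\lambda)}}$, and multiplying the second factor top and bottom by $e^{\varepsilon-\lambda}$ gives the claimed form. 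Finally, for the tail bound with $m \in \mathbb{N}$, the one-sided geometric series gives $\pr{}{Y \ge m} = \frac{e^\varepsilon-1}{e^\varepsilon+1} \sum_{y=m}^\infty e^{-\varepsilon y} = \frac{e^\varepsilon-1}{e^\varepsilon+1} \cdot \frac{e^{-\varepsilon m}}{1-e^{-\varepsilon}}$, and multiplying the last factor top and bottom by $e^\varepsilon$ cancels the $e^\varepsilon-1$ to yield $\frac{e^{-\varepsilon(m-1)}}{e^\varepsilon+1}$; the other tail is identical by symmetry.

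There is no real obstacle here: each identity is a one-line consequence of a geometric series, and the only ``trick'' is to clear the $e^{-\varepsilon}$ factors so that the answers appear in the clean form stated in the lemma. The main thing to be careful about is bookkeeping with the normalization constant $\frac{e^\varepsilon-1}{e^\varepsilon+1}$ so that it cancels cleanly against the $(e^\varepsilon-1)$ and $(e^\varepsilon+1)$ factors arising from the series sums.
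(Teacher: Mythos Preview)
Your proposal is correct. Each computation checks out: the geometric-series identities you cite are exactly what is needed, the cancellations work as you describe, and the convergence condition $\lambda<\varepsilon$ is identified correctly. One very minor remark: for the tail probability you implicitly use $|y|=y$ on the range $y\ge m$, which requires $m\ge 0$; this is fine since the lemma assumes $m\in\mathbb{N}$, and in fact your formula $\frac{e^{-\varepsilon(m-1)}}{e^\varepsilon+1}$ is consistent even at $m=0$.

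As for comparison with the paper: the paper states Lemma~\ref{lem:dlap-util} without proof, so there is nothing to compare against. Your direct computation from the PMF via geometric series is the natural and standard argument, and would serve perfectly well as the omitted proof.
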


We remark that the discrete Laplace can also be efficiently sampled. Indeed, it is a key subroutine of our algorithm for sampling a discrete Gaussian; see Section~\ref{sec:sampling}.

There are two immediate qualitative differences between the discrete Laplace and the discrete Gaussian.\footnote{The entire discussion in this section applies equally well to the continuous analogues of these distributions.} In terms of utility, the discrete Laplace has subexponential tails (i.e., decaying as $e^{-\varepsilon m}$), whereas the discrete Gaussian has subgaussian tails (i.e., decaying as $e^{-m^2/2\sigma^2}$). In terms of privacy, the discrete Gaussian satisfies concentrated differential privacy, whereas the discrete Laplace satisfies pure differential privacy; pure differential privacy is a qualitatively stronger privacy condition than concentrated differential privacy.

Thus neither distribution dominates the other. They offer different privacy-utility tradeoffs. If the tails are important (e.g., for computing confidence intervals), then the discrete Gaussian is to be favoured. If pure differential privacy is important, then the discrete Laplace is to be favoured.

We now consider a quantitative comparison. To quantify utility, we focus on the variance of the distribution. (An alternative would be to consider the width of a confidence interval.) For now, we will quantify privacy by concentrated differential privacy. Pure $(\varepsilon,0)$-differential privacy implies $\frac12\varepsilon^2$-concentrated differential privacy; thus both distributions can be evaluated on this scale. 

Consider a small $\varepsilon>0$ and a counting query. We can attain $\frac12\varepsilon^2$-concentrated differential privacy by adding noise from either $\dgauss{1/\varepsilon^2}$ or $\discL(1/\varepsilon)$. By Corollary \ref{cor:bounds}, Proposition \ref{prop:variance:general:lower:bound}, and Lemma \ref{lem:dlap-util}, we have $$\frac{1}{\varepsilon^2} \ge \var{Y_\mathsf{G}\gets\dgauss{1/\varepsilon^2}}{Y_\mathsf{G}} \ge \frac{1}{e^{\varepsilon^2}-1} = \frac{1-o(1)}{\varepsilon^2} ~~~~\text{ and }~~~~ \var{Y_\mathsf{L}\gets\discL(1/\varepsilon)}{Y_\mathsf{L}} = \frac{2 \cdot e^\varepsilon}{(e^\varepsilon-1)^2}= \frac{2 \pm o(1)}{\varepsilon^2}.$$
Thus, asymptotically (i.e., for small $\varepsilon$), the discrete Gaussian has half as much variance as the discrete Laplace for the same level of privacy. In this comparison, the Gaussian clearly is better.

However, the above quantitative comparison is potentially unfair. Quantifying differential privacy by concentrated differential privacy may favour the Gaussian. If instead we demand pure $(\varepsilon,0)$-differential privacy or approximate $(\varepsilon,\delta)$-differential privacy for a small $\delta>0$, then the comparison would yield the opposite conclusion. It is fundamentally difficult to compare algorithms satisfying different versions of differential privacy, as there is no level playing field.

There is another factor to consider: A practical differentially private system will answer many queries via independent noise addition. Thus the real object of interest is the privacy and utility of the \emph{composition} of many applications of noise addition.

For the rest of this section, we consider the task of answering $k$ counting queries (or sensitivity-1 queries) by adding either discrete Gaussian or discrete Laplace noise. We will measure privacy by approximate $(\varepsilon,\delta)$-differential privacy over a range of parameters. The results are summarized in Figure \ref{fig:gausslaplace}.

Concentrated differential privacy has an especially clean composition theorem \cite{BunS16}:
\begin{lem}[Composition for Concentrated Differential Privacy]
Let $M_1\colon \mathcal{X}^n \to \mathcal{Y}_1$ satisfy $\frac12\varepsilon_1^2$-concentrated differential privacy. Let $M_2\colon \mathcal{X}^n \times \mathcal{Y}_1 \to \mathcal{Y}_2$ be such that, for all $y \in \mathcal{Y}_1$, the restriction $M_2(\cdot,y)\colon \mathcal{X}^n \to \mathcal{Y}_2$ satisfies $\frac12\varepsilon_2^2$-concentrated differential privacy. Define $M_*\colon \mathcal{X}^n \to \mathcal{Y}_2$ by $M_*(x) = M_2(x,M_1(x))$. Then $M_*$ satisfies $\frac12(\varepsilon_1^2+\varepsilon_2^2)$-concentrated differential privacy.
\end{lem}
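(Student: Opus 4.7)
The plan is to prove the composition lemma via the chain-rule style manipulation of the R\'enyi divergence, together with the data-processing (post-processing) inequality. Fix neighbouring $x,x' \in \mathcal{X}^n$ and a R\'enyi parameter $\alpha \in (1,\infty)$. Let $Y_1 = M_1(x)$, $Y_1' = M_1(x')$, and let $Y_2 = M_2(x, Y_1)$ and $Y_2' = M_2(x', Y_1')$. My goal is to bound $\dr{\alpha}{(Y_1,Y_2)}{(Y_1',Y_2')}$; from this the claim on $M_*(x) = Y_2$ follows by data processing, since $M_*(x)$ is a (deterministic) function of the joint sample $(Y_1, Y_2)$.

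The key computation is to write the joint density as a product of the marginal on $Y_1$ and the conditional on $Y_2 \mid Y_1$, then exchange the two summations. Concretely, writing $P(y_1,y_2) = \pr{}{Y_1 = y_1} \cdot \pr{}{M_2(x,y_1) = y_2}$ and likewise for $P'$, one gets
\begin{align*}
e^{(\alpha-1)\dr{\alpha}{(Y_1,Y_2)}{(Y_1',Y_2')}}
&= \sum_{y_1} \pr{}{Y_1=y_1}^\alpha \pr{}{Y_1'=y_1}^{1-\alpha} \cdot e^{(\alpha-1)\dr{\alpha}{M_2(x,y_1)}{M_2(x',y_1)}}.
\end{align*}
Applying the hypothesis to each slice $M_2(\cdot,y_1)$ gives $\dr{\alpha}{M_2(x,y_1)}{M_2(x',y_1)} \le \tfrac12 \varepsilon_2^2 \alpha$ uniformly in $y_1$, so the inner factor can be pulled out. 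What remains is $\sum_{y_1} \pr{}{Y_1=y_1}^\alpha \pr{}{Y_1'=y_1}^{1-\alpha} = e^{(\alpha-1)\dr{\alpha}{M_1(x)}{M_1(x')}} \le e^{(\alpha-1) \cdot \tfrac12 \varepsilon_1^2 \alpha}$ by the hypothesis on $M_1$. Combining these,
\begin{equation*}
\dr{\alpha}{(Y_1,Y_2)}{(Y_1',Y_2')} \le \tfrac12(\varepsilon_1^2 + \varepsilon_2^2)\alpha.
\end{equation*}

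Finally, since $M_*(x)$ is obtained from $(Y_1, Y_2)$ by projecting to the second coordinate, the post-processing (data-processing) inequality for R\'enyi divergence gives $\dr{\alpha}{M_*(x)}{M_*(x')} \le \dr{\alpha}{(Y_1,Y_2)}{(Y_1',Y_2')} \le \tfrac12(\varepsilon_1^2+\varepsilon_2^2)\alpha$. Since this holds for all $\alpha>1$ and all neighbouring $x,x'$, Definition~\ref{defn:cdp} yields that $M_*$ satisfies $\tfrac12(\varepsilon_1^2+\varepsilon_2^2)$-concentrated differential privacy.

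I do not expect any substantial obstacle: the only subtle step is the factorization of the joint R\'enyi moment, which relies on $Y_2$ depending on $x$ only through the given conditional distribution $M_2(\cdot,y_1)$ once $y_1$ is fixed. The uniformity of the concentrated differential privacy bound across all fixed $y_1 \in \mathcal{Y}_1$ is exactly what allows the inner exponential to be pulled out of the sum; this is why the hypothesis is stated for every restriction $M_2(\cdot,y)$. Everything else is rearrangement and invocation of post-processing, both standard facts about R\'enyi divergence.
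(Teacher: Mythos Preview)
The paper does not actually prove this lemma; it is stated as a known result with citation to \cite{BunS16} and no proof is given. Your argument is correct and is essentially the standard proof of composition for concentrated differential privacy (via the chain rule for R\'enyi divergence and post-processing), so there is nothing to compare against and no gap to report.
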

This result can be extended to $k$ mechanisms by induction. 
Thus, to attain $\frac12\varepsilon^2$-concentrated differential privacy for $k$ counting queries, it suffices to add noise from $\dgauss{k/\varepsilon^2}$ to each value independently. We then convert the overall $\frac12\varepsilon^2$-concentrated differential privacy guarantee into approximate $(\varepsilon',\delta)$-differential privacy using Corollary \ref{cor:cdp2adp}.

In contrast, analysing the composition of multiple invocations of discrete Laplace noise addition is not as clean. %
We use an optimal composition result provided by Kairouz, Oh, and Viswanath~\cite{KairouzOV17,MurtaghV16}:
The $k$-fold composition of $(\varepsilon,\delta)$-differential privacy satisfies $(\varepsilon',\delta')$-differential privacy if and only if
\begin{equation}
    \frac{1}{(1+e^\varepsilon)^k}\sum_{\ell=0}^k {k \choose \ell} \max\left\{ 0 , e^{\ell\varepsilon} - e^{\varepsilon' + (k-\ell)\varepsilon} \right\} \le 1 - \frac{1-\delta'}{(1-\delta)^k}.
\end{equation}

\begin{figure}[H]
    \centering
    \begin{minipage}{0.5\textwidth}
        \includegraphics[width=\textwidth]{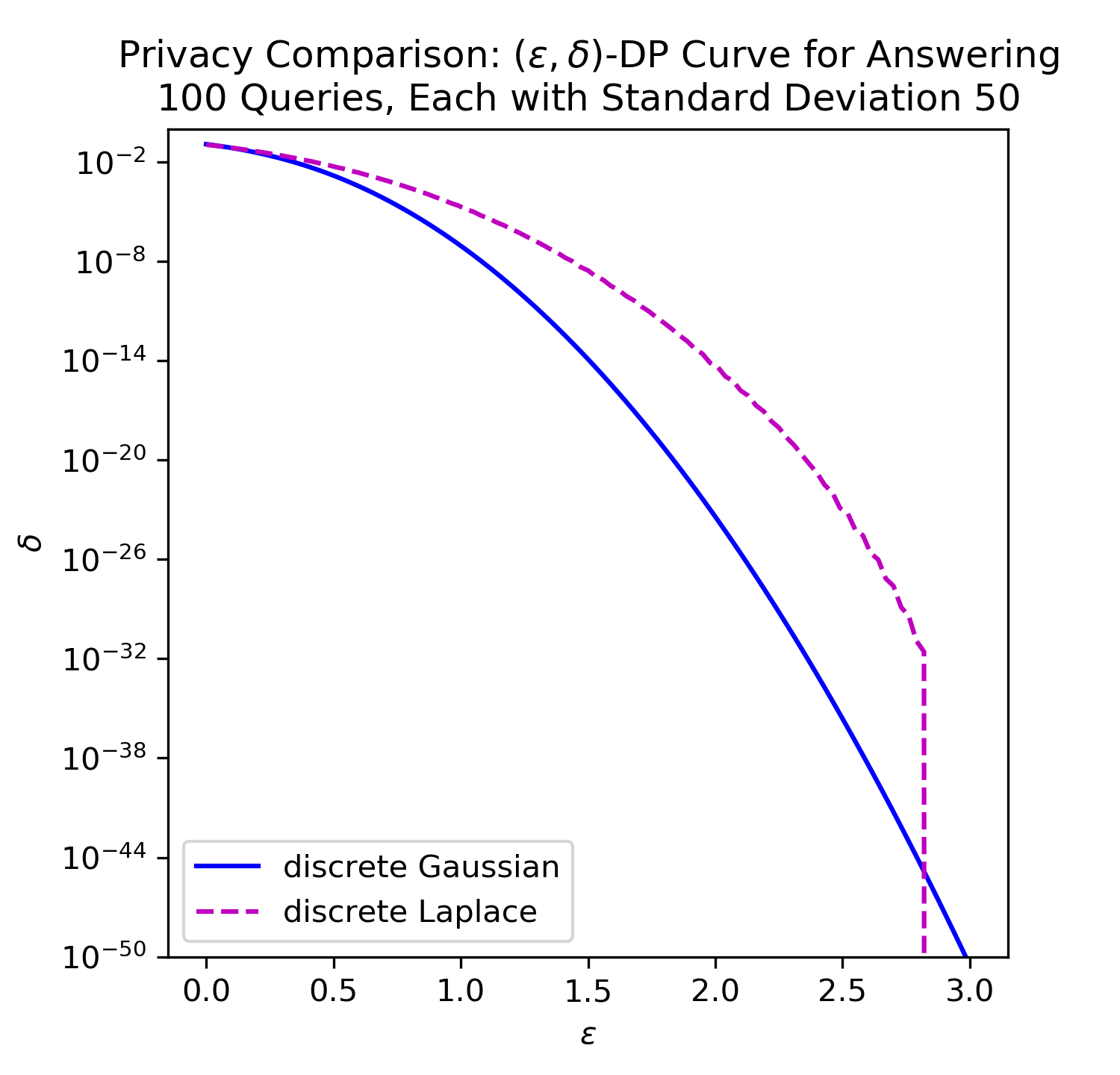}
    \end{minipage}
    \hspace{-10pt}
    \begin{minipage}{0.5\textwidth}
        \includegraphics[width=\textwidth]{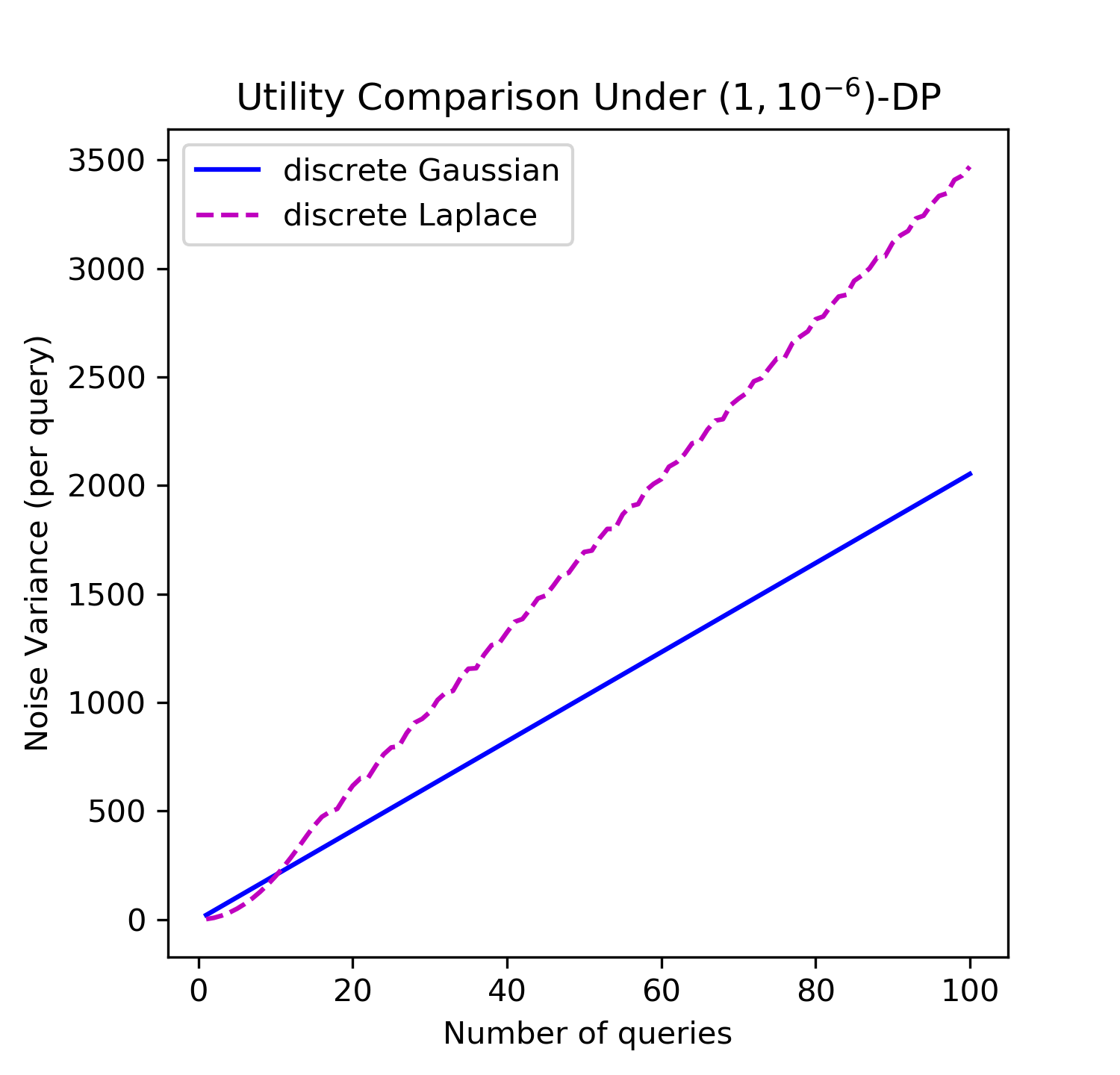}
    \end{minipage}
    \caption{\label{fig:gausslaplace} Comparison of discrete Gaussian and Laplace noise addition. Left: Utility is fixed (i.e., answer $k=100$ counting queries each with variance $50^2$ )and we consider the curve of approximate $(\varepsilon,\delta)$-differential privacy guarantees that we can achieve. Right: Privacy is fixed (i.e., approximate $(1,10^{-6})$-differential privacy) and we consider the utility (i.e., variance of noise added to each answer) as we vary the number of counting queries to be answered.}
\end{figure}

In Figure~\ref{fig:gausslaplace}, we compare the discrete Gaussian and the discrete Laplace in two ways. First (on the left), we fix the utility and compare the approximate differential privacy guarantees. Specifically, we fix the task of answering $k=100$ counting queries with the noise added to each value having variance $50^2$. Both distributions yield different curves of $(\varepsilon,\delta)$-differential privacy guarantees and there are many points to consider. We see that, for this task, the discrete Gaussian attains better $(\varepsilon,\delta)$-differential privacy guarantees except for extremely small $\delta$ -- specifically, $\delta<10^{-45}$. For $\varepsilon=1$, the discrete Gaussian provides $(1,10^{-7})$-differential privacy for this task, whereas the discrete Laplace only provides $(1,206 \times 10^{-7})$-differential privacy. If we demand pure differential privacy, then the discrete Laplace provides $(2.83,0)$-differential privacy, but the discrete Gaussian cannot provide pure differential privacy. The separation becomes more pronounced as the number of queries grows.

Second (on the right of Figure \ref{fig:gausslaplace}), we fix the privacy goal to approximate $(1,10^{-6})$-differential privacy. We vary the number of counting queries (from $k=1$ to $k=100$) and measure the variance of the noise that must be added to each query answer. For a small number of queries ($k \le 10$), the discrete Laplace gives lower variance. However, as the number of queries increases, we see that the discrete Laplace requires higher variance; for $k=100$, the variance is $69\%$ more.

Overall, Figure~\ref{fig:gausslaplace} demonstrates that the discrete Gaussian provides a better privacy-utility tradeoff than the discrete Laplace, except in two narrow parameter regimes: Either a small number of queries or if we demand something very close to pure differential privacy. We only compared variances; if we compare confidence interval sizes instead, then this would further advantage the Gaussian, which has lighter tails.

\section{Sampling}
\label{sec:sampling}

In this section, we show how to efficiently sample exactly from a discrete Gaussian on a finite computer given access only to uniformly random bits. Such algorithms are already known \cite{Karney16,DuFW20}. However, we include this both for completeness because we believe that our algorithms are simpler than the prior work. A sample \texttt{Python} implementation is available online \cite{DGaussGithub}.

For simplicity, we focus our discussion of runtime only on the expected number of arithmetic operations; each such operation will take time polylogarithmic in the bit complexity of the parameters (e.g., in the representation of $\sigma^2$ as a rational number). We elaborate on this at the end of the section. 

In Algorithm~\ref{fig:dgauss2}, we present a simple and fast algorithm for discrete Gaussian sampling, with the following guarantees:
\begin{thm}
  \label{theo:sampling:discrete:gaussian}
  On input $\sigma^2\in\mathbb{Q}$, the procedure described in Algorithm~\ref{fig:dgauss2} outputs one sample from $\dgauss{\sigma^2}$ and requires only a constant number of  operations in expectation.
\end{thm}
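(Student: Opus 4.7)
The plan is to analyze a rejection-sampling scheme built on top of a discrete Laplace sampler, establishing both exact correctness (the output has law $\dgauss{\sigma^2}$) and a constant expected-operation bound, with all computations kept in the rationals.

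First I would set up a tower of primitive samplers that each use only integer arithmetic and uniformly random bits. The base primitive is a Bernoulli with bias $e^{-\gamma}$ for a rational $\gamma \geq 0$. This can be implemented by a von~Neumann-style trick (generate a descending chain of uniform rationals with the relevant comparisons, and output the parity of its length), which uses only rational comparisons and has expected length $O(1)$ uniformly in $\gamma$ on any bounded interval; for $\gamma > 1$ one first splits $\gamma$ into an integer part (handled by i.i.d.\ trials) and a fractional part. From Bernoulli$(e^{-1/t})$ one builds a geometric with parameter $1 - e^{-1/t}$ by repeated trials, and a symmetric sign bit then gives a $\discL(t)$ sampler. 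Each layer costs $O(1)$ expected operations.

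Given $\sigma^2 \in \mathbb{Q}_{>0}$, choose an integer scale $t \eqdef \lfloor \sigma \rfloor + 1$ (or any convenient choice of order $\sigma$), draw $Y \gets \discL(t)$, and accept with probability $\exp\!\bigl(-(|Y| - \sigma^2/t)^2 / 2\sigma^2\bigr)$. Correctness is immediate from the algebraic identity
\[
\frac{e^{-y^2/2\sigma^2}}{e^{-|y|/t}} \;=\; e^{\sigma^2/2t^2} \cdot \exp\!\left(-\frac{(|y| - \sigma^2/t)^2}{2\sigma^2}\right),
\]
which shows that the target mass is exactly proportional to the product of the envelope mass and the acceptance probability, so conditional on acceptance the output law is $\dgauss{\sigma^2}$. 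The acceptance test is again a single Bernoulli with a rational exponential parameter, so it fits into the same constant-expected-time framework.

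The main obstacle is bounding the expected number of rejection rounds by a constant independent of $\sigma$. For this I would compare normalization constants: the discrete Laplace normalization is $\Theta(t) = \Theta(\sigma)$ for $\sigma \gtrsim 1$, the discrete Gaussian normalization is $\sqrt{2\pi\sigma^2}(1+o(1))$ by Fact~\ref{fact:normalization:constant}, and the envelope blow-up factor $e^{\sigma^2/2t^2}$ is bounded by a universal constant since $t \geq \sigma$. Multiplying these three estimates gives an acceptance probability bounded below by an absolute constant $c > 0$ for all $\sigma \gtrsim 1$. The small-$\sigma$ regime ($\sigma \lesssim 1$) is handled separately: here $\dgauss{\sigma^2}$ is essentially supported on $\{-1,0,1\}$, and the tighter estimates of Fact~\ref{fact:normalization:constant:better} give a constant acceptance probability directly (alternatively, one can fall back to direct sampling by computing the three normalized probabilities exactly as rationals). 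Combining the uniform lower bound on acceptance with the $O(1)$-expected-time guarantee of every subroutine yields the stated complexity, and correctness of the overall composition follows from the standard analysis of rejection sampling.
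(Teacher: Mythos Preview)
Your proposal is correct and follows essentially the same route as the paper: rejection sampling from a discrete Laplace envelope with scale $t=\lfloor\sigma\rfloor+1$, the algebraic identity you state for correctness, and a lower bound on the acceptance probability obtained by combining $e^{-\sigma^2/2t^2}\ge e^{-1/2}$ with the normalization-constant estimate $\sum_y e^{-y^2/2\sigma^2}\ge\max\{1,\sqrt{2\pi\sigma^2}\}$. The paper carries out a single unified bound (arriving at $\mathbb{E}[C]>0.29$) rather than splitting into large and small~$\sigma$, but your case split works just as well; one caveat is that your parenthetical fallback for small $\sigma$ (``computing the three normalized probabilities exactly as rationals'') does not actually work, since those probabilities involve $e^{-1/2\sigma^2}$ and are not rational---but your primary argument via the normalization bounds is sound and this alternative is unnecessary.
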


At a high level, the idea behind the algorithm is to first sample from a discrete Laplace distribution and then ``convert'' this into a discrete Gaussian by rejection sampling. In order to do so, we provide two subroutines, which we believe to be of independent interest: the first, to efficiently and exactly sample from a Bernoulli with parameter $e^{-\gamma}$, for any rational parameter $\gamma \geq 0$ (Proposition~\ref{prop:sampling:expbernoulli}). The second, to efficiently and exactly sample from a discrete Laplace with scale parameter $t$, for any positive integer $t$ (Proposition~\ref{prop:sampling:discr:laplace}).

\subsection{Sampling $\bern(\exp(-\gamma))$}

Our first subroutine, Algorithm~\ref{fig:bernexp}, describes how to reduce the task of sampling from  $\bern(\exp(-\gamma))$ to that of sampling from $\bern(\gamma/k)$ for various integers $k \ge 1$. 
This procedure is based on a technique of von Neumann \cite{vonNeumann51,Forsythe72}. This procedure avoids complex operations, such as computing the exponential function. Thus, for a rational $\gamma$, this can be implemented on a finite computer. Specifically, for $n,d \in \mathbb{N}$, to sample $\bern(n/d)$ it suffices to draw $D \in \{1,2,\dots,d\}$ uniformly at random and output $1$ if $D \le n$ and output $0$ if $D > n$. (To sample $D \in \{1,2,\cdots,d\}$ we can again use rejection sampling -- that is, we uniformly sample $D \in \{1,2,\cdots,2^{\lceil \log_2 d \rceil}\}$ and reject and retry if $D>d$.)

In the rest of the analysis, we assume for the sake of abstraction that sampling $\bern(n/d)$ given $n,d \in \mathbb{N}$ requires a constant number of arithmetic operations in expectation.
\begin{prop}
  \label{prop:sampling:expbernoulli}
  On input (rational) $\gamma\geq 0$, the procedure described in Algorithm~\ref{fig:bernexp} outputs one sample from $\bern(\exp(-\gamma))$, and requires a constant number of operations in expectation.
\end{prop}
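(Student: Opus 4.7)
My plan is to analyze Algorithm~\ref{fig:bernexp} by first handling the case $\gamma \in [0,1]$ via von Neumann's rejection idea, then reducing the general case $\gamma > 1$ to this base case by exploiting $e^{-\gamma} = e^{-\lfloor \gamma \rfloor} \cdot e^{-(\gamma - \lfloor \gamma \rfloor)}$.

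For $\gamma \in [0,1]$, the algorithm (following von Neumann/Forsythe) generates a random index $K \geq 1$ by drawing independent samples $B_k \sim \bern(\gamma/k)$ for $k = 1, 2, \ldots$, setting $K$ equal to the first index for which $B_k = 0$, and then outputting $1$ iff $K$ is odd. The first step is to show
\[
\pr{}{K \geq k} = \prod_{j=1}^{k-1} \frac{\gamma}{j} = \frac{\gamma^{k-1}}{(k-1)!},
\]
from which $\pr{}{K=k} = \frac{\gamma^{k-1}}{(k-1)!} - \frac{\gamma^{k}}{k!}$. Summing over odd $k$, the series telescopes into the Taylor series of $e^{-\gamma}$:
\[
\pr{}{K\text{ odd}} = \sum_{j=0}^{\infty}\left(\frac{\gamma^{2j}}{(2j)!} - \frac{\gamma^{2j+1}}{(2j+1)!}\right) = e^{-\gamma},
\]
which establishes correctness in the base case. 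For the runtime in this regime, I would compute $\ex{}{K} = \sum_{k \geq 1} \pr{}{K \geq k} = \sum_{k \geq 0} \gamma^{k}/k! = e^{\gamma} \leq e$, so only a constant expected number of $\bern(\gamma/k)$ draws is made; combined with the assumption in the paragraph before the proposition that each such draw uses $O(1)$ expected operations, this gives a constant expected cost.

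For $\gamma > 1$, write $\gamma = \lfloor \gamma \rfloor + \{\gamma\}$ and observe that a $\bern(e^{-\gamma})$ sample is the product (AND) of $\lfloor \gamma \rfloor$ independent $\bern(e^{-1})$ samples and one independent $\bern(e^{-\{\gamma\}})$ sample. The algorithm therefore draws $\bern(e^{-1})$ samples in a loop, returning $0$ the instant one of them is $0$, and only after $\lfloor \gamma \rfloor$ successes does it invoke the base-case sampler on the fractional part $\{\gamma\} \in [0,1)$. Independence and the factorization $e^{-\gamma} = (e^{-1})^{\lfloor \gamma \rfloor} \cdot e^{-\{\gamma\}}$ immediately yield correctness.

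The subtlest point, and what I expect to be the main obstacle, is the runtime analysis when $\gamma$ is large, since naively the loop could run $\lfloor \gamma \rfloor$ times and $\gamma$ is unbounded. The key observation is that the number of outer iterations is stochastically dominated by a geometric random variable with success (abort) probability $1 - e^{-1}$: each iteration aborts the loop with probability $1 - e^{-1}$, independently, so the expected number of iterations before abort or completion is at most $\sum_{k \geq 0} e^{-k} = 1/(1-e^{-1})$, a constant independent of $\gamma$. Each iteration, being itself a call to the $\gamma = 1$ base case, costs $O(1)$ expected operations by the analysis above, and the terminal call on the fractional part likewise costs $O(1)$. Multiplying these constants through (using Wald's identity or the tower property) yields a constant expected total number of arithmetic operations, completing the proof.
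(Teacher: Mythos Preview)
Your proposal is correct and follows essentially the same approach as the paper's proof: both establish $\pr{}{K>k}=\gamma^k/k!$ for the base case $\gamma\in[0,1]$, sum the alternating series to get $e^{-\gamma}$, bound $\ex{}{K}=e^{\gamma}\le e$, and for $\gamma>1$ use the factorization $e^{-\gamma}=(e^{-1})^{\lfloor\gamma\rfloor}e^{-\{\gamma\}}$ together with the observation that the number of $\bern(e^{-1})$ draws is dominated by a geometric random variable with parameter $1-e^{-1}$. Your explicit invocation of Wald/tower for the final expectation is a small elaboration beyond what the paper writes, but the argument is the same.
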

\begin{proof}
    First, consider the case where $\gamma\in[0,1]$. For the analysis, we let $A_k$ denote the value of $A$ in the $k$-th iteration of the loop in the algorithm, and $K^\ast$ denote the final value of $K$ upon exiting the loop. Then, for all $k \in \{0,1,2,\cdots\}$, we have 
    \[
      \pr{}{K^\ast > k} = \pr{}{A_1=A_2=\cdots=A_k=1} = \prod_{i=1}^k \pr{}{A_i=1} = \prod_{i=1}^k \frac{\gamma}{i} = \frac{\gamma^k}{k!}.
    \]
    Thus
    \begin{align*}
    \pr{}{K^\ast~\text{odd}} &= \sum_{k=0}^\infty \pr{}{K^\ast=2k+1}
    = \sum_{k=0}^\infty ( \pr{}{K^\ast>2k} - \pr{}{K^\ast>2k+1} )
    = \sum_{k=0}^\infty \left( \frac{\gamma^{2k}}{(2k)!} - \frac{\gamma^{2k+1}}{(2k+1)!} \right)
    \end{align*}
    which is equal to $e^{-\gamma}$ as desired. Further, the expected number of operations is simply 
    $
    T(\gamma)=O(\ex{}{K^\ast}) = O\!\left(\sum_{k=0}^\infty \pr{}{K^\ast > k}\right) = O(e^\gamma) = O(1).
    $\smallskip

    Now, if $\gamma > 1$, the procedure performs (at most) $\ell \eqdef  \lfloor \gamma \rfloor + 1$ independent sequential recursive calls, getting $\ell$ independent samples $B_1,\cdots,B_{\ell-1}\sim\bern(\exp(-1))$ and $C\sim\bern(\exp(-(\gamma-\lfloor\gamma\rfloor)))$. Its output is then distributed as a Bernoulli random variable with parameter\\$\pr{}{B_1=B_2=\cdots=B_{\ell-1}=C=1} = \prod_{i=1}^\ell\pr{}{B_i=1} \cdot \pr{}{C=1} = \exp(-1)^{\lfloor \gamma \rfloor} \cdot \exp(\gamma-\lfloor\gamma\rfloor) = \exp(-\gamma)$, as desired.
    
    The number of recursive calls is at most $\ell$. However, the recursive calls will stop as soon as $B=0$ for the first time. We have $\pr{}{B=0} = 1-e^{-1}$. Thus the number of recursive calls follows a truncated geometric distribution. The expected number of recursive calls is constant and, therefore, the expected number of operations is too.
\end{proof}

\begin{algorithm}[ht!]
  \begin{algorithmic}
  \Require  Parameter $\gamma\geq 0$.
  \Ensure  One sample from $\bern(\exp(-\gamma))$.
  \If{$\gamma \in [0,1]$}
    \State{Set $K \gets 1$.}
    \Loop
      \State{Sample $A \gets \bern(\gamma/K)$.}
      \If{$A=0$} break the loop. \EndIf
      \If{$A=1$} set $K \gets K+1$ and continue the loop. \EndIf
    \EndLoop
    \If{$K$ is odd} \Return $1$.\EndIf
    \If{$K$ is even} \Return $0$.\EndIf
  \Else
    \For{$k=1$ \textbf{to} $\lfloor \gamma \rfloor$}
        \State{Sample $B\gets \bern(\exp(-1))$} \Comment{Recursive call.}
        \If{$B=0$} break the loop and \Return 0.\EndIf
    \EndFor
    \State{Sample $C \gets \bern(\exp(\lfloor \gamma \rfloor - \gamma))$}
    \Comment{Recursive call. $\gamma-\lfloor\gamma\rfloor \in [0,1]$.}
    \State{\Return $C$.}
  \EndIf
  \end{algorithmic}
  \caption{Algorithm for Sampling $\bern(\exp(-\gamma))$.}\label{fig:bernexp}
\end{algorithm}

\subsection{Sampling from a Discrete Laplace}
Now we show how to efficiently and exactly sample from a discrete Laplace distribution; see Section \ref{sec:dlap} for more about this distribution. Other methods for sampling from the discrete Laplace distribution are known \cite{ScheinWSZW19}.

\begin{algorithm}[ht!]
\begin{algorithmic}
\Require Parameters $s,t\in\mathbb{Z}$, $s,t\geq 1$.
\Ensure One sample from $\discL(t/s)$.
\Loop \Comment{Repeat until successful}
  \State{Sample $U \in \{0,1,2,\cdots,t-1\}$ uniformly at random.}
  \State{Sample $D \gets \bern(\exp(-U/t))$.}  \Comment{Use Algorithm~\ref{fig:bernexp}.}
  \If{$D=0$} reject and restart. \EndIf
  \State{Initialize $V \gets 0$.}
  \Loop \Comment{Generate $V$ from $\mathsf{Geometric}(1-e^{-1})$.}
  \State{Sample $A \gets \bern(\exp(-1))$.}     \Comment{Use Algorithm~\ref{fig:bernexp}.}
  \If{$A=0$} break the loop. \EndIf
  \If{$A=1$} set $V \gets V+1$ and continue. \EndIf
  \EndLoop
  \State Set $X \gets U + t \cdot V$. \Comment{$X$ is $\mathsf{Geometric}(1-e^{-1/t})$.}
  \State Set $Y \gets \lfloor X/s\rfloor$ \Comment{$Y$ is $\mathsf{Geometric}(1-e^{-s/t})$.}
  \State{Sample $B \gets \bern(1/2)$. }
  \If{$B=1$ and $Y=0$} reject and restart. \EndIf
  \State\Return $Z\gets (1-2B) \cdot Y$. \Comment{Success; $Z$ is a discrete Laplace.}
\EndLoop
\end{algorithmic}
\caption{Algorithm for Sampling a Discrete Laplace}\label{fig:discr:laplace}
\end{algorithm}

\begin{prop}
  \label{prop:sampling:discr:laplace}
  On input $s,t \in \mathbb{Z}$ with $s,t\ge1$, the procedure described in Algorithm~\ref{fig:discr:laplace} outputs one sample from $\discL(t/s)$, and requires a constant number of operations in expectation.
\end{prop}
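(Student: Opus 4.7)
The plan is to trace through Algorithm~\ref{fig:discr:laplace} one step at a time, identifying the distribution of each intermediate random variable, and then separately bound the expected runtime. The main obstacle will be the joint analysis of $(U,V,D)$ in the very first part of the algorithm: $X = U + tV$ is not a priori geometric on $\mathbb{Z}_{\ge 0}$; it only becomes so after the rejection $D=1$ reweights each residue class $u$ modulo $t$ by exactly the factor $e^{-u/t}$ needed to interpolate the ``block-level'' geometric in $V$ into a fine-grained geometric in $X$. Spelling out this decomposition correctly is the only non-routine ingredient.

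In detail, I would first analyze the distribution of $X$ conditional on $D=1$. The inner loop produces $V$ with $\pr{}{V=v} = (1-e^{-1})e^{-v}$ for $v \geq 0$ (by the same calculation used in the analysis of Algorithm~\ref{fig:bernexp}), and $U$ is independently uniform on $\{0,\ldots,t-1\}$. Because $\pr{}{D=1 \mid U=u} = e^{-u/t}$, the joint probability of $(U=u,V=v)$ restricted to the event $\{D=1\}$ is proportional to $e^{-u/t-v}=e^{-(u+tv)/t}$. Since each nonnegative integer $x$ decomposes uniquely as $u + tv$ with $u \in \{0,\ldots,t-1\}$ and $v \geq 0$, this shows that $X$, conditional on $D=1$, is geometric with parameter $1-e^{-1/t}$.

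Next I would show that $Y = \lfloor X/s \rfloor$ is then geometric with parameter $1-e^{-s/t}$ (by summing the geometric in blocks of length $s$), and that multiplying by the uniform sign $1-2B$ with the rejection $(B=1)\wedge(Y=0)$ exactly compensates for the double-counting of $z=0$ that would otherwise occur. A quick normalization computation, using the identity $\sum_{z\in\mathbb{Z}} e^{-s|z|/t} = (1+e^{-s/t})/(1-e^{-s/t})$, will then match the resulting output PMF to $\frac{e^{s/t}-1}{e^{s/t}+1}e^{-s|z|/t}$, which is $\discL(t/s)$ by definition.

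For the runtime I would bound a single outer-loop iteration by $O(1)$ expected operations: the uniform draw of $U$ takes $O(1)$ expected rejections against the next power of two, each $\bern(\exp(-\gamma))$ call with $\gamma \in [0,1]$ costs $O(1)$ in expectation by Proposition~\ref{prop:sampling:expbernoulli}, and the inner geometric loop has constant expected length. The per-iteration acceptance probability equals $\pr{}{D=1}\cdot\pr{}{\neg((B=1)\wedge(Y=0))\mid D=1}$: the first factor is $\frac{1}{t}\sum_{u=0}^{t-1}e^{-u/t}$, which is a left Riemann sum for $\int_0^1 e^{-x}\,\dx$ and, as $e^{-x}$ is decreasing, is therefore at least $1-e^{-1}$; the second factor is at least $1/2$. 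Thus the overall acceptance probability is at least $(1-e^{-1})/2$, independent of $s$ and $t$, giving $O(1)$ expected outer-loop iterations and hence $O(1)$ expected total operations.
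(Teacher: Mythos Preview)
Your proposal is correct and follows essentially the same approach as the paper: both first show that $X\mid D=1$ is $\mathsf{Geometric}(1-e^{-1/t})$ via the residue-and-quotient decomposition $x=u+tv$, then that $Y=\lfloor X/s\rfloor$ is $\mathsf{Geometric}(1-e^{-s/t})$, then handle the sign and the $(B,Y)=(1,0)$ rejection, and finally bound the per-iteration acceptance probability below by $(1-e^{-1})/2$. The only cosmetic difference is that the paper bounds $\pr{}{D=1}=\frac{1}{t}\sum_{u=0}^{t-1}e^{-u/t}$ by first summing the geometric series to $\frac{1-e^{-1}}{t(1-e^{-1/t})}$ and then using $t(1-e^{-1/t})\le 1$, whereas you reach the same conclusion via the left-Riemann-sum observation; both are equally valid and yield the identical numerical bound.
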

\begin{proof}
To prove the theorem, we must verify two things: (1)~we must show that, for each attempt (i.e., for each iteration of the outer loop), conditioned on outputting a value $Z$ (henceforth referred to as \emph{success}, and denoted $\top$), the distribution of the output $Z$ is $\discL(t/s)$ as desired; (2)~we must lower bound the probability that a given loop iteration is successful. This ensures that the loop terminates quickly, giving the bound on the runtime.

To begin, we show that, conditioned on $D=1$, $X$ follows a geometric distribution with parameter $\tau\eqdef 1/t$. Specifically, $\pr{}{X=x \mid D=1}=(1-e^{-\tau}) \cdot e^{-x\tau}$ for every integer $x\geq 0$. For any such $x$, let $u_x \eqdef x\bmod t$ and $v_x = \lfloor x/t\rfloor$, so that $x=u_x + t\cdot v_x$. It is immediate to see that, as defined by the algorithm, $V$ is independent of both $U$ and $D$ and follows a geometric distribution with parameter $1-e^{-1}$: that is, $\pr{}{V=k} = (1-e^{-1}) \cdot e^{-k}$ for every integer $k\geq 0$. We thus have
\begin{align*}
  \pr{}{X=x \mid D=1} 
  &= \pr{}{ U = u_x, V = v_x \mid D=1}
  = \pr{}{ U = u_x \mid D=1 }\cdot\pr{}{ V = v_x }\\
  &= \frac{ \pr{}{U= u_x} }{\pr{}{D=1}}\cdot\pr{}{ D=1 \mid U=u_x } \cdot (1-e^{-1}) \cdot e^{-v_x} \\
  &= \frac{1/t}{(1/t)\sum_{k=0}^{t-1} e^{-k/t}}\cdot e^{- u_x/t}\cdot(1-e^{-1}) \cdot e^{-v_x} \\
  &= (1-e^{-1/t})\cdot e^{-(u_x/t+v_x)} 
  = (1-e^{-1/t})\cdot e^{-x/t} 
\end{align*}
as claimed.%

We then claim that $Y = \lfloor\frac{X}{s}\rfloor$ (conditioned on $D=1$) follows a $\mathsf{Geometric}(1-e^{-s/t})$ distribution -- i.e., $\pr{}{Y=y \mid D=1} = (1-e^{-s/t}) \cdot e^{-y \cdot s/t}$ for all integers $y \ge 0$. This is an immediate consequence of the following fact.
\begin{fact}
Fix $p\in(0,1]$. Let $G$ be a $\mathsf{Geometric}(1-p)$ random variable, and $n\geq 1$ be an integer. Then $\left\lfloor\frac{G}{n}\right\rfloor$ is a $\mathsf{Geometric}(1-q)$ random variable for $q=p^n$.
\end{fact}
\begin{proof}
    For any integer $k\geq 0$, 
    \begin{align*}
        \pr{}{ \lfloor G/n\rfloor = k } 
        &= \pr{}{ nk \leq G < (k+1)n } 
        = \sum_{\ell=kn}^{(k+1)n-1} (1-p)p^\ell 
        = (1-p^n)p^{nk} = (1-q)q^k.
    \end{align*}
\end{proof}

With this in hand, we analyze the distribution of $Z$ conditioned on $\top$ (success), i.e., conditioned on $D=1$ and $(B,Y)\neq(1,0)$.\footnote{Note that this later condition is added to the algorithm to avoid double-counting the probability that $Z=0$.} Let $\alpha \eqdef s/t$ for convenience.  Recalling $B$ is independent of $D$ and that $B$ and $Y$ (conditioned on $D=1$) are independent, we have
\begin{align*}
    \pr{}{(B,Y)\neq (1,0) \mid D=1} &= \pr{}{ B=1, Y>0 \mid D=1}+\pr{}{B=0 \mid D=1} = \frac{1}{2}(e^{-\alpha}+1)\,.
\end{align*}

\noindent For every $z\in\Z$, we have, recalling that $B$ and $Y$ are independent,
\begin{align*}
    \pr{}{Z=z \mid \top} &= \pr{}{(1-2B)Y=z|(B,Y)\ne(1,0),D=1}\\
    &=  \frac{ \pr{}{(1-2B)Y=z, (B,Y)\neq (1,0) \mid D=1} }{  \pr{}{(B,Y)\neq (1,0) \mid D=1} } \\
    &=  \frac{\pr{}{Y=|z|, B=\mathbb{I}[z<0] \mid D=1} }{  \pr{}{(B,Y)\neq (1,0) \mid D=1} } \\
    &=  \frac{ \pr{}{Y=|z| \mid D=1} \cdot \pr{}{B=\mathbb{I}[z<0] \mid D=1} }{\pr{}{(B,Y)\neq (1,0) \mid D=1}}\\
    &=  \frac{ \pr{}{Y=|z| \mid D=1} }{2\pr{}{(B,Y)\neq (1,0) \mid D=1}}\\
    &= \frac{(1-e^{-\alpha}) \cdot e^{-|z| \cdot \alpha}}{e^{-\alpha}+1},
\end{align*}
by our previous computations. Thus, conditioned on success, $Z$ follows a $\discL(1/\alpha)$ distribution.

We then bound the probability that a fixed iteration of the loop succeeds:
\[
    \pr{}{\top} = \pr{}{(B,Y)\neq (1,0)\mid D=1}\pr{}{D=1} = \frac{1+e^{-\alpha}}{2}\cdot \frac{1}{t}\sum_{u=0}^{t-1}e^{-u/t} = \frac{1+e^{-\alpha}}{2}\frac{1-e^{-1}}{t(1-e^{-1/t})} \geq \frac{1-e^{-1}}{2}.
\]
It follows that the number $N$ of iterations of the outer loop needed to output a value $Z$ is geometrically distributed and satisfies $\ex{}{N} \leq \frac{2}{1-e^{-1}} < 3.2$. Moreover, each iteration of the outer loop requires a constant number of operations in expectations. This is because each of the subroutines requires a constant number of operations in expectation and the inner loop runs a geometrically-distributed number of times which is constant in expectation. %
\end{proof}

\subsection{Sampling from a Discrete Gaussian}

In Algorithm~\ref{fig:dgauss2}, we prove Theorem~\ref{theo:sampling:discrete:gaussian} and present our algorithm that requires $O(1)$ operations on average to sample from a discrete Gaussian $\dgauss{\sigma^2}$. %

\begin{algorithm}[H]
\begin{algorithmic}
\Require Parameter $\sigma^2>0$.
\Ensure One sample from $\dgauss{\sigma^2}$.
\State{Set $t \gets \lfloor \sigma \rfloor+1$}
\Loop \Comment{Repeat until successful}
  \State{Sample $Y\gets \discL(t)$}  \Comment{Use Algorithm~\ref{fig:discr:laplace}}
  \State{Sample $C \gets \bern(\exp(-(|Y|-\sigma^2/t)^2/2\sigma^2))$.} \Comment{Use Algorithm~\ref{fig:bernexp}}
  \State{If $C=0$, reject and restart.}
  \State{If $C=1$, return $Y$ as output.}
  \Comment{Success; $Y$ is a discrete Gaussian.}
\EndLoop
\end{algorithmic}
\caption{Algorithm for Sampling a Discrete Gaussian}\label{fig:dgauss2}
\end{algorithm}

\begin{proof}[Proof of Theorem~\ref{theo:sampling:discrete:gaussian}]
    Fix any iteration of the loop, and let $t \gets \lfloor \sigma \rfloor+1$ and $\tau\eqdef 1/t$. Since $Y\gets \discL(1/\tau)$, we have that $C$ is a Bernoulli with parameter
    \[
          \ex{}{C} = \ex{}{ \ex{}{C \mid Y} } = \ex{}{ e^{-\frac{(|Y|-\sigma^2\tau)^2}{2\sigma^2}} } = \frac{1-e^{-\tau}}{1+e^{-\tau}} \sum_{y\in\Z} e^{-\frac{(|y|-\sigma^2\tau)^2}{2\sigma^2} - |y|\tau}
          = \frac{1-e^{-\tau}}{1+e^{-\tau}}e^{-\frac{\sigma^2\tau^2}{2}}\sum_{y\in\Z} e^{-\frac{y^2}{2\sigma^2}}.
    \]
    Thus, for any $y\in \Z$, conditioned on $C=1$ (i.e., on $Y$ being output) we have
    \begin{align*}
        \pr{}{ Y=y \mid C=1 } 
        &= \frac{ \pr{}{ C=1 \mid Y=y } \pr{}{Y=y} }{ \pr{}{C=1} }
        = \frac{ e^{-\frac{(|y|-\sigma^2\tau)^2}{2\sigma^2}} \cdot \frac{1-e^{-\tau}}{1+e^{-\tau}}\cdot e^{-|y|\tau} }{ \ex{}{C} }\\
        &= \frac{ e^{-\frac{(|y|-\sigma^2\tau)^2}{2\sigma^2}} \cdot e^{-|y|\tau} }{ e^{-\frac{\sigma^2\tau^2}{2}}\sum_{y'\in\Z} e^{-\frac{y'^2}{2\sigma^2}} }
        = \frac{ e^{-\frac{y^2}{2\sigma^2}} }{\sum_{y'\in\Z} e^{-\frac{y'^2}{2\sigma^2}} }.
    \end{align*}
    That is, conditioned on outputting a value, this value is indeed distributed according to $\dgauss{\sigma^2}$.
    
    We now turn to the runtime analysis. First, recalling (1)~that $\sigma^2\tau^2 < 1$ and $\sigma\ge t-1$, by our choice of $t = 1/\tau = \lfloor \sigma \rfloor + 1 > \sigma$, and (2)~the bound $\sum_{y\in\Z} e^{-\frac{y^2}{2\sigma^2}} \ge \max\{1,\sqrt{2\pi\sigma^2}\}$ from Fact~\ref{fact:normalization:constant}, we have
    $$
    \ex{}{C} \ge \frac{1-e^{-1/t}}{1+e^{-1/t}}e^{-\frac{1}{2}} \max\{1,\sqrt{2\pi}\sigma\} \ge \frac{e^{-1/2} \sqrt{2\pi}}{2} (1-e^{-1/t})\max\{1,t-1\} > 0.29.
    $$
    Therefore the probability that the algorithms succeeds and outputs a value in any given iteration of the loop is lower bounded by a positive constant. Thus the number of iterations of the loop follows a geometric distribution and is constant in expectation. Since, for each iteration, the expected number of operations required to sample $Y$ and $C$ is constant (by Propositions~\ref{prop:sampling:discr:laplace} and~\ref{prop:sampling:expbernoulli}) the overall number of operations is constant in expectation.
\end{proof}

\subsection{Runtime Analysis}
We have stated that our algorithms require a constant number of operations in expectation. We now elaborate on this.

We assume a Word RAM model of computation. In particular, we assume that arithmetic operations on the parameters count as one operation. Specifically, we assume that the parameter $\sigma^2$ is represented as a rational number (i.e., two integers in binary) and that this fits in a constant number of words. If we measure complexity in terms of bits (rather than words), then all operations run in time polynomial in the description length of the input $\sigma^2$. We emphasize that, if the parameter $\sigma^2$ is rational, then all operations are over rational numbers; we only apply basic field operations and comparisons and do not evaluate any functions like the exponential function or the square root\footnote{We do compute $t = \lfloor \sqrt{\sigma^2} \rfloor + 1 = \inf\{n \in \mathbb{N} : n^2>\sigma^2\}$ and count this as a single operation; this can be done exactly with rational operations (and binary search).} that would require approximations or moving outside the rational field. The memory (i.e., number of words) used by our algorithms is logarithmic in the runtime and constant in expectation. (The only way the memory usage grows is the counters associated with some loops.)

The runtime of our algorithms is random. Beyond showing that the number of operations is constant in expectation, it is possible to show, for all of our algorithms, that it is a subexponential random variable. We give a precise definition of this term.
\begin{defn}
A nonnegative random variable $X$ is said to be \emph{$\lambda$-subexponential} if $\ex{}{e^{X/\lambda}} \leq e$. And $X$ is said to be \emph{subexponential} if it is $\lambda$-subexponential for some finite $\lambda>0$. %
\end{defn}
The constant $e$ in the definition is arbitrary. Note that, if $X$ is $\lambda$-subexponential, then $\pr{}{X \ge t} \le \ex{}{e^{(X-t)/\lambda}} \le e^{1-t/\lambda}$ for all $t \ge 0$.

Our algorithms effectively consist of a constant number of nested loops and the number of times each of them runs is subexponential. For most of our loops, they have a constant probability of terminating in each run, which means the number of times they run follows a geometric distribution, which is a subexponential random variable. 

It turns out that such nested loops also have a subexpoential runtime. Specifically, one can show that, if $X_1,\dots,X_n,\dots$ are independent subexponential random variables and $T$ is a stopping time that is subexponential, then $\sum_{n=1}^T X_n$ is still subexponential:
\begin{lem}
Let $\alpha,\beta>1$.
Suppose $(X_n)_{1\leq n\leq \infty}$ are independent non-negative $\alpha$-subexponential random variables and $T$ is a $\beta$-subexponential stopping time. Then $S \eqdef \sum_{n=1}^T X_n$ is $\alpha\beta$-subexponential.
\end{lem}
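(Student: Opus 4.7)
The plan is to show directly that $\ex{}{e^{S/(\alpha\beta)}}\le e$. The workhorse is the elementary Jensen bound
\[
\ex{}{e^{X_n/(\alpha\beta)}}=\ex{}{(e^{X_n/\alpha})^{1/\beta}}\le\left(\ex{}{e^{X_n/\alpha}}\right)^{1/\beta}\le e^{1/\beta},
\]
which is valid because $x\mapsto x^{1/\beta}$ is concave on $[0,\infty)$ for $\beta>1$. This per-summand estimate is what lets us ``trade'' a factor of $1/\beta$ in the scale of each $X_n$ for a charge of $1/\beta$ per summand, which we will then pay off using the exponential moment of $T$.

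Next, I would condition on $T$. In the use case of this lemma---where $T$ counts the number of outer-loop iterations of a rejection sampler and the random bits deciding continuation are separate from those governing the per-iteration cost $X_n$---the $X_n$ are independent of each other \emph{and} of $T$. Under this independence, conditioning on $T$ does not perturb the joint law of $(X_1,\dots,X_T)$, and the expectation factors:
\[
\ex{}{e^{S/(\alpha\beta)}\mid T}=\prod_{n=1}^T \ex{}{e^{X_n/(\alpha\beta)}}\le e^{T/\beta}.
\]
Taking the outer expectation and invoking the $\beta$-subexponentiality of $T$ then yields
\[
\ex{}{e^{S/(\alpha\beta)}}\le \ex{}{e^{T/\beta}}\le e,
\]
which is exactly the $\alpha\beta$-subexponentiality of $S$.

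The one subtlety---and what I expect to be the main obstacle---is the ``stopping time'' language. If $T$ is genuinely adapted to the filtration $\mathcal{F}_n=\sigma(X_1,\dots,X_n)$, then conditioning on $\{T=t\}$ can skew the joint law of $X_1,\dots,X_t$ and the factorization above breaks down. The principled replacement is to introduce the nonnegative supermartingale $M_n := \prod_{k=1}^n e^{X_k/(\alpha\beta)}\cdot e^{-n/\beta}$---which is a supermartingale precisely because of the Jensen bound established in the first paragraph---and to apply the optional stopping theorem together with Fatou's lemma at the truncated stopping times $T\wedge N$, yielding $\ex{}{M_T}\le 1$, i.e., $\ex{}{e^{S/(\alpha\beta)-T/\beta}}\le 1$. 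Recombining this with $\ex{}{e^{T/\beta}}\le e$ via Cauchy--Schwarz gives only the weaker conclusion that $S$ is $(2\alpha\beta)$-subexponential, so recovering the sharp $\alpha\beta$ constant in the fully adapted case would require extra structure; fortunately this does not arise in the paper's applications, where $T$ is always independent of the inner costs and the direct conditioning argument above is clean and sharp.
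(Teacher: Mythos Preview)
Your independent-case argument is correct and, as you note, suffices for the paper's applications. However, the lemma is stated for stopping times, and the paper's proof does address that generality via optional stopping---so the core of the paper's argument is precisely your ``principled replacement,'' not the direct conditioning.

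The paper's supermartingale differs from yours in one respect: it is run at the full scale $1/\alpha$ rather than $1/(\alpha\beta)$. With $Y_n := X_n/\alpha - 1$ one has $\ex{}{e^{Y_n}} \le 1$ directly from the hypothesis (no Jensen needed at this stage), and optional stopping for the nonnegative supermartingale $M_n = e^{\sum_{k\le n} Y_k}$ gives $\ex{}{e^{S/\alpha - T}} \le 1$. Writing $e^{S/(\alpha\beta)} = (e^{S/\alpha-T})^{1/\beta}\cdot e^{T/\beta}$ and applying H\"older with conjugate exponents $\bigl(\beta,\tfrac{\beta}{\beta-1}\bigr)$ rather than Cauchy--Schwarz yields
\[
\ex{}{e^{S/(\alpha\beta)}}\;\le\;\ex{}{e^{S/\alpha-T}}^{1/\beta}\cdot\ex{}{e^{T/(\beta-1)}}^{1-1/\beta}\;\le\;\ex{}{e^{T/(\beta-1)}}^{1-1/\beta}.
\]
The paper then invokes Jensen to bound this last quantity by $\ex{}{e^{T/\beta}}\le e$. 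You should note, though, that Jensen for the concave map $y\mapsto y^{(\beta-1)/\beta}$ actually points the other way, giving $\ex{}{e^{T/\beta}}\le\ex{}{e^{T/(\beta-1)}}^{1-1/\beta}$, so that final step is not justified as written, and your caution about recovering the sharp constant $\alpha\beta$ in the genuinely adapted case is well-placed. The qualitative conclusion---that $S$ is $O(\alpha\beta)$-subexponential---is unaffected, and your Cauchy--Schwarz route to $2\alpha\beta$ is a valid proof of it.
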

\begin{proof}
We will require the following simple result:
\begin{clm}
  \label{theo:stopping:time:subexp:sum}
Let $(Y_n)_{n\geq 1}$ be independent random variables satisfying $\mathbb{E}[e^{Y_n}] \leq 1$ for all $n$, and let $T$ be a stopping time such that $T < \infty$ almost surely. Then $\mathbb{E}[e^{\sum_{n=1}^T Y_n}] \leq 1$.
\end{clm}
\begin{proof}
    For $n\geq 0$, let $M_n \eqdef e^{\sum_{k=1}^n Y_k} \geq 0$ (so that $M_0=1$). Note that 
    $
      \mathbb{E}[ M_{n+1} \mid M_1,\dots, M_n ] %
      = \mathbb{E}[ e^{X_{n+1}} ] M_n \leq M_n
    $, i.e., $(M_n)_{n\geq 0}$ is a supermartingale. By the optional stopping theorem for non-negative supermartingales (cf., e.g.,~\cite[Corollary~10.10(d)]{Williams91}, as $T$ is a.s. finite we get
    $
        \mathbb{E}[ M_T ] \leq \mathbb{E}[ M_0 ] = 1 \,,
    $
    that is, $\mathbb{E}[ e^{\sum_{n=1}^T Y_n} ] \leq 1$.
\end{proof}
Applying Claim~\ref{theo:stopping:time:subexp:sum} to $Y_n \eqdef X_n/\alpha-1$, we get
$
    \ex{}{e^{S/\alpha-T}}=\mathbb{E}[ e^{\sum_{n=1}^T X_n/\alpha - T} ] \leq 1
$ 
. 
Now, by H\"older's and Jensen's inequalities,
\[
    \ex{}{e^{S/\alpha\beta}} = \ex{}{e^{(S/\alpha-T)/\beta} \cdot e^{T/\beta}} \le \ex{}{e^{S/\alpha-T}}^{1/\beta} \cdot \ex{}{e^{T/(\beta-1)}}^{1-1/\beta} \le 1^{1/\beta} \cdot \ex{}{e^{T/\beta}} \le e.
\]
Thus $S$ is $\alpha\beta$-subexponential.
\end{proof}
In our case, $T$ corresponds to the number of times the loop runs and $X_n$ corresponds to the number of operations required inside the $n$-th run of the loop. Applying the above lemma to each nested loop shows that the overall runtimes of Algorithm~\ref{fig:bernexp}, Algorithm~\ref{fig:discr:laplace}, and Algorithm~\ref{fig:dgauss2} are all subexponential random variables.

This means in particular that, for any $\delta \in (0,1)$, to generate a batch of $k$ samples from $\dgauss{\sigma^2}$ (i.e., $k$ runs of Algorithm \ref{fig:dgauss2}), the probability of requiring more than $O(k + \log(1/\delta))$ arithmetic operations is at most $\delta$.\footnote{Generating batches of samples, rather than one sample at a time, means we only need to pay one $\log(1/\delta)$ in our analysis, rather than $O(k \cdot \log(1/\delta))$ if we analyse each sample separately. (If $X_1,\cdots,X_k$ are the number of operations of the $k$ runs, then $\pr{}{X_1 + \cdots + X_k \ge t} \le \ex{}{e^{(X_1 + \cdots + X_k-t)/\lambda}} \le e^{k-t/\lambda}$, where $\lambda$ is the subexponential constant of Algorithm \ref{fig:dgauss2}'s number of operations. Setting $t=\lambda(k+\log(1/\delta)$ ensures this probability is at most $\delta$.)} %
Each arithmetic operation corresponds effectively to a constant number of operations in the word RAM model. However, if the runtime is $T$ then the loop counters could require $O(\log T)$ words. Thus the number of word RAM operations is at most $\tilde{O}(k + \log(1/\delta))$ with probability at least $1-\delta$. (We do not consider an amortized complexity analysis.)

Thus, our algorithms have a highly concentrated runtime. This is important: If they do not terminate in time, then this may result in a failure of differential privacy. There is also the potential for timing attacks.\footnote{We remark that rejection sampling algorithms are not especially vulnerable to timing attacks. How many times the loop runs reveals information about the rejected candidates, but reveals nothing about the accepted candidate.}\citetall{BalcerV17} argue that differentially private algorithms should have a deterministic running time to avoid these issues altogether. However, this is a highly restrictive model. We cannot exactly sample the discrete Gaussian (or any unbounded distribution) in this model. It is not even possible to exactly sample from $\bern(1/3)$ in this model (since, if we only have access to $\ell$ random bits, we can only generate probabilities that are a multiple of $2^{-\ell}$). 

By terminating (and outputting 0) after a pre-specified time limit, our algorithms can be made to have a deterministic runtime. However, this comes at the expense of now only satisfying approximate $(\varepsilon,\delta+\delta')$-differential privacy or $\delta'$-approximate $\frac12\varepsilon^2$-concentrated differential privacy \cite{BunS16}, where $\delta'$ is the probability of reaching the time limit. Since the running time is roughly subexponential, this failure probability $\delta'$ can be made astronomically small with no cost in accuracy and very little cost in runtime (i.e., only milliseconds overall). Realistically, a far greater concern than this failure probability is that the source of random bits is not perfectly uniform \cite{GarfinkelL20}.

\paragraph{Practical remark.}
We have implemented the algorithms from Algorithms~\ref{fig:bernexp}, \ref{fig:discr:laplace}, and \ref{fig:dgauss2} in \texttt{Python} (using the \texttt{fractions.Fraction} class for exact rational arithmetic and using \texttt{random.SystemRandom()} to obtain high-quality randomnesss). %
Overall, on a standard personal computer, our basic (non-optimized) implementation is able to produce over 1000 samples per second even for $\sigma^2=10^{100}$. The source code is available online \cite{DGaussGithub}. %

\section*{Acknowledgments}
We thank Shahab Asoodeh, Damien Desfontaines, Peter Kairouz, and Ananda Theertha Suresh for making us aware of several related works.

\printbibliography

\appendix

\end{document}